\newcommand{\dfnc}[3]{#1:#2\rightarrow #3}
\newcommand{\DEF}[1]{{\em #1\/}}
\newcommand{\dset}[2]{\left\{#1 \:|\: #2\right\}}
\newcommand{\lset}[2]{\left\{#1, \ldots, #2\right\}}
\newcommand{\personaltodo}[4][noinline]{\todo[#1,color=#2]{#3: #4}}
\newcommand{\alex}[2][noinline]{\personaltodo[#1]{teal!30}{Al}{#2}}
\title{Graph drawing applications\\ in combinatorial theory of maturity models}
\author[fnm,imfm]{Špela Kajzer}
\author[vienna]{Alexander Dobler}
\author[fnm,kranj]{Janja Jerebic}
\author[vienna]{Martin Nöllenburg}
\author[graz]{Joachim~Orthaber}
\author[fnm,imfm]{Drago Bokal}
\affiliation[fnm]{organization={Faculty of Natural Sciences and Mathematics, University of Maribor}}
\affiliation[imfm]{organization={Institute of Mathematics, Physics, and Mechanics, Ljubljana, Slovenia}}
\affiliation[vienna]{organization={Algorithms and Complexity Group, TU Wien}}
\affiliation[kranj]{organization={Faculty of Organizational Sciences, University of Maribor}}
\affiliation[graz]{organization={Institute of Software Technology, Graz University of Technology, Austria}}
\newenvironment{wlist}
{\vspace{-10pt}
\begin{list}{}
{\setlength{\labelwidth}{15mm}
\setlength{\partopsep}{0pt}
\setlength{\parskip}{0pt}
\setlength{\topsep}{0pt}
\setlength{\itemsep}{0pt}
\setlength{\parsep}{0pt}
\setlength{\labelsep}{10pt}
\setlength{\leftmargin}{15mm}}
\item[]}
{\end{list}
\vspace{2pt}
\smallskip}
\newcommand{\probb}[6]{%
  \needspace{3\baselineskip}
  \begin{quote}
    \begin{labeling}{#6}%
    \item[#1]
    \item[\emph{#2}]#3
    \item[\emph{#4}]#5
    \end{labeling}%
  \end{quote}%
}
\newcommand{\probdef}[3]{\probb{#1}{Instance:}{#2}{Question:}{#3}{as}}
\newcommand{\panelcrossmin}{\ensuremath{\textsc{Panel Crossing Minimization}}}
\newcommand{\cC}{\ensuremath{\mathcal{C}}}
\newenvironment{proof*}[1]
  {%
   \begin{proof}}
  {\end{proof}}
\theoremstyle{plain}
\newtheorem{theorem}{Theorem}[section]
\newtheorem{lemma}[theorem]{Lemma}
\newtheorem{corollary}[theorem]{Corollary}
\newtheorem{proposition}[theorem]{Proposition}
\newtheorem{open}[theorem]{Open problem}
\theoremstyle{definition}
\newtheorem{definition}[theorem]{Definition}
\begin{document}

\begin{abstract}
In this paper, we introduce tiled graphs as models 
of learning and maturing processes. 
We show how tiled graphs can combine graphs of 
learning spaces or antimatroids (partial hypercubes) and maturity models 
(total orders) to yield models of learning processes. 
For the visualization of these processes it is a natural approach to aim for certain optimal drawings.
We show for most of the more detailed models that the drawing problems resulting from them are \mbox{\NP-}complete.
The terse model of a maturing process that ignores 
the details of learning, however, results in 
a polynomially solvable graph drawing problem. 
In addition, this model provides insight into 
the process by ordering the subjects at each 
test of their maturity. We investigate 
extremal and random instances of this problem, 
and provide exact results and bounds on 
their optimal crossing number. 

Graph-theoretic models offer two approaches to the design of optimal maturity models 
given observed data: (1) minimizing intra-subject inconsistencies, which manifest as regressions of subjects, 
is modeled as the well-known feedback arc set problem.
We study the alternative of (2) finding a maturity model 
by minimizing the inter-subject inconsistencies, which manifest as crossings in the respective drawing. 
We show this to be \NP-complete.
\end{abstract}

\begin{keyword}Maturity models\sep Learning space\sep Crossing minimization\sep Tile crossing number
\end{keyword}

\maketitle

\section{Introduction}


Maturity models have been used for decades to track progress over time made by some entities, called subjects, with respect to some linearly ordered set of stages or, hereafter, categories.
The subjects can, for example, be technologies \cite{H2020, NASA}, products \cite{SINNWELL2019557}, organizations \cite{crawford2021project, SPICE}, or people~\cite{nonaka1994dynamic}.
For instance, NASA defined the technology readiness levels (TRLs)~\cite{NASA}, which were later adapted by the European Commission for Horizon 2020 and Horizon Europe projects~\cite{H2020}.
This 9-level scale shows the progression in the development of new technologies, starting at the first level with the observation of basic principles, and ending at the ninth level with the technology being successfully applied in its operational environment.
The TRL scale thus presents a maturity model for knowledge that any researcher in applied sciences will at least implicitly come in contact with. 

The above are just 
some
examples, but in general maturity models have been applied to a variety of scientific and practical contexts~\cite{Gottschalk, CMM,  SPICE, SINNWELL2019557}. 
They are also called stages-of-growth models, stage models, or stage theories~\cite{Prananto}.
The models are based on the assumption that the growth patterns of the observed subjects are predictable, that is, the models describe an assumed, desired, or expected stage-by-stage growth of those subjects~\cite{Gottschalk,poppelbuss2011makes}. 
With the growing interest in maturity models, they attracted both scientific attention and criticism~\cite{Bach,Mettler} asking (i) for an improvement in the understanding of maturity models in general and (ii) for the creation of an optimal maturity model when given specific observed data.
We address both these challenges in our contribution.
Our key observation is that longitudinal studies of the mentioned growth patterns produce ordinal panel data \cite{JerebicKajzerBokalOpda}.
This allows us to address challenge (i) by proposing a more detailed model that formalizes maturity models as rankings of knowledge states in a learning space (cf.\ \cref{sec:graphmodels}).
Then we address challenge (ii) by optimizing these rankings, as presented in \cref{sec:optimize}.

Our formalization
of maturity models and the learning process behind them builds on 
learning spaces~\cite{falmagne2010learning}, 
also known as antimatroids~\cite{doignon2014learning, eppstein2007media}.  
These are exhaustively studied combinatorial structures, 
whose graphs are partial hypercubes~\cite{eppstein2007media}.
We propose several models of subject progress in
learning. The detailed models result in $k$-tiled graphs 
\cite{vegi2021counting, vegi2023counting} known 
from understanding crossing critical graphs
\cite{bokal2015degree,bokal2021properties,pinontoan2003crossing}.

In later sections, we show that for several relevant models it is in general \NP-complete to obtain their optimal (tile) drawings. 
This motivates the investigation of simplest relevant instances. These reduce the observability of learning progress to the stages of a maturity model at prescribed timestamps or tests and ignore the detailed learning progress between the two tests.
We show that for a given maturity model, the optimal ordinal panel data drawing of its longitudinal study can be obtained in polynomial time.
This motivates introducing the notion of the \emph{panel crossing number} as the minimum number of crossings required to consistently represent trajectories of subjects through a series of tests that assign each of them an ordinal variable (category).
While the ordering of subjects is partially prescribed by the ordinal variable, the intra-category ordering of subjects (i.e., the ordering of the subjects that are assigned the same category in a given test) is not prescribed and defines a degree of freedom allowing for crossing minimization. 
The obtained minimum number of crossings is the panel crossing number of an ordinal panel data instance.
A drawing realizing this crossing number exhibits most consistent, least turbulent progress of subjects through the sequence of tests.
This observation demonstrates the potential for applications of graph drawing techniques in (discrete) data analysis.

The crossing minimization problem that corresponds to optimizing 
the inter-subject consistency in maturity models is closely related to 
variants of crossing minimization in layered graph drawings~\cite{hn-hda-13,stt-mvuhss-81}. However, 
unlike general layered or hierarchical graphs, the simplest representations
of maturity models that ignore the inter-test learning process (i.e., the details of a learning process of each subject in a given test) feature 
a set of subjects that form $x$-monotone, potentially crossing 
paths over time. So the resulting graphs are basically collections of 
intertwined paths. Without the ordered categories, this is very similar 
to storyline layouts~\cite{GronemannJLM16,KostitsynaNP0S15,tm-dcosv-12} 
or metro line crossing minimization problems~\cite{ArgyriouBKS10,bnuw-miecw-07,
fp-mcmhatc-13,n-iamlc-10}. Yet, the fact that each subject belongs 
to exactly one category at each time point and categories are 
linearly ordered puts much more constraints on the feasible permutations of subjects in each step. Hence, we investigate the specific crossing 
properties of such constrained maturity model instances of ordered 
panel data and optimize the corresponding maturity model 
visualizations.

In addition to solving the panel crossing minimization problem in polynomial time, we analyze extremal instances where the panel crossing number is maximal.
We continue with the panel crossing number of random instances, and we solve the question of finding an optimal maturity model, i.e., a model that for given data allows for a representation with the smallest panel crossing number.
We show that this problem is \NP-complete, but integer linear programming (ILP) models such as the one presented in \cref{sec:optimize} can solve practical instances to optimality. 
In fact, it may even be \NP-complete just asking  whether the optimal maturity model results in a planar instance with no crossings among subjects.

The remainder of the paper is structured as follows.
We start by describing the graph models of the learning and testing processes and how they yield the tile graphs and we define mathematical prerequisites along the way (\cref{sec:graphmodels}).
This already contributes to the first aforementioned challenge of improving the understanding of the maturity models.
We proceed by exhibiting \NP-completeness of crossing minimization problems 
of drawings respecting maturity models related to all but the simplest of the defined models (\cref{sc:NPcompleteness}).
Then we give a polynomial time algorithm that produces optimal drawings of said graph model of the learning and testing processes (\cref{sc:polynomial}).
Next, we study extremal (\cref{sec:extremal}) and random instances (\cref{sec:expected}).
As a final result of the paper, we establish \NP-completeness 
for the problem of producing an optimal maturity model minimizing inconsistencies between maturing subjects for some given data 
(\cref{sec:optimize}).
Thus, we contribute to the second aforementioned challenge of producing optimal maturity models for the observed data.

\section{Learning graphs and their applications}\label{sec:graphmodels}

In this section, we address challenge (i) of getting a better understanding of maturity models. In particular, using tiles introduced by Pinontoan and Richter~\cite{pinontoan2003crossing}, we link maturity models to another well-studied combinatorial structure, so-called learning spaces~\cite{falmagne2011learning}. In the following we reproduce the core definitions.

A \textit{knowledge structure} is a pair $(Q,{\cal K})$, in which $Q$ is a nonempty set and $\cal K$ is a family of subsets of $Q$ containing at least $Q$ and the empty set $\emptyset$. The set $Q$ is called the \textit{domain} of the knowledge structure, its elements are called \textit{knowledge items}, and the elements of $\cal K$ are called \textit{knowledge states}. Since $Q$ is always the largest set in $\cal K$, it can also be omitted when discussing a knowledge structure.

A knowledge structure $(Q,{\cal K})$ is a \DEF{learning space} if it satisfies two axioms, the axiom of \textit{learning smoothness}, intuitively stating that if a state $K$ is a subset of a state~$L$, then a learner can reach $L$ from $K$ by learning one item at a time, and the axiom of \textit{learning consistency}, intuitively stating that knowing more does not prevent the learner from learning something new. 

Formally, learning smoothness stipulates that for every pair of states $K,L\in{\cal K}$ with $K\subsetneq L$, there exists a finite chain of states
\begin{equation*}
K=K_0\subsetneq K_1\subsetneq\ldots\subsetneq K_p=L,
\end{equation*}
such that for all $1 \le i \le p$ we have $|K_i\setminus K_{i-1}|=1$, implying that $|L\setminus K|=p$.
Learning consistency stipulates that if $K\subseteq L$ are two states and $q$ is a knowledge item such that $K\cup\{q\} \in {\cal K}$ holds, then also $L\cup \{q\}\in {\cal K}$ holds.

\cref{fig:ex-learningSpaceGraph} shows an example of a graph drawing of a learning space. Vertices of the graph represent states, while edges represent knowledge items.
\begin{figure}[htb]
    \centering
    \scalebox{.7}{\begin{tikzpicture}
\begin{scope}
\node [circle, fill=black, inner sep=1pt] (v2) at (0,0) {};
\node [circle, fill=black, inner sep=1pt] (v3) at (2,0) {};
\node [circle, fill=black, inner sep=1pt] (v5) at (0,2) {};
\node [circle, fill=black, inner sep=1pt](v4) at (2,2) {};
\node [circle, fill=black, inner sep=1pt](v1) at (0,-2) {};
\node [circle, fill=black, inner sep=1pt] (v7) at (1,1) {};
\node [circle, fill=black, inner sep=1pt] (v6) at (3,1) {};
\node [circle, fill=black, inner sep=1pt](v9) at (3,3) {};
\node [circle, fill=black, inner sep=1pt] (v8) at (1,3) {};
\draw  (v1) -- (v2) node [pos=0.50, sloped, above]{ Set theory};
\draw  (v2) -- (v3)node [pos=0.50, sloped, below]{Knowledge}node [pos=0.50, yshift=-8.5, sloped, below]{structure};
\draw  (v3)-- (v4);
\draw  (v5) edge (v4);
\draw  (v5) -- (v2)node [pos=0.50, rotate=180, sloped, above]{Medium};
\draw  (v3) edge (v6);
\draw  (v2) -- (v7)node [pos=0.6, sloped, above]{Partial}node[pos=0.50, yshift=4.5, xshift=4.5, sloped, below]{cubes};
\draw  (v7) edge (v6);l
\draw  (v8) edge (v7);
\draw  (v5) edge (v8);
\draw  (v8) edge (v9);
\draw  (v9) edge (v6);
\draw  (v4) edge (v9);
\node [circle, fill=black, inner sep=1pt] (v12) at (4,0) {};
\node [circle, fill=black, inner sep=1pt] (v13) at (5,1) {};
\node [circle, fill=black, inner sep=1pt] (v11) at (5,3) {};
\node [circle, fill=black, inner sep=1pt] (v10) at (4,2) {};
\draw  (v4) edge (v10);
\draw  (v10) edge (v11);
\draw  (v9) edge (v11);
\draw  (v3) -- (v12)node [pos=0.50, sloped, below]{Knowledge}node [pos=0.50, yshift=-8.5, sloped, below]{space};
\draw  (v12) edge (v10);
\draw  (v12) edge (v13);
\draw  (v11) edge (v13);
\draw  (v6) edge (v13);
\node [circle, fill=black, inner sep=1pt] (v14) at (6,0) {};
\node [circle, fill=black, inner sep=1pt] (v15) at (7,1) {};
\node [circle, fill=black, inner sep=1pt] (v16) at (6,2) {};
\node [circle, fill=black, inner sep=1pt] (v17) at (7,3) {};
\draw  (v12) -- (v14)node [pos=0.50, sloped, below]{Learning}node [pos=0.50, yshift=-8.5, sloped, below]{space};
\draw  (v14) edge (v15);
\draw  (v13) edge (v15);
\draw  (v16) edge (v17);
\draw  (v11) edge (v17);
\draw  (v10) edge (v16);
\draw  (v16) edge (v14);
\draw  (v17) edge (v15);
\node [circle, fill=black, inner sep=1pt] (v19) at (7,5) {};
\node [circle, fill=black, inner sep=1pt] (v20) at (9,5) {};
\node [circle, fill=black, inner sep=1pt] (v18) at (9,3) {};
\node [circle, fill=black, inner sep=1pt] (v22) at (11,5) {};
\node [circle, fill=black, inner sep=1pt] (v21) at (11,3) {};
\node [circle, fill=black, inner sep=1pt] (v23) at (13,3) {};
\node [circle, fill=black, inner sep=1pt] (v24) at (13,5) {};
\draw  (v17) -- (v18)node [pos=0.50, sloped, below]{\small Graph rep.}node [pos=0.50, yshift=-8.5, sloped, below]{\small of a medium};
\draw  (v17) -- (v19)node [pos=0.50, yshift=8, sloped, above]{\small Learning sp.}node [pos=0.50, sloped, above]{\small of a medium};
\draw  (v19) edge (v20);
\draw  (v20) edge (v18);
\draw  (v18) -- (v21)node [pos=0.50, sloped, below]{\small Graph of}node [pos=0.50, yshift=-8.5, sloped, below]{\small a medium};
\draw  (v21) edge (v22);
\draw  (v22) edge (v20);
\draw  (v21) -- (v23)node [pos=0.50, sloped, below]{\footnotesize Connection btw.}node [pos=0.50, yshift=-8.5, sloped, below]{\footnotesize graphs of med.}node [pos=0.50, yshift=-17, sloped, below]{\footnotesize and partial cubes};
\draw  (v23) edge (v24);
\draw  (v24) edge (v22);
\end{scope}
\end{tikzpicture}}
    \caption{Example of a graph drawing of a learning space.}
    \label{fig:ex-learningSpaceGraph}
\end{figure}
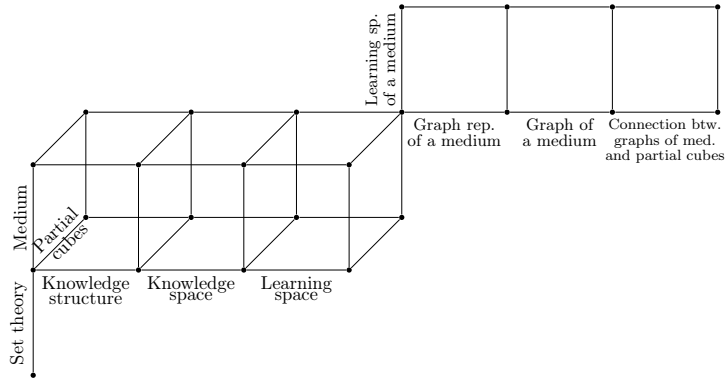

We can observe that learning smoothness, by applying it to $K=\emptyset$ and \mbox{$L=Q$}, implies finiteness of the learning space $(Q,{\cal K})$. 
It has also been mentioned in the literature that the definition of a learning space is equivalent to the definition of an 
antimatroid~\cite{korte2012greedoids}. 

For a learning space $(Q,{\cal K})$, there is a natural definition of the \textit{graph of the learning space} $G(Q,{\cal K})$: Its vertices are all the knowledge states in $\cal K$ and two knowledge states $K \subsetneq L$ are adjacent if and only if there exists a knowledge item $q \in Q$ such that $K\cup\{q\} = L$. It is clear that this graph is a \DEF{partial hypercube}, that is, a graph whose vertices can be labeled by $\{0,1\}$-bitstrings such that adjacent vertices differ only in a single bit.

Having reproduced the mathematical model of learning a set of knowledge items, we observe that this model allows for a significant level of detail in representing the current knowledge state of a learner. For practical purposes, the level of detail is often significantly simplified, such as in educational systems or in career rankings. There, a knowledge state is usually simplified into an ordinal variable such as a grade, school level/year, or career badge. For this reason, we continue by describing an \textit{ordinal panel data instance} as the underlying model of the data collected in such a (simplified) model of observing learning progress.

A panel, or longitudinal, data set follows a given group of subjects over time. It thus provides, for each subject in the group, multiple observations of the same set of variables at different timestamps \cite{hsiao2014analysis}. In the case of maturity models we obtain, for a given group of subjects, observations of one ordinal variable: the maturity level. Combined with in-depth assessments that result in this maturity level, we periodically observe the competencies, skills, or other knowledge items of subjects -- their knowledge state -- and use this information to identify the maturity level.

To define panel data, we make use of a \emph{permutation} or \emph{ordering} $\pi$ of a set~$X$, which is a (total) linear order on $X$. We write $x\prec_{\pi}x'$ if $x,x'\in X$ and $x$ comes before $x'$ according to $\pi$. Let further $\Pi(X)$ be the set of all permutations of the set $X$.
Further, for $Y\subseteq X$ we define the induced permutation $\pi[Y]$ of the set $X$ such that for $y,y'\in Y$, $y\prec_{\pi}y'$ if and only if $y\prec_{\pi[Y]}y'$.

\begin{definition}
A \emph{panel data (PD) instance} is a triple 
$(S,\mathcal{C},T)$, 
where $S$, $\mathcal{C}$, and $T$, respectively, 
are a set of subjects, categories, and 
linearly ordered time\-stamps, respectively. 
The timestamps $T=\{t_0,t_1,\dots,t_m\}$, $m > 0$, are ordered
increasingly by indices.
For each $i=1,\dots,m$, let $t_i:S\rightarrow \cal C$ 
be a test function that assigns a category 
to each subject at timestamp $t_i$.

An \DEF{ordinal panel data (OPD) instance} 
$(S,\mathcal{C},T,\sigma)$ is a PD instance 
$(S,\mathcal{C},T)$ with an 
additional linear ordering $\sigma$ of the 
categories $\mathcal{C}$.

A \DEF{combinatorial layout} of an OPD instance 
$(S,\mathcal{C},T,\sigma)$ is a sequence of 
permutations $\pi_1,\pi_2,\dots,\pi_m\in \Pi(S)$ such that 
    \[\forall t_i\in T,\:\forall s,s'\in S:
    t_i(s)\prec_{\sigma}t_i(s') \implies s\prec_{\pi_{i}}s'.\]
    Informally, $\pi_i$ is an ordering of the subjects 
    such that if the category $t_i(s)$ is before 
    $t_i(s')$ in $\sigma$, then $s$ is 
    before $s'$ in $\pi_i$. 
\end{definition}

We slightly abuse notation and use the same notation 
$t\in T$ for both the timestamp and for the function
assigning the categories to the subjects. To emphasize 
the difference in the discussion, we refer to the function
as a test.

Unless stated otherwise, 
$n:=|S|$ denotes the number of subjects,
$m:=|T|-1$ the number of intervals between timestamps, 
and $k:=|\mathcal{C}|$ the number of categories. 
Note that we are predominantly interested in the 
behavior of subjects between timestamps.
For this reason, we label the first timestamp $t_0$,
as then the number of intervals is equal to the last index 
of a timestamp in the sequence, and the interval between
timestamps gets assigned the index of the later timestamp.

The context is now established to link the maturity models
with learning spaces. The levels of maturity, defined in 
specific maturity models, are defined by competencies
required by the subjects at a certain maturity. Those
competencies may depend on additional knowledge items, 
and we can define the set $Q$ to contain 
all the competencies
and all the other skills required for a subject to be in any
of the maturity levels of the maturity model. 
As the learning
of the competencies can progress one 
knowledge item at a time,
and assuming the knowledge items are consistent, there is a
learning space $(Q,{\cal K})$ that models the learning
process of the maturity model in greater detail than the 
maturity model itself. Let
$(S,\mathcal{C},T,\sigma)$ 
be an ordinal panel data instance,
such that the subjects in $S$ learn the knowledge items
$Q$ in the learning space $(Q,\mathcal{K})$, whose maturity
levels are the categories in $\mathcal{C}$. Each test in $T$
then checks knowledge of the items in $Q$ and assigns a state from $\mathcal{K}$ to a
subject in $S$. This information
is more detailed than assigning a category, and there is 
a \DEF{ranking function} 
$\dfnc{\alpha}{\mathcal{K}}{\mathcal{C}}$
that assigns to each knowledge state in $\mathcal{K}$
a maturity level, i.e., a category in $\mathcal{C}$.
To simplify notation, we will not distinguish between
tests that assign subjects the categories or tests that
assign knowledge states. If needed, we will implicitly 
assume existence of a ranking function, linking the two 
interpretations of tests.

We are now ready for defining a visual representation of the 
data collected in the model. The central concept
in this representation are tiles, a concept introduced
by Pinontoan and Richter while studying 
crossing-critical graphs~\cite{pinontoan2003crossing}.

\begin{definition}[\cite{pinontoan2003crossing}]
Let $G$ be a graph and let $L = (\lambda_0, \lambda_1,\ldots, 
    \lambda_l)$ (called \emph{left wall}) and $R = (\rho_0, \rho_1, \ldots, \rho_r)$ 
    (called \emph{right wall}) be two sequences of distinct vertices of $G$, 
    such that no vertex of $G$ appears in both. The triple $(G, L, R)$ 
    is called a \DEF{tile}. A vertex of $G$ that belongs to neither wall
    is called an \DEF{internal vertex}.
    
 A \DEF{tile drawing} of a tile $T=(G, L, R)$ is a drawing of $G$ 
    in the unit square $[0, 1] \times [0, 1]$ that meets the boundary 
    of the square precisely in the vertices of $L \cup R$ so that 
    the vertices of $L$ have $x$-coordinate $0$, with the 
    $y$-coordinates of $\lambda_0,\lambda_1, \ldots, \lambda_l$ 
    strictly decreasing, and the vertices of $R$ have $x$-coordinate $1$, 
    with the $y$-coordinates of $\rho_0, \rho_1, \ldots, \rho_r$ strictly decreasing.
    \begin{figure}[H]
        \centering
        \begin{tikzpicture}

\node [circle, inner sep=0.5pt, fill=black, label=below:{\tiny $(0,0)$}] (v1) at (0,0) {};
\node [circle, inner sep=0.5pt, fill=black, label=below:{\tiny $(1,0)$}] (v2) at (4,0) {};
\node [circle, inner sep=0.5pt, fill=black, label=above:{\tiny $(0,1)$}] (v4) at (0,4) {};
\node [circle, inner sep=0.5pt, fill=black, label=above:{\tiny $(1,1)$}] (v3) at (4,4) {};
\draw [color=black] (v1) edge (v2);
\draw [color=black] (v3) edge (v2);
\draw [color=black] (v4) edge (v3);
\draw [color=black] (v4) edge (v1);
\node [circle, inner sep=1pt, fill=black] (v9) at (0,0.4) {};
\node [circle, inner sep=1pt, fill=black] (v11) at (0,1) {};
\node [circle, inner sep=1pt, fill=black] (v7) at (0,3) {};
\node [circle, inner sep=1pt, fill=black] (v5) at (0,3.6) {};
\node [circle, inner sep=1pt, fill=black] (v14) at (4,0.4) {};
\node [circle, inner sep=1pt, fill=black] (v23) at (4,1.2) {};
\node [circle, inner sep=1pt, fill=black] (v13) at (3.8,0.8) {};
\node [circle, inner sep=1pt, fill=black] (v12) at (3.6,0.6) {};
\node [circle, inner sep=1pt, fill=black] (v15) at (3.4,1.3) {};
\node [circle, inner sep=1pt, fill=black] (v21) at (4,2.5) {};
\node [circle, inner sep=1pt, fill=black] (v18) at (4,3.3) {};
\node [circle, inner sep=1pt, fill=black] (v19) at (3.4,3.1) {};
\node [circle, inner sep=1pt, fill=black] (v20) at (3.7,2.8) {};
\node [circle, inner sep=1pt, fill=black] (v8) at (0.6,3.1) {};
\node [circle, inner sep=1pt, fill=black] (v6) at (0.6,3.4) {};
\node [circle, inner sep=1pt, fill=black] (v25) at (0.9,2.6) {};
\node [circle, inner sep=1pt, fill=black] (v10) at (0.6,0.6) {};
\draw [thick,color=black] (v5) edge (v6);
\draw [thick,color=black] (v6) edge (v7);
\draw [thick,color=black] (v7) edge (v8);
\draw [thick,color=black] (v9) edge (v10);
\draw [thick,color=black] (v11) edge (v10);
\draw [thick,color=black] (v12) edge (v13);
\draw [thick,color=black] (v12) edge (v14);
\node [circle, inner sep=1pt, fill=black] (v26) at (2.3,2.3) {};
\draw [color=black] plot[smooth, tension=.7] coordinates {(v14)};
\node [circle, inner sep=1pt, fill=black] (v16) at (3.2,2) {};
\node [circle, inner sep=1pt, fill=black] (v17) at (2.7,2.8) {};
\node [circle, inner sep=1pt, fill=black] (v22) at (1.6,3.1) {};

\draw [thick,color=black] (v18) edge (v19);
\draw [thick,color=black] (v19) edge (v20);
\draw [thick,color=black] (v20) edge (v21);
\draw [thick,color=black] (v18) edge (v20);
\draw [thick,color=black] (v20) edge (v17);

\draw [color=black] plot[smooth, tension=.7] coordinates {(v8) (v22) (v17) (v16) (v15) (v13)};

\draw [thick,color=black] (v23) edge (v15);
\node [circle, inner sep=1pt, fill=black] (v24) at (2.1,1.5) {};
\draw [thick,color=black] plot[smooth, tension=.7] coordinates {(v15) (v24) (v25) (v7)};
\draw [thick, color=black] (v11) edge (v26);
\draw [thick,color=black] (v17) edge (v26);
\draw [thick,color=black] plot[smooth, tension=.7] coordinates {(v10) (2.1,1.5) (v26)};
\node [circle, inner sep=1pt, fill=black] at (2.1818,3) {};
\end{tikzpicture}
        \caption{Example of a tile drawing.}
        \label{fig:eg-tile}
    \end{figure}
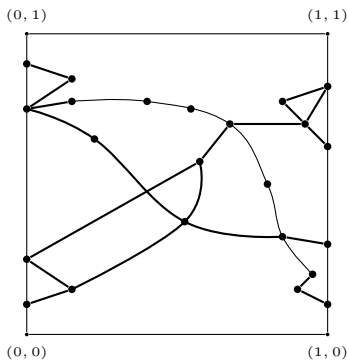
    
The \DEF{tile crossing number} $tcr(T)$ of a tile $T$ is 
    the minimum number of edge crossings over all tile drawings of $T$.
    
    A tile $(G,L,R)$ is \DEF{compatible} with a tile $(G',L',R')$ 
    if $|R|=|L'|$. A sequence of tiles $(T_0,T_1,\ldots,T_m)$ 
    is \emph{compatible} if $T_i$ is compatible with $T_{i+1}$ 
    for $i=0,1,\ldots,m-1$.
    
    The \DEF{join} of two compatible tiles $(G,L,R)$ and $(G',L',R')$ is defined as $(G,L,R)\otimes (G',L',R')=(G\otimes G',L,R')$, where $G\otimes G'$ is the graph obtained from the disjoint union of $G$ and $G'$ by identifying $\rho_i$ with $\lambda'_i$ for $i=0,1,\ldots,|R|$.
Since this operation is associative, we can define the join of a compatible sequence of tiles $(T_0,T_1,\ldots,T_m)$ as $\otimes(T_0,T_1,\ldots,T_m) =T_0\otimes T_1\otimes\cdots\otimes T_m$ which is a tile $(G_0\otimes G_1\otimes\cdots\otimes G_n,L_0,R_m)$. 
\begin{figure}[H]
    \centering
    \input{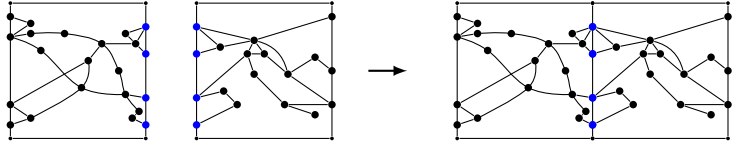}
    \caption{Two non-trivial compatible tiles and their join.}
    \label{fig:join}
\end{figure}

Note that in a join of a compatible sequence of tiles, 
the intermediate walls may be of relevance once the
tile is joined. In such a case, we keep track of those 
walls by emphasizing $(G,L,R)\otimes (G',L',R')=
(G\otimes G',L,L',R')$ and $\otimes(T_0,T_1,\ldots,T_m) = 
(G_0\otimes G_1\otimes\cdots\otimes G_n,L_0,L_1,\ldots,L_m,R_m)$.       
\end{definition} This is an extension of a commonly used notation.
As the two walls are identified in the join,
an equivalent notation in the previous definition could be
$(G\otimes G',L,L',R')$ and $(G_0\otimes G_1\otimes
\cdots\otimes G_n,L_0,R_1,\ldots,R_{m-1},R_m)$ or any combination of the two.

We continue with an illustrative example of the introduced concepts.
First, Figure \ref{fig:pctA} presents a graph of a learning
space of the learning space theory, as reproduced from 
\cite{BokalJerebicIntro}. The edges correspond to 
mastering new concepts introduced in learning space theory,
and the vertices correspond to knowlege states. A ranking function of the maturity model
assigned to this learning space is represented by colors and shapes of the 
graph's vertices, with maturity increasing in the sequence of
green squares, blue circles, yellow diamonds, red dots.

\begin{figure}[H]
    \centering
\usetikzlibrary{matrix}
\usetikzlibrary{shadows}
\usetikzlibrary{shapes.geometric}
\usetikzlibrary{calc}
\begin{tikzpicture}
\begin{scope}[scale=0.25]
\coordinate(a0) at (0,-4) {}{};
\coordinate (a1) at (0,0) {} {};
\coordinate (a2) at (4,0) {}{};
\coordinate (a3) at (4,4) {}{};
\coordinate (a4) at (0,4) {}{};
\coordinate (a5) at ($({sqrt(2)}, {sqrt(2)})$) {}{};
\coordinate (a6) at ($({4+sqrt(2)}, {sqrt(2)})$) {}{};
\coordinate (a7) at ($({sqrt(2)+4}, {sqrt(2)+4})$) {}{};
\coordinate (a8) at ($({sqrt(2)}, {sqrt(2)+4})$) {}{};
\coordinate (a9) at (8,0) {} {};
\coordinate (a10) at (12,0) {}{};
\coordinate (a11) at (8,4) {}{};
\coordinate (a12) at (12,4) {}{};
\coordinate (a13) at ($({sqrt(2)+8}, {sqrt(2)})$) {}{};
\coordinate (a14) at ($({12+sqrt(2)}, {sqrt(2)})$) {}{};
\coordinate (a15) at ($({sqrt(2)+8}, {sqrt(2)+4})$) {}{};
\coordinate (a16) at ($({sqrt(2)+12}, {sqrt(2)+4})$) {}{};
\draw (a0)-- (a1);
\draw (a1) -- (a2) -- (a3) -- (a4)--cycle;
\draw (a9) -- (a10) -- (a12) -- (a11)--cycle;
\draw (a13) -- (a14) -- (a16) -- (a15)--cycle;
\draw (a5)--(a8);
\draw (a7)--(a6);
\draw (a6)--(a5);
\draw (a1) -- (a5);
\draw (a2) -- (a6);
\draw (a3) -- (a7);
\draw (a4) -- (a8);

\draw(a3)--(a11);
\draw (a2)--(a9);
\draw (a6)--(a13);
\draw (a13)--(a9);
\draw (a14)--(a10);
\draw (a8)--(a15);
\draw (a15)--(a11);
\draw (a16)--(a12);
\coordinate (b1) at ($({sqrt(2)+12}, {sqrt(2)+8})$) {}{};
\coordinate (b2) at ($({sqrt(2)+12}, {sqrt(2)+4})$) {}{};
\coordinate (b3) at ($({sqrt(2)+16}, {sqrt(2)+8})$) {}{};
\coordinate (b4) at ($({sqrt(2)+16}, {sqrt(2)+4})$) {}{};
\coordinate (b5) at ($({sqrt(2)+20}, {sqrt(2)+8})$) {}{};
\coordinate (b6) at ($({sqrt(2)+20}, {sqrt(2)+4})$) {}{};
\coordinate (b7) at ($({sqrt(2)+24}, {sqrt(2)+8})$) {}{};
\coordinate (b8) at ($({sqrt(2)+24}, {sqrt(2)+4})$) {}{};

\draw (b1)--(b2)--(b4)--(b3)--cycle;
\draw (b5)--(b6)--(b8)--(b7)--cycle;
\draw(b3)--(b5);
\draw(b4)--(b6);
\node [rectangle, fill=black!40!green,  inner sep=2pt] at (a0) {};
\node [rectangle, fill=black!40!green, inner sep=2pt] at (a1) {};
\node [circle,  draw, very thick, color=blue, inner sep=1.5pt] at (a2) {};
\node [circle,  draw, very thick, color=blue, inner sep=1.5pt] at (a3) {};
\node [circle,  draw, very thick, color=blue, inner sep=1.5pt] at (a4) {};
\node [circle,  draw, very thick, color=blue, inner sep=1.5pt] at (a5) {};
\node [circle,  draw, very thick, color=blue, inner sep=1.5pt] at (a6) {};
\node [circle,  draw, very thick, color=blue, inner sep=1.5pt] at (a7) {};
\node [circle,  draw, very thick, color=blue, inner sep=1.5pt] at (a8) {};
\node [circle,  draw, very thick, color=blue, inner sep=1.5pt] at (a9) {};
\node [diamond, fill=black!20!yellow, inner sep=2pt] at (a10) {};
\node [circle,  draw, very thick, color=blue, inner sep=1.5pt] at (a11) {};
\node [diamond, fill=black!20!yellow, inner sep=2pt] at (a12) {};
\node [circle,  draw, very thick, color=blue, inner sep=1.5pt] at (a13) {};
\node [diamond, fill=black!20!yellow, inner sep=2pt] at (a14) {};
\node [circle,  draw, very thick, color=blue, inner sep=1.5pt] at (a15) {};
\node [diamond, fill=black!20!yellow, inner sep=2pt] at (a16) {};
\node [circle,  fill=black!20!red, inner sep=1.5pt] at (b1) {};

\node [circle, fill=black!20!red, inner sep=1.5pt] at (b3) {};
\node [circle, fill=black!20!red, inner sep=1.5pt] at (b4) {};
\node [circle, fill=black!20!red, inner sep=1.5pt] at (b5) {};
\node [circle, fill=black!20!red, inner sep=1.5pt] at (b6) {};
\node [circle, fill=black!20!red, inner sep=1.5pt] at (b7) {};
\node [circle, fill=black!20!red, inner sep=1.5pt] at (b8) {};
\end{scope}

\end{tikzpicture}
    \caption{Graph drawing of a learning space together with a ranking function assigning four maturity levels to knowledge states.}
    \label{fig:pctA}
\end{figure}
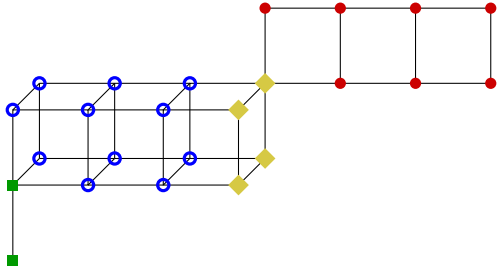

In a detailed model of a learning process, a test $t\in T$ 
assigns a subject $s\in S$ a knowledge state $k\in\mathcal{K}$ such that $t(s)=k$
and then the ranking function $\dfnc{\alpha}{\mathcal{K}}{\mathcal{C}}$
assigns a category $C\in\mathcal{C}$ to the subject $s$ at test $t$, such that $\alpha(t(s))=\alpha(k)=C$.
After the test, the subject proceeds its walk in the learning space,
presumably starting from the same knowledge state the test has revealed,
and reaching the state the next test will reveal. 
In this context, the progress of the subjects
in learning is represented by a walk in the graph of the learning space.
The first model of this repeated walk 
between two tests we propose
is a tile whose wall vertices represent categories and 
whose internal vertices represent the whole learning space. 
The path of the subject starts at a left wall vertex -- 
representing the category assigned by 
the test at the start of the interval, 
follows the vertices of the learning space, 
and concludes in a right wall vertex -- 
representing the category assigned by the test following the interval.
A wall vertex is adjacent to an internal vertex if
the ranking function 
maps that internal vertex
to the category corresponding to the wall vertex.
A join of $m$ such tiles constitutes the \emph{total learning model}: 
it models all possible paths the subjects with certain 
test scores could take to traverse the learning space
as progressing according to categories of the maturity model.
The graph has the property that each tile is the same,
as it allows all subjects to reside 
in any category at any test, and the number of wall vertices is equal 
the number of categories. 
The following formal definition of a 
\DEF{total learning tile} is illustrated in Figure \ref{fig:pctB}.
\begin{definition}
    Let $(S, \mathcal{C}, T, \sigma)$ be an OPD instance, 
    $(Q, \cal{K})$ a given learning space and let 
    $G(Q, \cal{K})$ be the graph of $(Q, \cal{K})$. 
    Let $\alpha : \cal{K} \rightarrow \mathcal{C}$ be a ranking 
    function that assigns a category to each knowledge state 
    $\cal{K}$. For each pair of consecutive tests $t', t \in T$, 
    we define a graph $G_L(t', t)$ with vertices 
    
    $$V(G_L(t',t)):=\dset{(t',C)}{C \in \mathcal{C}} \cup 
    \dset{(t, C)}{C \in \mathcal{C}} \cup \dset{(t,v)}{v \in \cal{K}}$$ 
    
    and edges 
    \begin{eqnarray*}
    E(G_L(t',t))&:=&\dset{(t',C)(t,v)}{v \in \mathcal{K}, C\in \mathcal{C}
    \wedge \alpha(v)=C}\\ 
    &\cup& \dset{(t,v)(t,C)}{v \in \mathcal{K}, C\in \mathcal{C}
    \wedge \alpha(v)=C}\\ 
    &\cup&\dset{(t, u)(t,v)}{uv \in E(G(Q,\cal{K}))}.
    \end{eqnarray*} 
    We define a tile $T_L(t', t)=(G_L(t',t), L, R)$, 
    where $L:=\dset{(t',C}{C \in \mathcal{C}}$  and 
    $R:=\dset{(t, C)}{C \in \mathcal{C}}$, 
    with ordering induced by the ordering of 
    $\mathcal{C}$, respectively. The \emph{total learning tile} 
    $T_L(t_0,\ldots,t_{m})$ of the instance 
    $(S, \mathcal{C}, T, \sigma)$ and 
    learning space $(Q, \mathcal{K})$ is obtained as 
    the join of the compatible sequence of tiles 
    $\otimes(T_L(t_0,t_1), T_L(t_1, t_2), \ldots , T_L(t_{m-1},t_{m}))$. 
\end{definition}

As such, the total learning tile is very rich. 
However, in the beginning of the maturing process, 
the subjects tend to reside in lower categories, and towards
the end of the maturing process, the subjects tend
to reside in higher categories. In order to simplify the 
graph, it is reasonable to reduce it to the data actually 
observed by the tests. There are two models for this.
The first simplified model is the \DEF{possibilistic model}, 
in which the wall vertices of the tiles
reflect the actual subjects in those categories, 
and which reduces the learning space graph in each tile 
to the subgraph that is
spanned by the (union of) 
shortest paths between the entry and exit
vertices assigned to subjects by the tests defining a tile. 
The following formal definition is illustrated in Figure \ref{fig:Possibilistic}.
\begin{figure}[H]
    \centering
    \usetikzlibrary{matrix}
\usetikzlibrary{shadows}
\usetikzlibrary{shapes.geometric}
\usetikzlibrary{calc}
\begin{tikzpicture}
\begin{scope}[scale=0.28]

\begin{scope}[rotate=14, shift={(1.4,3.5)}]
\coordinate(a0) at (0,-4) {}{};
\coordinate (a1) at (0,0) {} {};
\coordinate (a2) at (4,0) {}{};
\coordinate (a3) at (4,4) {}{};
\coordinate (a4) at (0,4) {}{};
\coordinate (a5) at ($({sqrt(2)}, {sqrt(2)})$) {}{};
\coordinate (a6) at ($({4+sqrt(2)}, {sqrt(2)})$) {}{};
\coordinate (a7) at ($({sqrt(2)+4}, {sqrt(2)+4})$) {}{};
\coordinate (a8) at ($({sqrt(2)}, {sqrt(2)+4})$) {}{};
\coordinate (a9) at (8,0) {} {};
\coordinate (a10) at (12,0) {}{};
\coordinate (a11) at (8,4) {}{};
\coordinate (a12) at (12,4) {}{};
\coordinate (a13) at ($({sqrt(2)+8}, {sqrt(2)})$) {}{};
\coordinate (a14) at ($({12+sqrt(2)}, {sqrt(2)})$) {}{};
\coordinate (a15) at ($({sqrt(2)+8}, {sqrt(2)+4})$) {}{};
\coordinate (a16) at ($({sqrt(2)+12}, {sqrt(2)+4})$) {}{};
\draw[thick] (a0)-- (a1);
\draw [thick](a1) -- (a2);
\draw [thick] (a2)--(a3);
\draw [thick] (a1)--(a4);
\draw[thick](a3) -- (a4);
\draw[thick] (a5) -- (a6);
\draw [thick](a5)--(a8);
\draw [thick] (a6)--(a7);
\draw [thick] (a7) -- (a8);
\draw [thick](a9) -- (a10);
\draw[thick](a12) -- (a11);
\draw[thick](a10)--(a12);
\draw[thick](a9)--(a11);
\draw [thick](a13) -- (a14);
\draw[thick] (a16) -- (a15);
\draw[thick](a14)--(a16);
\draw [thick] (a15)--(a13);

\draw [thick](a1) -- (a5);
\draw [thick](a2) -- (a6);
\draw [thick](a3) -- (a7);
\draw[thick] (a4) -- (a8);

\draw [thick](a3)--(a11);
\draw [thick](a2)--(a9);
\draw [thick](a6)--(a13);
\draw [thick](a13)--(a9);
\draw [thick](a14)--(a10);
\draw [thick](a7)--(a15);
\draw [thick](a15)--(a11);
\draw [thick](a16)--(a12);

\coordinate (b1) at ($({sqrt(2)+12}, {sqrt(2)+8})$) {}{};
\coordinate (b2) at ($({sqrt(2)+12}, {sqrt(2)+4})$) {}{};
\coordinate (b3) at ($({sqrt(2)+16}, {sqrt(2)+8})$) {}{};
\coordinate (b4) at ($({sqrt(2)+16}, {sqrt(2)+4})$) {}{};
\coordinate (b5) at ($({sqrt(2)+20}, {sqrt(2)+8})$) {}{};
\coordinate (b6) at ($({sqrt(2)+20}, {sqrt(2)+4})$) {}{};
\coordinate (b7) at ($({sqrt(2)+24}, {sqrt(2)+8})$) {}{};
\coordinate (b8) at ($({sqrt(2)+24}, {sqrt(2)+4})$) {}{};


\draw [thick](b1)--(b2);
\draw [thick](b4)--(b3);
\draw [thick](b2)--(b4);
\draw [thick](b1)--(b3);
\draw [thick](b5)--(b6)--(b8)--(b7)--cycle;
\draw [thick](b3)--(b5);
\draw [thick](b4)--(b6);
\node [rectangle, fill=black!40!green,  inner sep=2pt] at (a0) {};
\node [rectangle, fill=black!40!green, inner sep=2pt] at (a1) {};
\node [circle,  draw, very thick, color=blue, inner sep=1.5pt] at (a2) {};
\node [circle,  draw, very thick, color=blue, inner sep=1.5pt] at (a3) {};
\node [circle,  draw, very thick, color=blue, inner sep=1.5pt] at (a4) {};
\node [circle,  draw, very thick, color=blue, inner sep=1.5pt] at (a5) {};
\node [circle,  draw, very thick, color=blue, inner sep=1.5pt] at (a6) {};
\node [circle,  draw, very thick, color=blue, inner sep=1.5pt] at (a7) {};
\node [circle,  draw, very thick, color=blue, inner sep=1.5pt] at (a8) {};
\node [circle,  draw, very thick, color=blue, inner sep=1.5pt] at (a9) {};
\node [diamond, fill=black!20!yellow, inner sep=2pt] at (a10) {};
\node [circle,  draw, very thick, color=blue, inner sep=1.5pt] at (a11) {};
\node [diamond, fill=black!20!yellow, inner sep=2pt] at (a12) {};
\node [circle,  draw, very thick, color=blue, inner sep=1.5pt] at (a13) {};
\node [diamond, fill=black!20!yellow, inner sep=2pt] at (a14) {};
\node [circle,  draw, very thick, color=blue, inner sep=1.5pt] at (a15) {};
\node [diamond, fill=black!20!yellow, inner sep=2pt] at (a16) {};
\node [circle, fill=black!20!red, inner sep=1.5pt] at (b1) {};
\node [circle, fill=black!20!red, inner sep=1.5pt] at (b3) {};
\node [circle, fill=black!20!red, inner sep=1.5pt] at (b4) {};
\node [circle, fill=black!20!red, inner sep=1.5pt] at (b5) {};
\node [circle, fill=black!20!red, inner sep=1.5pt] at (b6) {};
\node [circle, fill=black!20!red, inner sep=1.5pt] at (b7) {};
\node [circle, fill=black!20!red, inner sep=1.5pt] at (b8) {};
\end{scope}
\draw[thick]  (25,20) rectangle (26,-4);
\coordinate (A) at (25, 2);
\coordinate (B) at (26,2);
\draw[thick] (A)--(B);
\coordinate (A1) at (25, 8);
\coordinate (B1) at (26,8);
\draw[thick] (A1)--(B1);
\coordinate (A2) at (25, 14);
\coordinate (B2) at (26,14);
\draw[thick] (A2)--(B2);
\draw[thick]  (-5,20) rectangle (-6,-4);
\coordinate (C) at (-5, 2);
\coordinate (D) at (-6, 2);
\draw[thick] (C)--(D);
\coordinate (C1) at (-5, 8);
\coordinate (D1) at (-6, 8);
\draw[thick] (C1)--(D1);
\coordinate (C2) at (-5, 14);
\coordinate (D2) at (-6, 14);
\draw[thick] (C2)--(D2);

\node(1) [circle, fill=black, inner sep=1pt] at (-5.5,-1.5) {};
\node(11) [circle, fill=black, inner sep=1pt] at (25.5,-1.5) {};
\draw[thick, color=black!40!green](1)--(a0)--(11);
\draw[thick, color=black!40!green](1)--(a1)--(11);

\node (2)[circle, fill=black, inner sep=1pt] at (-5.5,4.5) {};
\node (22)[circle, fill=black, inner sep=1pt] at (25.5,4.5) {};
\draw[thick, color=blue](2)--(a3)--(22);
\draw[thick, color=blue](2)--(a2)--(22);
\draw[thick, color=blue](2)--(a4)--(22);
\draw[thick, color=blue](2)--(a5)--(22);
\draw[thick, color=blue](2)--(a6)--(22);
\draw[thick, color=blue](2)--(a7)--(22);
\draw[thick, color=blue](2)--(a8)--(22);
\draw[thick, color=blue](2)--(a9)--(22);
\draw[thick, color=blue](2)--(a11)--(22);
\draw[thick, color=blue](2)--(a13)--(22);
\draw[thick, color=blue](2)--(a15)--(22);

\node(3) [circle, fill=black, inner sep=1pt] at (-5.5,10.5) {};
\node(33) [circle, fill=black, inner sep=1pt] at (25.5,10.5) {};
\draw[thick, color=black!20!yellow](3)--(a10)--(33);
\draw[thick, color=black!20!yellow](3)--(a12)--(33);
\draw[thick, color=black!20!yellow](3)--(a14)--(33);
\draw[thick, color=black!20!yellow](3)--(a16)--(33);

\node (4)[circle, fill=black, inner sep=1pt] at (-5.5,16.5) {};
\node (44)[circle, fill=black, inner sep=1pt] at (25.5,16.5) {};
\draw[thick, color=black!20!red](4)--(b1)--(44);
\draw[thick, color=black!20!red](4)--(b3)--(44);
\draw[thick, color=black!20!red](4)--(b4)--(44);
\draw[thick, color=black!20!red](4)--(b5)--(44);
\draw[thick, color=black!20!red](4)--(b6)--(44);
\draw[thick, color=black!20!red](4)--(b7)--(44);
\draw[thick, color=black!20!red](4)--(b8)--(44);
\end{scope}
\end{tikzpicture}
    \caption{Total learning tile}
    \label{fig:pctB}
\end{figure}
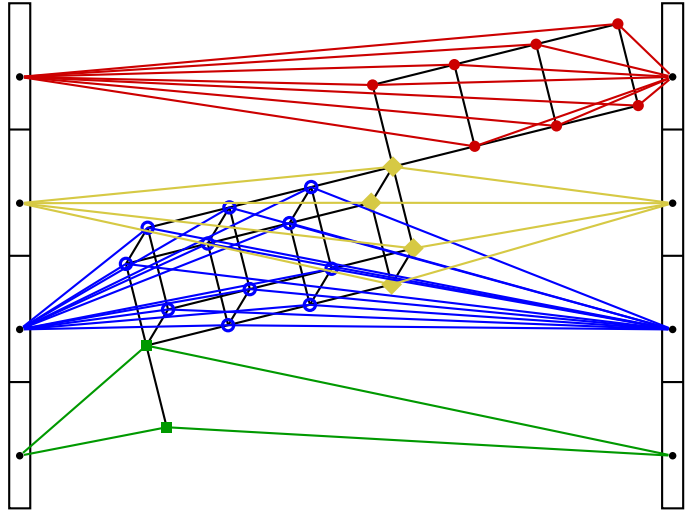

\begin{definition}
    Let $(S, \mathcal{C}, T, \sigma)$ be an OPD instance and 
    $(Q, \cal{K})$ a given learning space with a 
    graph $G(Q, \cal{K})$. For each $s \in S$ and each pair of
    consecutive tests $t',t \in T$, let $\mathcal{P}_{s, t}$ 
    denote 
    the set of all shortest paths of a subject $s$ between 
    knowledge states $t'(s)$ and $t(s)$ in $G(Q, \cal{K})$. 
    Let $$V_{t}:=\bigcup_{s\in S}\bigcup_{P \in 
    \mathcal{P}_{s, t}}V(P)$$ and let  
    $$E_{t}:=\bigcup_{s\in S}\bigcup_{P \in 
    \mathcal{P}_{s, t}}E(P).$$
    For each pair of consecutive tests $t', t \in T$, we define a graph 
    $G_P(t', t)$ with 
    vertices $$V(G_P(t',t))=\dset{(t',s)}{s \in S} \cup 
    \dset{(t,s)}{s \in S} \cup \dset{(t,v)}{v \in V_{t}}$$ 
    and edges 
    \begin{eqnarray*}
        E(G_P(t',t))&=&\dset{(t',s)(t,v)}{s \in S \wedge v
        \hbox{ is the first vertex of }P, P \in \mathcal{P}_{s, t}}\\ 
        &\cup&\dset{(t,v)(t,s)}{s \in S \wedge v
        \hbox{ is the last vertex of }P, P \in \mathcal{P}_{s, t}}\\
        &\cup&\dset{(t,u)(t,v)}{uv \in E_{t}}.
    \end{eqnarray*} 
    Let $\pi_0,\pi_1,\dots,\pi_m\in \Pi(S)$ be a 
    combinatorial layout of the OPD instance 
    $(S,\mathcal{C},T,\sigma)$. We define a tile 
    $T_P(\pi_{i-1},\pi_{i})=(G_P(t_{i-1},t_{i}),L,R)$, 
    where $L:=\dset{(t_{i-1},s)}{s\in S}$ is ordered by the permutation $\pi_{i-1}$ 
    and $R:=\dset{(t_{i},s)}{s\in S}$ is ordered by the permutation 
    $\pi_{i}$ for $i=0,1,\ldots,m$. 
    
    The \emph{possibilistic learning tile} 
    $T_P(\pi_0,\pi_1,\ldots,\pi_m)$ of the OPD instance
    $(S,\mathcal{C},T,\sigma)$ with the combinatorial layout
    $\pi_0,\pi_1,\dots,\pi_m$ is obtained by joining the 
    compatible sequence of tiles 
    $T_P(\pi_0,\pi_1),T_P(\pi_1,\pi_2),\ldots,T_P(\pi_{m-1},\pi_m)$ as 
    \[\otimes \left( T_P(\pi_0,\pi_1),\ldots, T_P(\pi_{m-1},\pi_m) \right).\] 
\end{definition} 
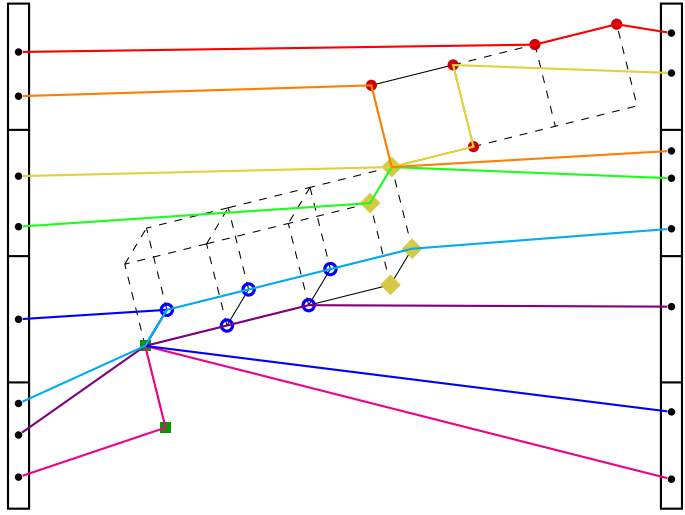
\begin{figure}[H]
    \centering
    \usetikzlibrary{matrix}
\usetikzlibrary{shadows}
\usetikzlibrary{calc}
\usetikzlibrary{shapes.geometric}
\begin{tikzpicture}
\begin{scope}[scale=0.28]
\begin{scope}[rotate=14, shift={(1.4,3.5)}]
\coordinate(a0) at (0,-4) {}{};
\coordinate (a1) at (0,0) {} {};
\coordinate (a2) at (4,0) {}{};
\coordinate (a3) at (4,4) {}{};
\coordinate (a4) at (0,4) {}{};
\coordinate (a5) at ($({sqrt(2)}, {sqrt(2)})$) {}{};
\coordinate (a6) at ($({4+sqrt(2)}, {sqrt(2)})$) {}{};
\coordinate (a7) at ($({sqrt(2)+4}, {sqrt(2)+4})$) {}{};
\coordinate (a8) at ($({sqrt(2)}, {sqrt(2)+4})$) {}{};
\coordinate (a9) at (8,0) {} {};
\coordinate (a10) at (12,0) {}{};
\coordinate (a11) at (8,4) {}{};
\coordinate (a12) at (12,4) {}{};
\coordinate (a13) at ($({sqrt(2)+8}, {sqrt(2)})$) {}{};
\coordinate (a14) at ($({12+sqrt(2)}, {sqrt(2)})$) {}{};
\coordinate (a15) at ($({sqrt(2)+8}, {sqrt(2)+4})$) {}{};
\coordinate (a16) at ($({sqrt(2)+12}, {sqrt(2)+4})$) {}{};
\draw (a0)-- (a1);
\draw (a1) -- (a2);
\draw [dashed] (a2)--(a3);
\draw [dashed] (a1)--(a4);
\draw[dashed](a3) -- (a4);
\draw[dashed] (a5) -- (a6);
\draw [dashed](a5)--(a8);
\draw [dashed] (a6)--(a7);
\draw [dashed] (a7) -- (a8);
\draw (a9) -- (a10);
\draw[dashed](a12) -- (a11);
\draw[dashed](a10)--(a12);
\draw[dashed](a9)--(a11);
\draw (a13) -- (a14);
\draw[dashed] (a16) -- (a15);
\draw[dashed](a14)--(a16);
\draw [dashed] (a15)--(a13);

\draw [dashed](a1) -- (a5);
\draw (a2) -- (a6);
\draw [dashed](a3) -- (a7);
\draw[dashed] (a4) -- (a8);

\draw [dashed](a3)--(a11);
\draw [dashed](a2)--(a9);
\draw [dashed](a6)--(a13);
\draw (a13)--(a9);
\draw (a14)--(a10);
\draw [dashed](a7)--(a15);
\draw [dashed](a15)--(a11);

\coordinate (b1) at ($({sqrt(2)+12}, {sqrt(2)+8})$) {}{};
\coordinate (b2) at ($({sqrt(2)+12}, {sqrt(2)+4})$) {}{};
\coordinate (b3) at ($({sqrt(2)+16}, {sqrt(2)+8})$) {}{};
\coordinate (b4) at ($({sqrt(2)+16}, {sqrt(2)+4})$) {}{};
\coordinate (b5) at ($({sqrt(2)+20}, {sqrt(2)+8})$) {}{};
\coordinate (b6) at ($({sqrt(2)+20}, {sqrt(2)+4})$) {}{};
\coordinate (b7) at ($({sqrt(2)+24}, {sqrt(2)+8})$) {}{};
\coordinate (b8) at ($({sqrt(2)+24}, {sqrt(2)+4})$) {}{};

\draw [dashed](b1)--(b2);
\draw (b4)--(b3);
\draw (b2)--(b4);
\draw (b1)--(b3);
\draw [dashed](b5)--(b6)--(b8)--(b7)--cycle;
\draw [dashed](b3)--(b5);
\draw [dashed](b4)--(b6);
\node [rectangle, fill=black!40!green,  inner sep=2pt] at (a0) {};
\node [rectangle, fill=black!40!green, inner sep=2pt] at (a1) {};
\node [circle,  draw, very thick, color=blue, inner sep=1.5pt] at (a2) {};
\node [circle,  draw, very thick, color=blue, inner sep=1.5pt] at (a5) {};
\node [circle,  draw, very thick, color=blue, inner sep=1.5pt] at (a6) {};
\node [circle,  draw, very thick, color=blue, inner sep=1.5pt] at (a9) {};
\node [diamond, fill=black!20!yellow, inner sep=2pt] at (a10) {};
\node [diamond, fill=black!20!yellow, inner sep=2pt] at (a12) {};
\node [circle,  draw, very thick, color=blue, inner sep=1.5pt] at (a13) {};
\node [diamond, fill=black!20!yellow, inner sep=2pt] at (a14) {};
\node [diamond, fill=black!20!yellow, inner sep=2pt] at (a16) {};
\node [circle, fill=black!20!red, inner sep=1.5pt] at (b1) {};
\node [circle, fill=black!20!red, inner sep=1.5pt] at (b3) {};
\node [circle, fill=black!20!red, inner sep=1.5pt] at (b4) {};
\node [circle, fill=black!20!red, inner sep=1.5pt] at (b5) {};
\node [circle, fill=black!20!red, inner sep=1.5pt] at (b7) {};
\end{scope}
\draw[thick]  (25,20) rectangle (26,-4);
\coordinate (A) at (25, 2);
\coordinate (B) at (26,2);
\draw[thick] (A)--(B);
\coordinate (A1) at (25, 8);
\coordinate (B1) at (26,8);
\draw[thick] (A1)--(B1);
\coordinate (A2) at (25, 14);
\coordinate (B2) at (26,14);
\draw[thick] (A2)--(B2);
\draw[thick]  (-5,20) rectangle (-6,-4);
\coordinate (C) at (-5, 2);
\coordinate (D) at (-6, 2);
\draw[thick] (C)--(D);
\coordinate (C1) at (-5, 8);
\coordinate (D1) at (-6, 8);
\draw[thick] (C1)--(D1);
\coordinate (C2) at (-5, 14);
\coordinate (D2) at (-6, 14);
\draw[thick] (C2)--(D2);

\node [circle, fill=black, inner sep=1pt](s1) at (-5.5,-2.5) {};
\node [circle, fill=black, inner sep=1pt](s'1) at (25.5,-2.6) {};
\draw[thick, magenta] (s1)--(a0)--(a1)--(s'1);

\node [circle, fill=black, inner sep=1pt](s2) at (-5.5,-0.5) {};
\node [circle, fill=black, inner sep=1pt] (s'2) at (25.5,5.6) {};
\draw[thick, violet] (s2)--(a1)--(a2)--(a9)--(s'2);

\node [circle, fill=black, inner sep=1pt] (s3) at (-5.5,5) {};
\node [circle, fill=black, inner sep=1pt] (s'3) at (25.5,0.6) {};
\draw[thick,blue](s3)--(a5)--(a1)--(s'3);

\node [circle, fill=black, inner sep=1pt] (s4) at (-5.5,1) {};
\node [circle, fill=black, inner sep=1pt](s'4) at (25.5,9.3) {};
\draw[thick, cyan](s4)--(a1)--(a5)--(a6)--(a13)--(a14)--(s'4);

\node [circle, fill=black, inner sep=1pt](s5) at (-5.5,9.4) {};
\node [circle, fill=black, inner sep=1pt](s'5) at (25.5,11.7) {};
\draw[thick, white!10!green](s5)--(a12)--(b2)--(s'5);

\node [circle, fill=black, inner sep=1pt](s6) at (-5.5,11.8) {};
\node [circle, fill=black, inner sep=1pt] (s'6) at (25.5,16.7) {};
\draw[thick, black!15!yellow](s6)--(b2)--(b4)--(b3)--(s'6);

\node [circle, fill=black, inner sep=1pt](s7) at (-5.5,15.6) {};
\node [circle, fill=black, inner sep=1pt] (s'7) at (25.5,13) {};
\draw[thick,orange](s7)--(b1)--(b2)--(s'7);

\node [circle, fill=black, inner sep=1pt](s8) at (-5.5,17.7) {};
\node [circle, fill=black, inner sep=1pt](s'8) at (25.5,18.6) {};
\draw[thick, red](s8)--(b5)--(b7)--(s'8);
\end{scope}
\end{tikzpicture}
    \caption{Possibilistic learning tile.}
    \label{fig:Possibilistic}
\end{figure}

In the next definition of a tile, 
the exact model reduces the learning space graph even further to
the paths actually traversed by the subjects. 
The graph of the tile used there is therefore 
a subgraph of the graph of the possibilistic tile 
(cf.\ Figure \ref{fig:Exact}).

\begin{definition}
    Let $(S, \mathcal{C}, T, \sigma)$ be an OPD instance, 
    $(Q, \cal K)$ a given learning space and let $G(Q, \cal K)$ 
    be a graph of $(Q, \cal K)$. For each $s \in S$ and 
    each pair of consecutive tests $t',t \in T$, let 
    $P_{s,t}$ denote the actually traversed path of subject $s$ 
    between $t'$ and $t$ in $G(Q, \cal K)$. Let 
    $$V'_t:=\bigcup_{s\in S}V(P_{s,t})$$ and 
    $$E'_t:=\bigcup_{s\in S}E(P_{s,t}).$$ We define a graph 
    $G_E(t', t)$ with vertices 
    $$V(G_E(t',t)):=\dset{(t',s)}{s \in S} \cup 
    \dset{(t, s)}{s \in S} \cup \dset{(t,v)}{v \in V'_t}$$ 
    and edges 
    \begin{eqnarray*}    
    E(G_E(t',t))&:=&\dset{(t',s)(t,v)}{s \in S \wedge v 
    \hbox{ is the first vertex of }P_{s,t}}\\
    &\cup&\dset{(t,v)(t,s)}{s \in S \wedge v
    \hbox{ is the last vertex of }P_{s,t}}\\
    &\cup&\dset{(t,u)(t,v)}{uv \in E'_t}. 
    \end{eqnarray*}
    
    Let $\pi_0,\pi_1,\dots,\pi_m\in \Pi(S)$ be a 
    combinatorial layout of an OPD instance $(S,\mathcal{C},T,\sigma)$.
    We define a tile 
    $T_E(\pi_{i-1},\pi_i)=(G_E(t_{i-1},t_i),L,R)$, 
    where $L:=\dset{(t_{i-1},s)}{s\in S}$ is ordered by the
    permutation $\pi_{i-1}$ and $R:=\dset{(t_i,s)}{s\in S}$ is 
    ordered by the permutation $\pi_i$ for $i=1,\ldots,m$. 
    The \emph{exact learning tile} $T_E(\pi_0,\pi_1,\ldots,\pi_m)$ 
    of the OPD instance $(S,\mathcal{C},T,\sigma)$ with the 
    combinatorial layout $\pi_0,\pi_1,\dots,\pi_m$ is obtained as 
    join of the compatible sequence of tiles \newline    $\otimes\left(T_E(\pi_0,\pi_1),T_E(\pi_1,\pi_2),\ldots,T_E(\pi_{m-1},\pi_m)\right)$.    
\end{definition}

\begin{figure}[H]
    \centering
    \usetikzlibrary{matrix}
\usetikzlibrary{shadows}
\usetikzlibrary{shapes.geometric}
\usetikzlibrary{calc}
\begin{tikzpicture}
\begin{scope}[scale=0.28]

\begin{scope}[rotate=14, shift={(1.4,3.5)}]
\coordinate(a0) at (0,-4) {}{};
\coordinate (a1) at (0,0) {} {};
\coordinate (a2) at (4,0) {}{};
\coordinate (a3) at (4,4) {}{};
\coordinate (a4) at (0,4) {}{};
\coordinate (a5) at ($({sqrt(2)}, {sqrt(2)})$) {}{};
\coordinate (a6) at ($({4+sqrt(2)}, {sqrt(2)})$) {}{};
\coordinate (a7) at ($({sqrt(2)+4}, {sqrt(2)+4})$) {}{};
\coordinate (a8) at ($({sqrt(2)}, {sqrt(2)+4})$) {}{};
\coordinate (a9) at (8,0) {} {};
\coordinate (a10) at (12,0) {}{};
\coordinate (a11) at (8,4) {}{};
\coordinate (a12) at (12,4) {}{};
\coordinate (a13) at ($({sqrt(2)+8}, {sqrt(2)})$) {}{};
\coordinate (a14) at ($({12+sqrt(2)}, {sqrt(2)})$) {}{};
\coordinate (a15) at ($({sqrt(2)+8}, {sqrt(2)+4})$) {}{};
\coordinate (a16) at ($({sqrt(2)+12}, {sqrt(2)+4})$) {}{};
\draw(a0)-- (a1);
\draw (a1) -- (a2);
\draw [dashed] (a2)--(a3);
\draw [dashed] (a1)--(a4);
\draw[dashed](a3) -- (a4);
\draw[dashed] (a5) -- (a6);
\draw [dashed](a5)--(a8);
\draw [dashed] (a6)--(a7);
\draw [dashed] (a7) -- (a8);
\draw [dashed](a9) -- (a10);
\draw[dashed](a12) -- (a11);
\draw[dashed](a10)--(a12);
\draw[dashed](a9)--(a11);
\draw [dashed](a13) -- (a14);
\draw[dashed] (a16) -- (a15);
\draw[dashed](a14)--(a16);
\draw [dashed] (a15)--(a13);

\draw [dashed](a1) -- (a5);
\draw [dashed](a2) -- (a6);
\draw [dashed](a3) -- (a7);
\draw[dashed] (a4) -- (a8);

\draw [dashed](a3)--(a11);
\draw [dashed](a2)--(a9);
\draw [dashed](a6)--(a13);
\draw [dashed](a13)--(a9);
\draw [dashed](a14)--(a10);
\draw [dashed](a7)--(a15);
\draw [dashed](a15)--(a11);

\coordinate (b1) at ($({sqrt(2)+12}, {sqrt(2)+8})$) {}{};
\coordinate (b2) at ($({sqrt(2)+12}, {sqrt(2)+4})$) {}{};
\coordinate (b3) at ($({sqrt(2)+16}, {sqrt(2)+8})$) {}{};
\coordinate (b4) at ($({sqrt(2)+16}, {sqrt(2)+4})$) {}{};
\coordinate (b5) at ($({sqrt(2)+20}, {sqrt(2)+8})$) {}{};
\coordinate (b6) at ($({sqrt(2)+20}, {sqrt(2)+4})$) {}{};
\coordinate (b7) at ($({sqrt(2)+24}, {sqrt(2)+8})$) {}{};
\coordinate (b8) at ($({sqrt(2)+24}, {sqrt(2)+4})$) {}{};

\draw [dashed](b1)--(b2);
\draw (b4)--(b3);
\draw [dashed](b2)--(b4);
\draw [dashed](b1)--(b3);
\draw [dashed](b5)--(b6)--(b8)--(b7)--cycle;
\draw [dashed](b3)--(b5);
\draw [dashed](b4)--(b6);
\node [rectangle, fill=black!40!green,  inner sep=2pt] at (a0) {};
\node [rectangle, fill=black!40!green, inner sep=2pt] at (a1) {};
\node [circle,  draw, very thick, color=blue, inner sep=1.5pt] at (a2) {};
\node [circle,  draw, very thick, color=blue, inner sep=1.5pt] at (a5) {};
\node [circle,  draw, very thick, color=blue, inner sep=1.5pt] at (a6) {};
\node [circle,  draw, very thick, color=blue, inner sep=1.5pt] at (a9) {};
\node [diamond, fill=black!20!yellow, inner sep=2pt] at (a12) {};
\node [circle,  draw, very thick, color=blue, inner sep=1.5pt] at (a13) {};
\node [diamond, fill=black!20!yellow, inner sep=2pt] at (a14) {};
\node [diamond, fill=black!20!yellow, inner sep=2pt] at (a16) {};
\node [circle, fill=black!20!red, inner sep=1.5pt] at (b1) {};

\node [circle, fill=black!20!red, inner sep=1.5pt] at (b3) {};
\node [circle, fill=black!20!red, inner sep=1.5pt] at (b4) {};
\node [circle, fill=black!20!red, inner sep=1.5pt] at (b5) {};
\node [circle, fill=black!20!red, inner sep=1.5pt] at (b7) {};
\end{scope}
\draw[thick]  (25,20) rectangle (26,-4);
\coordinate (A) at (25, 2);
\coordinate (B) at (26,2);
\draw[thick] (A)--(B);
\coordinate (A1) at (25, 8);
\coordinate (B1) at (26,8);
\draw[thick] (A1)--(B1);
\coordinate (A2) at (25, 14);
\coordinate (B2) at (26,14);
\draw[thick] (A2)--(B2);
\draw[thick]  (-5,20) rectangle (-6,-4);
\coordinate (C) at (-5, 2);
\coordinate (D) at (-6, 2);
\draw[thick] (C)--(D);
\coordinate (C1) at (-5, 8);
\coordinate (D1) at (-6, 8);
\draw[thick] (C1)--(D1);
\coordinate (C2) at (-5, 14);
\coordinate (D2) at (-6, 14);
\draw[thick] (C2)--(D2);

\node [circle, fill=black, inner sep=1pt](s1) at (-5.5,-2.5) {};
\node [circle, fill=black, inner sep=1pt](s'1) at (25.5,-2.6) {};
\draw[thick, magenta] (s1)--(a0)--(a1)--(s'1);

\node [circle, fill=black, inner sep=1pt](s2) at (-5.5,-0.5) {};
\node [circle, fill=black, inner sep=1pt] (s'2) at (25.5,5.6) {};
\draw[thick, violet] (s2)--(a1)--(a2)--(a9)--(s'2);

\node [circle, fill=black, inner sep=1pt] (s3) at (-5.5,5) {};
\node [circle, fill=black, inner sep=1pt] (s'3) at (25.5,0.6) {};
\draw[thick,blue](s3)--(a5)--(a1)--(s'3);

\node [circle, fill=black, inner sep=1pt] (s4) at (-5.5,1) {};
\node [circle, fill=black, inner sep=1pt](s'4) at (25.5,9.3) {};
\draw[thick, cyan](s4)--(a1)--(a5)--(a6)--(a13)--(a14)--(s'4);

\node [circle, fill=black, inner sep=1pt](s5) at (-5.5,9.4) {};
\node [circle, fill=black, inner sep=1pt](s'5) at (25.5,11.7) {};
\draw[thick, white!10!green](s5)--(a12)--(b2)--(s'5);

\node [circle, fill=black, inner sep=1pt](s6) at (-5.5,11.8) {};
\node [circle, fill=black, inner sep=1pt] (s'6) at (25.5,16.7) {};
\draw[thick, black!15!yellow](s6)--(b2)--(b4)--(b3)--(s'6);

\node [circle, fill=black, inner sep=1pt](s7) at (-5.5,15.6) {};
\node [circle, fill=black, inner sep=1pt] (s'7) at (25.5,13) {};
\draw[thick,orange](s7)--(b1)--(b2)--(s'7);

\node [circle, fill=black, inner sep=1pt](s8) at (-5.5,17.7) {};
\node [circle, fill=black, inner sep=1pt](s'8) at (25.5,18.6) {};
\draw[thick, red](s8)--(b5)--(b7)--(s'8);
    
\end{scope}
\end{tikzpicture}
    \caption{Exact learning tile.}
    \label{fig:Exact}
\end{figure}

Finally, the minimal model ignores the detailed stages 
of the subjects between any two tests, and
reduces the detailed data to the observed ordinal panel data,
implying that each tile is a matching whose edges
represent subjects and connect the vertices representing
the corresponding subject in two consecutive tests. For this model,
it suffices that the tests map subjects into categories; the 
details of subject's knowledge states can be ignored.


\begin{definition}
Let $(S, \mathcal{C}, T, \sigma)$ be an OPD instance. For each pair of 
consecutive tests $t', t \in T$, we define a graph $G_O(t',t)$ with 
vertices $$V(G_O(t',t)):=\dset{(t',s)}{s\in S}\cup 
\dset{(t,s)}{s\in S}$$ 
and edges 
$$E(G_O(t',t)):=\dset{(t',s)(t,s)}{s\in S}.$$ 
Let $\pi_0,\pi_1,\dots,\pi_m\in \Pi(S)$ be a 
combinatorial layout of the OPD 
instance $(S,\mathcal{C},T,\sigma)$.
Note that $\pi_i$ orders the subjects within each category of 
$\mathcal{C}$, whereas $t_i$ does not. We define a tile 
$T_O(\pi_{i-1},\pi_i)=(G_O(t_{i-1},t_i),L,R)$, where $L$ is the 
sequence of vertices $\dset{(t_{i-1},s)}{s\in S}$ ordered by the 
permutation $\pi_{i-1}$ and $R$ is the sequence of vertices 
$\dset{(t_i,s)}{s\in S}$ ordered by the permutation $\pi_i$. 
Note that each vertex of $T_O(\pi_{i-1},\pi_i)$ 
 is either a left or a 
right wall vertex and $|S|=|L|=|R|$. An \DEF{ordinal panel tile} 
$T_O(\pi_0,\pi_1,\ldots,\pi_m)$ of the OPD instance 
$(S,\mathcal{C},T,\sigma)$ with the combinatorial layout 
$\pi_0,\pi_1,\dots,\pi_m$ is obtained by joining the compatible 
sequence of tiles
$\otimes\left(T_O(\pi_0,\pi_1),T_O(\pi_1,\pi_2),\ldots,T_O(\pi_{m-1},\pi_m)\right)$.
\end{definition}

\begin{definition}
An \DEF{ordinal panel drawing} $D(S,\mathcal{C},T,\sigma)$ of an OPD
instance $(S,\mathcal{C},T,\sigma)$ with the combinatorial layout 
$\pi_0,\pi_1,\dots,\pi_m$ is a tile drawing of the ordinal panel tile 
$T_O(\pi_0,\pi_1,\ldots,\pi_m)$, such that for each $i=1,\ldots,m$, 
$D(S,\mathcal{C},T,\sigma)$ restricted to $T_O(\pi_{i-1},\pi_{i})$ is 
a tile drawing of $T_O(\pi_{i-1},\pi_{i})$. A \DEF{panel ranking tile} 
$T_R(t_0,t_1,\ldots,t_m)$ of an OPD instance $(S,\mathcal{C},T,\sigma)$ 
is an ordinal panel tile of an OPD instance $(S,\mathcal{C},T,\sigma)$
that has the smallest crossing number of an ordinal panel drawing
$D(S,\mathcal{C},T,\sigma)$ over all combinatorial layouts of an 
OPD instance $(S,\mathcal{C},T,\sigma)$. The \DEF{panel crossing number} of
$(S,\mathcal{C},T,\sigma)$, $pcr(S,\mathcal{C},T,\sigma)$, is then defined as the 
minimum number of crossings in any ordinal panel drawing of a
panel ranking tile of an ordinal panel instance.  
\end{definition}

\begin{figure}[H]
    \centering
   \usetikzlibrary{matrix}
\usetikzlibrary{shadows}
\usetikzlibrary{calc}
\begin{tikzpicture}
\begin{scope}[scale=0.28]

\draw[thick]  (25,20) rectangle (26,-4);
\coordinate (A) at (25, 2);
\coordinate (B) at (26,2);
\draw[thick] (A)--(B);
\coordinate (A1) at (25, 8);
\coordinate (B1) at (26,8);
\draw[thick] (A1)--(B1);
\coordinate (A2) at (25, 14);
\coordinate (B2) at (26,14);
\draw[thick] (A2)--(B2);
\draw[thick]  (-5,20) rectangle (-6,-4);
\coordinate (C) at (-5, 2);
\coordinate (D) at (-6, 2);
\draw[thick] (C)--(D);
\coordinate (C1) at (-5, 8);
\coordinate (D1) at (-6, 8);
\draw[thick] (C1)--(D1);
\coordinate (C2) at (-5, 14);
\coordinate (D2) at (-6, 14);
\draw[thick] (C2)--(D2);

\node [circle, fill=black, inner sep=1pt](s1) at (-5.5,-2.5) {};
\node [circle, fill=black, inner sep=1pt](s'1) at (25.5,-2.6) {};
\draw[ thick, magenta] (s1)--(s'1);

\node [circle, fill=black, inner sep=1pt](s2) at (-5.5,-0.5) {};
\node [circle, fill=black, inner sep=1pt] (s'2) at (25.5,5.6) {};
\draw[ thick, violet] (s2)--(s'2);

\node [circle, fill=black, inner sep=1pt] (s3) at (-5.5,5) {};
\node [circle, fill=black, inner sep=1pt] (s'3) at (25.5,0.6) {};
\draw[thick,blue](s3)--(s'3);

\node [circle, fill=black, inner sep=1pt] (s4) at (-5.5,1) {};
\node [circle, fill=black, inner sep=1pt](s'4) at (25.5,9.3) {};
\draw[thick, cyan](s4)--(s'4);

\node [circle, fill=black, inner sep=1pt](s5) at (-5.5,9.4) {};
\node [circle, fill=black, inner sep=1pt](s'5) at (25.5,11.7) {};
\draw[thick, white!10!green](s5)--(s'5);

\node [circle, fill=black, inner sep=1pt](s6) at (-5.5,11.8) {};
\node [circle, fill=black, inner sep=1pt] (s'6) at (25.5,16.7) {};
\draw[ thick, black!15!yellow](s6)--(s'6);

\node [circle, fill=black, inner sep=1pt](s7) at (-5.5,15.6) {};
\node [circle, fill=black, inner sep=1pt] (s'7) at (25.5,13) {};
\draw[ thick,orange](s7)--(s'7);

\node [circle, fill=black, inner sep=1pt](s8) at (-5.5,17.7) {};
\node [circle, fill=black, inner sep=1pt](s'8) at (25.5,18.6) {};
\draw[thick, red](s8)--(s'8);
\end{scope}
\end{tikzpicture}
    \caption{Ordinal panel tile}
    \label{fig:pctE}
\end{figure}
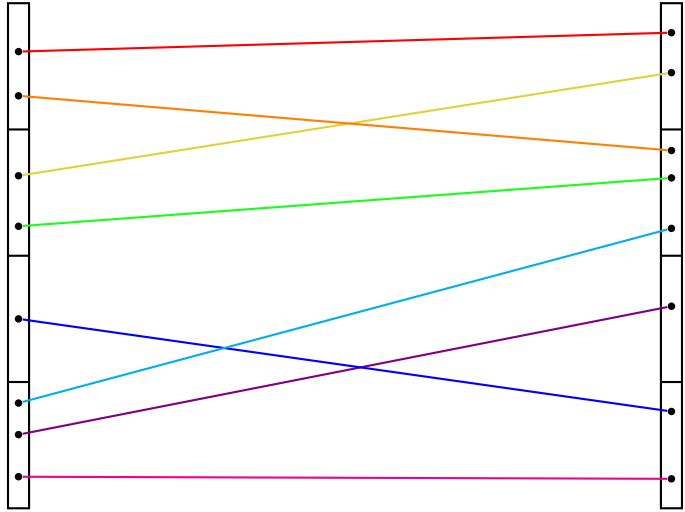

In the above, we have established a formal structure of 
maturity models that allows for various degrees of details.

In the following sections, we investigate their properties,
thus addressing the first challenge of 
improved understanding of maturity models.
Prior to that, we introduce an additional technical definition that will be needed in the following sections. 
\begin{definition}
Let $a,b \in \mathbb{Z}$ be integers and $a \leq b$. We denote $[a,b]=\{a, a+1, \ldots, b-1, b\}$. 

Similarly, let $X=\{x_1, x_2, \ldots, x_k\}$ be an ordered set. We denote
$[x_i,x_j] = \{x_i,x_{i+1},\dots,x_j\}$ for $1\leq i<j \leq k$.
\end{definition}

\section{\NP-completeness of tile crossing number}\label{sc:NPcompleteness}
As a first topic, we address the issue of 
visual representation of the introduced formal structures. 
First, we note that \NP-completeness of general tile crossing number follows from the \NP-completeness of regular crossing number.
This is proved by attaching two vertices of degree one to an arbitrary vertex of a graph G and declaring one to be the right and the other to be the left wall vertex. The tile crossing number of thus obtained tile $T$ is equal to the crossing number of graph $G$, thus if $tcr(T)$ could be obtained in polynomial time, so could $cr(G)$. As crossing number of a graph is \NP-complete~\cite{gareyCrossingNumberNPComplete1983}, even for cubic graphs~\cite{DBLP:journals/jct/Hlineny06a}, we conclude that tile crossing number of a general tile is \NP-complete. A more elaborate gadget subdividing an arbitrary edge of $G$ six times and introducing two new (wall) vertices of degree three would prove $tcr(T)$ is \NP-complete even for cubic tiles, i.e., tiles whose  vertices all have degree equal to three. 

Note that most \NP-completeness results on crossing numbers focus on simple problem adaptations.
The total learning tile, however, introduces two new edges per vertex in a manner that these vertices form (partial) apices over the original graph. A new technique of establishing \NP-completeness of crossing-minimization in the total learning tile is therefore needed. As a first open problem, one may consider asking about \NP-completeness of introducing an apex over the graph: 
\begin{open}
    Let $G'$ be a graph obtained from $G$ as a complete join of $G$ with a new vertex $v$. Is determining the crossing number of $G'$ \NP-complete? Is it still \NP-complete provided that $G$ is a partial hypercube? 
\end{open}

Next, we show that it is computationally infeasible to find visual representations with minimum number of  edge crossings for possibilistic tiles and exact tiles.
\begin{theorem}\label{thm:posstilehardness}
    Given $k\in\mathbb{N}$, it is \NP-complete to decide whether a possibilistic learning tile has an ordinal panel drawing with at most $k$ crossings, even for a single subject, a single category, and a single tile.
\end{theorem}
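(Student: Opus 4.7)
The plan is, first, to verify membership in $\NP$ by noting that any tile drawing admits a polynomial-size combinatorial encoding---a rotation system together with an ordering of crossings along each edge---whose crossing count is computable in polynomial time; since there is only a single subject and a single category, each wall has exactly one vertex and no combinatorial layout needs to be guessed, only the internal drawing.

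For $\NP$-hardness, I would reduce from the crossing-number problem on cubic graphs, which is $\NP$-complete by Hlin\v{e}n\'y. Given a cubic graph $G$, the goal is to build a learning space $(Q,\mathcal{K})$ and two states $u,v\in\mathcal{K}$ such that, with $|S|=|\mathcal{C}|=1$ and $t_0(s)=u$, $t_1(s)=v$, the resulting possibilistic tile $T$ has internal graph equal to the interval $I(u,v)$ in $G(Q,\mathcal{K})$ (together with two pendant edges on opposite walls corresponding to the single subject) and satisfies $\tcrn(T)=\mathrm{cr}(G)+c$ for a constant $c$ easily read off from the construction.

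The central construction step is to realize (a controlled modification of) $G$ as the interval $I(u,v)$ in a partial hypercube. Since arbitrary cubic graphs are not partial hypercubes, my approach would be to first subdivide each edge of $G$ sufficiently often to obtain a graph $G'$ with $\mathrm{cr}(G')=\mathrm{cr}(G)$, and then embed $G'$ as the core of a partial hypercube by assigning to each edge-class of $G'$ a distinct binary coordinate and introducing just enough auxiliary knowledge states and items to satisfy the learning-smoothness and learning-consistency axioms. Taking $u=\emptyset$ and $v=Q$ then makes $I(u,v)$ coincide with the entire constructed partial hypercube, with $G'$ appearing as a controlled substructure.

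The main obstacle is the correctness argument. The easy direction lifts any drawing of $G$ to a tile drawing of $T$ with at most $\mathrm{cr}(G)+c$ crossings by drawing the auxiliary gadget vertices and edges planarly close to the edges of the subdivided $G'$. The harder direction requires showing that every optimal tile drawing of $T$ can be modified, without increasing the crossing count, so that all crossings lie among edges of the embedded copy of $G'$, which then yields a drawing of $G$ with at most the claimed number of crossings. This planarity-and-rerouting argument is delicate and is the step where the structural rigidity of partial hypercubes, combined with the two fixed pendant positions on opposite walls, must be carefully exploited.
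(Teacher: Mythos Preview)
Your $\NP$-membership argument is fine. The gap is in the hardness reduction. You propose to embed a subdivision $G'$ of the cubic input graph as a ``core'' inside a larger partial hypercube, adding auxiliary knowledge states and items to enforce the learning-space axioms, and then to recover $\mathrm{cr}(G)$ from $\tcrn(T)$ via a rerouting argument that pushes all crossings of an optimal tile drawing onto the copy of $G'$. Neither the embedding nor the rerouting is actually specified, and both are where the real difficulty lies. Once you close $G'$ under learning consistency, the interval $I(\emptyset,Q)$ will in general acquire many new states and many shortest $\emptyset$--$Q$ paths disjoint from your copy of $G'$; you give no mechanism for controlling the topology of this enlarged interval so that its crossing number matches that of $G'$ up to an additive constant $c$ independent of $|V(G)|$, and there is no a priori reason such a $c$ should exist. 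Your ``easy'' direction is also not free: attaching the two pendant wall edges without creating new crossings requires that $\emptyset$ and $Q$ lie on the outer face of some optimal drawing of the internal graph, which a reduction from arbitrary cubic instances does not give you.

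The paper avoids all of this with a single structural idea: it reduces from the specific Garey--Johnson crossing-number instances, which come with an $st$-numbering $\psi$ and the guarantee that $\psi(1)$ and $\psi(|V(G)|)$ can be placed on the outer face of an optimal drawing. It assigns to $\psi'(i)$ the nested state $\{q_1,\dots,q_{2(i-1)}\}$ and subdivides each edge of $G'$ so that the intermediate states add one item at a time from the target endpoint downward. The resulting $G(Q,\mathcal{K})$ is then \emph{itself} a subdivision of $G$---there are no auxiliary vertices or edges at all---and the union of all shortest $\emptyset$--$Q$ paths is exactly $G(Q,\mathcal{K})$. Hence the possibilistic tile is just this subdivision with two pendant edges attached at vertices already known to be placeable on the outer face, and both directions of the crossing-number equivalence become one-line observations (restriction for one direction, the outer-face normalization for the other). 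The rerouting problem you flag as ``delicate'' simply never arises.
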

\begin{figure}
    \centering
    \includegraphics{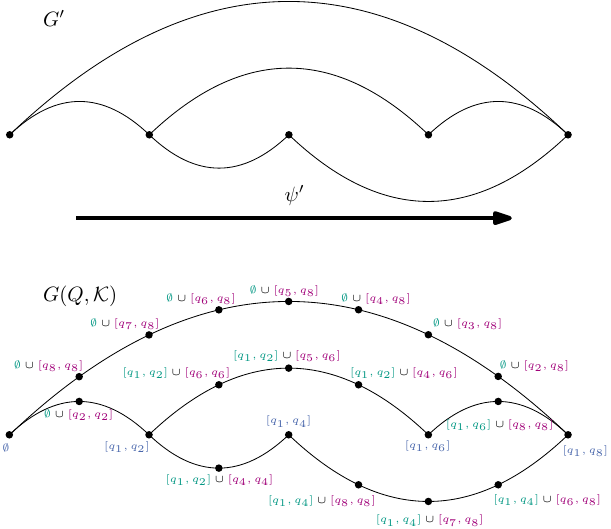}
    \caption{Illustration of the transformation of $G'$ into $G(Q,\mathcal{K})$ in \cref{thm:posstilehardness}. The knowledge states corresponding to the original graph vertices are colored blue. The knowledge states corresponding to the subdivision vertices are given as $A\cup B$ where $A$ is green and $B$ is purple.}
    \label{fig:tilehardness}
\end{figure}
\begin{proof}
    We start with \NP-membership. First, the number of crossings and how they appear along the edges of the drawing of each single tile can be guessed. Then it is easy to decide in polynomial time if the tile drawings can be joined together. Thus, \NP-membership follows.

    For \NP-hardness, we reduce from the classic crossing minimization problem, which is \NP-complete \cite{gareyCrossingNumberNPComplete1983}. In fact, the instances that Garey and Johnson constructed in their proof had specific properties that we will use for the reduction. An instance is given as $(G,k)$ with the \NP-complete question if there exists a drawing of $G$ with at most $k$ crossings. Here, $G$ is a multigraph with the following properties.
    \begin{enumerate}
        \item There exists an ordering $\psi$ of the vertex set $V(G)$ such that each vertex that is not the leftmost or rightmost vertex in the ordering has at least an edge going to the left in the ordering and an edge going to the right in the ordering. Furthermore, the leftmost vertex has an edge going to the right and the rightmost vertex has an edge going to the left. The above property is also known as $st$-numbering \cite{even1976computing}.
        \item If there exists a drawing of $G$ with at most $k$ crossings, then there exists a drawing of $G$ with $k$ crossings, where additionally $\psi(1)$ and $\psi(|V(G)|)$ lie on the outer face \cite[Normalizations 1 and 2]{gareyCrossingNumberNPComplete1983}.
    \end{enumerate}
    Thus, let $(G,k)$ be such an instance. First, we get rid of multiedges by subdividing all edges in $G$ once. Let us call this new graph $G'$. Obviously, $G'$ is still orderable with the new ordering $\psi'$ of $V(G')$, and $G'$ has a drawing with at most $k$ edge crossings if and only if $G$ has a drawing with at most $k$ edge crossings. Let $n=|V(G')|$. We choose $\psi'$ such that $\psi(1)=\psi'(1)$ and $\psi(|V(G)|)=\psi'(n)$. Thus $G$ has a drawing with at most $k$ crossings such that $\psi(1)$ and $\psi(|V(G)|)$ are on the outer face if and only if $G'$ has a drawing with at most crossings such that $\psi'(1)$ and $\psi'(n)$ are on the outer face.
    We will now transform $G'$ into a learning space graph $G(Q,\mathcal{K})$ by subdividing edges in $G'$ and providing knowledge states for the vertices (illustrated in \cref{fig:tilehardness}). Furthermore, $\psi'(1)$ will correspond to the knowledge state of ``knowing nothing'' and $\psi'(n)$ will correspond to the knowledge state of knowing everything.
    Let $Q=\{q_1,q_2,\dots, q_{2(n-1)}\}$, and for the purposes of the following description, define $[q_i,q_j] = \{q_i,q_{i+1},\dots,q_j\}$ for $i<j$.
    Let the vertex $\psi'(i)$ correspond to the knowledge state $[q_1,q_{2(i-1)}]$, thus $\psi'(1)$ corresponds to the knowledge state $\emptyset$.
    Now consider an edge $\{v,w\}\in E(G')$; let $i<j$ be such that $\psi'(i)=v$ and $\psi'(j)=w$.
    We subdivide the edge $\{v,w\}$ $2(j-i)-1$ times. We call these subdivision vertices $u_1,u_2,\dots,u_{2(j-i)-1}$ ordered along the path from $v$ to $w$.
    Let $u_\ell$ correspond to the knowledge state $A\cup B$ where $A=[q_1,q_{2(i-1)}]$ and $B=[q_{2(j-1)-\ell+1},q_{2(j-1)}]$.
    After processing all edges we are left with the learning space graph $G(Q,\mathcal{K})$. 
    Note also that the union of all shortest paths from $\psi'(1)$ to $\psi'(n)$ in $G(Q,\mathcal{K})$ is exactly $G(Q,\mathcal{K})$.
    Now let $(S,\mathcal{C},T,\sigma)$ be an OPD instance such that $S=\{s\}$, $\mathcal{C}=\{c\}$, and $T=\{t_0,t_1\}$ with $t_0(s)=t_1(s)=c$, and $\sigma=(c)$.
    Furthermore, let $\alpha(\mathcal{K})=\{c\}$.
    There is a unique combinatorial layout of this OPD instance. 
    Now consider the possibilistic learning tile $T$ that corresponds to $(S,\mathcal{C},T,\sigma)$, $\alpha$, and $G(Q,\mathcal{K})$. The tile consists of the graph $G(Q,\mathcal{K})$, with $\psi'(1)$ connected to the left wall and $\psi'(n)$ connected to the right wall.
    We claim that $T$ has an ordinal panel tile drawing with at most $k$ crossings if and only if $G'$ has a drawing with at most $k$ crossings, which will complete the proof. We argue both directions.

    ``$\Rightarrow$'': This direction is trivial as the graph corresponding to $T$ contains as induced subgraph a subdivision of the graph $G'$.

    ``$\Leftarrow$'': Consider a drawing of $G'$ with at most $k$ crossings. We can assume that $\psi'(1)$ and $\psi'(n)$ lie on the outer face. Thus, it is easy to connect them to the left and right wall without introducing any new crossings.
\end{proof}

\begin{theorem}\label{thm:exacttilehardness}
    Given $k\in\mathbb{N}$, it is \NP-complete to decide whether an exact learning tile has an ordinal panel drawing with at most $k$ crossings, even for a single subject, single tile and a single category.
\end{theorem}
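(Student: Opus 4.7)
The plan is to adapt the reduction from classical crossing minimization used in the proof of Theorem~\ref{thm:posstilehardness}. Starting from an instance $(G, k)$ satisfying the $st$-numbering and outer-face normalizations of Garey and Johnson, I would build the simple subdivided graph $G'$, order it by $\psi'$ with $\psi'(1), \psi'(n)$ on the outer face, and then construct the learning space $(Q, \mathcal{K})$ whose graph $G(Q, \mathcal{K})$ is the further subdivision of $G'$ used in the previous proof, with $\psi'(1)$ corresponding to $\emptyset$ and $\psi'(n)$ to $Q$. The \NP-membership part of the argument is inherited verbatim from Theorem~\ref{thm:posstilehardness}: guess the drawing tile by tile together with the order and location of edge crossings, then verify consistency in polynomial time.

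The main technical difference from the possibilistic case is that the exact tile only contains edges on actually traversed paths, not on all shortest paths, so a single shortest path from $\emptyset$ to $Q$ would leave most of $G(Q, \mathcal{K})$ out of the tile. To keep the tile's internal graph large enough to inherit the crossing-number hardness of $G'$, I would choose the OPD instance so that the union of actually traversed paths equals (a subdivision of) $G'$. Concretely, the idea is to modify the learning space with local gadgets at each former vertex of $G'$ that force the single traversed path from $\emptyset$ to $Q$ to weave through every subdivided edge of $G'$ in turn, so that $E'_{t_1}$ covers every edge of the $G'$-structure. All tests $\alpha \circ t_i$ map the subject to the same single category, so the ordinal constraints are trivial and no reordering in $\mathcal{C}$ is possible.

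With this setup, the equivalence argument of Theorem~\ref{thm:posstilehardness} carries over: the exact tile contains a subdivision of $G'$ as an induced subgraph, so any ordinal panel drawing of the tile yields a drawing of $G'$ with no more crossings; conversely, a drawing of $G'$ with $\psi'(1)$ and $\psi'(n)$ on the outer face extends to an ordinal panel drawing of the tile without adding crossings, because the wall connections and the zig-zag ``connector'' portions of the traversal can be routed along the outer face of the fixed drawing of $G'$.

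The main obstacle is precisely this covering step: designing the learning space and the unique actually traversed path so that one monotone path through $(Q, \mathcal{K})$ visits every edge of the embedded $G'$-structure while respecting learning smoothness and consistency. I would attempt this by \emph{doubling} each subdivided edge of $G'$ into a short parallel structure whose two copies, together with small detour gadgets, admit a unique path from $\emptyset$ to $Q$ traversing all copies in a prescribed order; the delicate part is verifying that these auxiliary gadgets add only a controlled constant to the crossing number so that the threshold can be shifted by an additive constant absorbed into $k$. If this single-subject covering turns out to be unworkable, a fallback is to allow multiple subjects mapped to the same single category and the same pair of knowledge states, assign each subject one of the edges of $G'$ as its traversed path, and conclude the proof without the ``single subject'' restriction.
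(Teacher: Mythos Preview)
Your single-subject approach cannot be made to work. The exact tile uses the union of edges on the \emph{actually traversed paths} $P_{s,t}$, and a path is by definition simple. A single simple path in $G(Q,\mathcal{K})$ from $\emptyset$ to $Q$ is acyclic, so the subgraph $E'_{t_1}$ it contributes is a tree (in fact a path). No amount of local gadgetry or edge-doubling will let one simple path contain a subdivision of an arbitrary graph $G'$ with cycles---and cycles are unavoidable once $G'$ encodes a hard crossing-number instance. The ``weaving'' you describe would require a closed walk or an Eulerian trail, neither of which is permitted by the definition of the exact tile.

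Your fallback is exactly what the paper does: take one subject $s_i$ per edge $e_i$ of $G(Q,\mathcal{K})$, all mapped to the single category, and let $P_{s_i,t_1}$ be any $\psi'(1)$--$\psi'(n)$ path through $e_i$ (such a path exists by the $st$-numbering). The union of these paths is all of $G(Q,\mathcal{K})$, and the tile graph is then $G(Q,\mathcal{K})$ with $\psi'(1)$ joined to the left wall by $m$ parallel edges and $\psi'(n)$ to the right wall by $m$ parallel edges; the equivalence argument goes through exactly as in the possibilistic case. Note that the theorem statement's ``single subject'' is inconsistent with this proof---the paper's own argument uses $m$ subjects, so that phrase appears to be a slip.
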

\begin{proof}
    \NP-membership is argued as in the proof of \cref{thm:posstilehardness}.

    We again reduce from the same problem as in \cref{thm:posstilehardness} with instances $(G,k)$ having the same properties.
    Let $G'$, $G(Q,\mathcal{K})$, $\psi$, $n$, and $\psi'$ be obtained as in \cref{thm:posstilehardness}.
    Let now the edges in $G(Q,\mathcal{K})$ be $\{e_1,e_2,\dots,e_m\}$.
    Let $(S,\mathcal{C}, T, \sigma)$ be an OPD instance with $S=\{s_1,s_2,\dots, s_m\}$, $\mathcal{C}=\{c\}$, $T=\{t_0,t_1\}$ with $t_0(S)=t_1(S)=\{c\}$, and $\sigma=(c)$. Let $\alpha(\mathcal{K})=\{c\}$. For $i\in [m]$, let $P_{s_i,t_1}$ be an arbitrary path in $G(Q,\mathcal{K})$ from $\psi'(1)$ to $\psi'(n)$ visiting the edge $e_i$. It is easy to see that this path always exists as $G'$ is orderable.
    Let $\pi_0$ and $\pi_1$ be two arbitrary permutations of the subjects $S$.
    Let $T$ be the exact learning tile corresponding to $(S,\mathcal{C},T, \sigma)$, $\alpha$, $G(Q,\mathcal{K})$, the paths $P_{s_i,t_1}$, and $(\pi_0,\pi_1)$.
    The tile consists of the graph $G(Q,\mathcal{K})$, with $\psi'(1)$ connected to the left wall with $m$ edges, and $\psi'(n)$ connected to the right wall with $m$ edges. 
    We thus claim that $T$ has an ordinal panel tile drawing with at most $k$ crossings if and only if $G'$ has a drawing with at most $k$ crossings.

    ``$\Rightarrow$'': This direction is trivial as the graph corresponding to $T$ contains as induced subgraph a subdivision of the graph $G'$.

    ``$\Leftarrow$'': This direction is also similar to \cref{thm:posstilehardness}: Consider a drawing of $G'$ with $\psi'(1)$ and $\psi'(n)$ on the outer face. It is now easy to connect $\psi'(1)$ to left wall with $m$ edges and $\psi'(n)$ to the right wall with $m$ edges, without introducing any new crossings.
\end{proof}

\section{Optimal ordinal panel data drawings can be obtained in polynomial time}\label{sc:polynomial}



We consider 
drawings of OPD instances (see \cref{fig:ordinapanelexample}), where subjects are represented by $x$-monotone 
\emph{subject curves}, and each test $t\in T$ is represented by a specific $x$-coordinate $x_t$. 
Further, $x_{t_i}<x_{t_j}$ for $i<j$, and for each $t\in T$: 
\begin{itemize}
    \item[(1)] all $y$-coordinates of the subject curves are distinct at $x_t$, and
    \item[(2)] the $y$-coordinate of subject curve $s$ at $x_t$ is less than the $y$-coordinate of subject curve $s'$ at $x_t$ if $t(s)\prec_{\sigma}t(s')$.
\end{itemize}
\begin{figure*}[htb]
    \centering
    \subfigure[]{ \usetikzlibrary{arrows.meta, arrows, calc}
\begin{tikzpicture}
\begin{scope}[scale=0.27]
\draw[thick, color=violet]  (0,0) rectangle (1,14);
\coordinate(A1) at(0, 2);
\coordinate(B1) at(1, 2);
\draw[thick, color=violet](A1)--(B1);
\coordinate (A) at(0, 4);
\coordinate(B) at (1,4);
\draw[thick, color=violet](A)--(B);
\coordinate (A2) at(0, 6);
\coordinate(B2) at (1,6);
\draw[thick, color=violet](A2)--(B2);
\coordinate (C) at (0, 8);
\coordinate (D) at (1, 8);
\draw[thick, color=violet] (C)--(D);
\coordinate (A3) at(0, 10);
\coordinate(B3) at (1,10);
\draw[thick, color=violet](A3)--(B3);
\coordinate (C1) at (0, 12);
\coordinate (D1) at (1, 12);
\draw[thick, color=violet] (C1)--(D1);
\node(v1) at (-0.5,9.5) {};
\node(v2) at (-0.5,5.5) {};
\draw[thick, color=violet, {latex[scale=1]}-](v1)--(v2);

\node[draw=none, text=violet, left] at (-0.5,7.5){\Large $\sigma$};

\draw[thick, color=violet]  (6,0) rectangle (7,14);
\draw[thick, color=violet] (6,2)--(7,2);
\draw[thick, color=violet] (6,4)--(7,4);
\draw[thick, color=violet] (6,6)--(7,6);
\draw[thick, color=violet] (6,8)--(7,8);
\draw[thick, color=violet](6,10)--(7,10);
\draw[thick, color=violet](6,12)--(7,12);
\node[circle, fill=cyan!20!green,  inner sep=1pt](s1) at (6.5,1) {};
\node[circle, fill=cyan!20!green,  inner sep=1pt](s2) at (6.5,3) {};
\node[circle, fill=cyan!20!green,  inner sep=1pt](s3) at (6.5,5) {};
\node[circle, fill=cyan!20!green,  inner sep=1pt](s4) at (6.5,6.5) {};
\node[circle, fill=cyan!20!green,  inner sep=1pt](s5) at (6.5,7) {};
\node[circle, fill=cyan!20!green,  inner sep=1pt](s6) at (6.5,9) {};
\node[circle, fill=cyan!20!green,  inner sep=1pt](s7) at (6.5,11) {};
\node[circle, fill=cyan!20!green,  inner sep=1pt](s8) at (6.5,12.5) {};
\node[circle, fill=cyan!20!green,  inner sep=1pt](s9) at (6.5,13) {};
\node(v3) at (6.5,0) {};
\node(v4) at (6.5,-2.5) {};
\draw[thick, color=cyan!20!green, -{latex[scale=0.8]}](v3)--(v4);
\node[draw=none, text=cyan!20!green, right] at (6.5, -1){ $\pi_1$};
\node[draw=none, text= black, above] at (6.5, 14){ $x_1$};

\draw[thick, color=violet]  (8,0) rectangle (9,14);
\draw[thick, color=violet] (8,2)--(9,2);
\draw[thick, color=violet] (8,4)--(9,4);
\draw[thick, color=violet] (8,6)--(9,6);
\draw[thick, color=violet] (8,8)--(9,8);
\draw[thick, color=violet](8,10)--(9,10);
\draw[thick, color=violet](8,12)--(9,12);
\node [circle, fill=cyan!20!green,  inner sep=1pt](t1) at (8.5,1) {};
\node [circle, fill=cyan!20!green,  inner sep=1pt](t2) at (8.5,3) {};
\node [circle, fill=cyan!20!green,  inner sep=1pt](t3) at (8.5,5) {};
\node [circle, fill=cyan!20!green,  inner sep=1pt](t4) at (8.5,6.5) {};
\node [circle, fill=cyan!20!green,  inner sep=1pt](t5) at (8.5,7) {};
\node [circle, fill=cyan!20!green,  inner sep=1pt](t6) at (8.5,8.5) {};
\node [circle, fill=cyan!20!green,  inner sep=1pt](t7) at (8.5,9) {};
\node [circle, fill=cyan!20!green,  inner sep=1pt](t8) at (8.5,11) {};
\node [circle, fill=cyan!20!green,  inner sep=1pt] (t9)at (8.5,13) {};
\node(v5) at (8.5,0) {};
\node(v6) at (8.5,-2.5) {};
\draw[thick, color=cyan!20!green, -{latex[scale=0.8]}](v5)--(v6);
\node[draw=none, text=cyan!20!green, right] at (8.5, -1){ $\pi_2$};
\node[draw=none, text= black, above] at (8.5, 14){$x_2$};

\draw[thick, color=violet]  (10,0) rectangle (11,14);
\draw[thick, color=violet] (10,2)--(11,2);
\draw[thick, color=violet] (10,4)--(11,4);
\draw[thick, color=violet] (10,6)--(11,6);
\draw[thick, color=violet] (10,8)--(11,8);
\draw[thick, color=violet](10,10)--(11,10);
\draw[thick, color=violet](10,12)--(11,12);
\node [circle, fill=cyan!20!green,  inner sep=1pt](w1) at (10.5,2.5) {};
\node [circle, fill=cyan!20!green,  inner sep=1pt](w2) at (10.5,3) {};
\node [circle, fill=cyan!20!green,  inner sep=1pt](w3) at (10.5,5) {};
\node [circle, fill=cyan!20!green,  inner sep=1pt](w4) at (10.5,7) {};
\node [circle, fill=cyan!20!green,  inner sep=1pt](w5) at (10.5,8.5) {};
\node [circle, fill=cyan!20!green,  inner sep=1pt](w6) at (10.5,9) {};
\node [circle, fill=cyan!20!green,  inner sep=1pt](w7) at (10.5,10.5) {};
\node [circle, fill=cyan!20!green,  inner sep=1pt](w8) at (10.5,11) {};
\node [circle, fill=cyan!20!green,  inner sep=1pt](w9) at (10.5,13) {};
\node(v7) at (10.5,0) {};
\node(v8) at (10.5,-2.5) {};
\draw[thick, color=cyan!20!green, -{latex[scale=0.8]}](v7)--(v8);
\node[draw=none, text=cyan!20!green, right] at (10.5, -1){ $\pi_3$};
\node[draw=none, text= black, above] at (10.5, 14){$x_3$};

\draw[thick, color=violet]  (12,0) rectangle (13,14);
\draw[thick, color=violet] (12,2)--(13,2);
\draw[thick, color=violet] (12,4)--(13,4);
\draw[thick, color=violet] (12,6)--(13,6);
\draw[thick, color=violet] (12,8)--(13,8);
\draw[thick, color=violet](12,10)--(13,10);
\draw[thick, color=violet](12,12)--(13,12);
\node [circle, fill=cyan!20!green,  inner sep=1pt](u1) at (12.5,1) {};
\node [circle, fill=cyan!20!green,  inner sep=1pt](u2) at (12.5,3) {};
\node [circle, fill=cyan!20!green,  inner sep=1pt](u3) at (12.5,4.5) {};
\node [circle, fill=cyan!20!green,  inner sep=1pt](u4) at (12.5,5) {};
\node [circle, fill=cyan!20!green,  inner sep=1pt](u5) at (12.5,7) {};
\node [circle, fill=cyan!20!green,  inner sep=1pt](u6) at (12.5,9) {};
\node [circle, fill=cyan!20!green,  inner sep=1pt](u7) at (12.5,10.5) {};
\node [circle, fill=cyan!20!green,  inner sep=1pt](u8) at (12.5,11) {};
\node [circle, fill=cyan!20!green,  inner sep=1pt](u9) at (12.5,13) {};
\node(v9) at (12.5,0) {};
\node(v10) at (12.5,-2.5) {};
\draw[thick, color=cyan!20!green, -{latex[scale=0.8]}](v9)--(v10);
\node[draw=none, text=cyan!20!green, right] at (12.5, -1){ $\pi_4$};
\node[draw=none, text= black, above] at (12.5, 14){ $x_4$};
\draw [thick, color=cyan!20!green] (s1) edge (t1);
\draw [thick, color=cyan!20!green] (t1) edge (w2);
\draw [thick, color=cyan!20!green] (w2) edge (u3);
\draw [thick, color=cyan!20!green] (s2) edge (t2);
\draw [thick, color=cyan!20!green] (t2) edge (w1);
\draw [thick, color=cyan!20!green] (w1) edge (u1);
\draw [thick, color=cyan!20!green] (s3) edge (t3);
\draw [thick, color=cyan!20!green] (t3) edge (w4);
\draw [thick, color=cyan!20!green] (w4) edge (u5);
\draw [thick, color=cyan!20!green] (s4) edge (t4);
\draw [thick, color=cyan!20!green] (t4) edge (w3);
\draw [thick, color=cyan!20!green] (w3) edge (u2);
\draw [thick, color=cyan!20!green] (s5) edge (t5);
\draw [thick, color=cyan!20!green] (t5) edge (w5);
\draw [thick, color=cyan!20!green] (w5) edge (u6);
\draw [thick, color=cyan!20!green] (s6) edge (t8);
\draw [thick, color=cyan!20!green] (t8) edge (w9);
\draw [thick, color=cyan!20!green] (w9) edge (u8);
\draw [thick, color=cyan!20!green] (s7) edge (t9);
\draw [thick, color=cyan!20!green] (t9) edge (w7);
\draw [thick, color=cyan!20!green] (w7) edge (u7);
\draw [thick, color=cyan!20!green] (s8) edge (t6);
\draw [thick, color=cyan!20!green] (t6) edge (w8);
\draw [thick, color=cyan!20!green] (w8) edge (u9);
\draw [thick, color=cyan!20!green] (s9) edge (t7);
\draw [thick, color=cyan!20!green] (t7) edge (w6);
\draw [thick, color=cyan!20!green] (w6) edge (u4);
\end{scope}
\end{tikzpicture}}\hspace{0.15\textwidth}%
    \subfigure[]{\usetikzlibrary{arrows.meta, arrows, calc}
\begin{tikzpicture}
\begin{scope}[scale=0.27]
\draw[thick, color=violet]  (6,0) rectangle (7,14);
\draw[thick, color=violet] (6,2)--(7,2);
\draw[thick, color=violet] (6,4)--(7,4);
\draw[thick, color=violet] (6,6)--(7,6);
\draw[thick, color=violet] (6,8)--(7,8);
\draw[thick, color=violet](6,10)--(7,10);
\draw[thick, color=violet](6,12)--(7,12);
\node[circle, fill=cyan!20!green,  inner sep=1pt](s1) at (6.5,1) {};
\node[circle, fill=cyan!20!green,  inner sep=1pt](s2) at (6.5,3) {};
\node[circle, fill=cyan!20!green,  inner sep=1pt](s3) at (6.5,5) {};
\node[circle, fill=cyan!20!green,  inner sep=1pt](s4) at (6.5,6.5) {};
\node[circle, fill=cyan!20!green,  inner sep=1pt](s5) at (6.5,7) {};
\node[circle, fill=cyan!20!green,  inner sep=1pt](s6) at (6.5,9) {};
\node[circle, fill=cyan!20!green,  inner sep=1pt](s7) at (6.5,11) {};
\node[circle, fill=cyan!20!green,  inner sep=1pt](s8) at (6.5,12.5) {};
\node[circle, fill=cyan!20!green,  inner sep=1pt](s9) at (6.5,13) {};
\node(v3) at (6.5,0) {};
\node(v4) at (6.5,-2.5) {};
\draw[thick, color=cyan!20!green, -{latex[scale=0.8]}](v3)--(v4);
\node[draw=none, text=cyan!20!green, right] at (6.5, -1){$\pi_1$};
\node[draw=none, text= black, above] at (6.5, 14){$x_1$};

\draw[thick, color=violet]  (8,0) rectangle (9,14);
\draw[thick, color=violet] (8,2)--(9,2);
\draw[thick, color=violet] (8,4)--(9,4);
\draw[thick, color=violet] (8,6)--(9,6);
\draw[thick, color=violet] (8,8)--(9,8);
\draw[thick, color=violet](8,10)--(9,10);
\draw[thick, color=violet](8,12)--(9,12);
\node [circle, fill=cyan!20!green,  inner sep=1pt](t1) at (8.5,1) {};
\node [circle, fill=cyan!20!green,  inner sep=1pt](t2) at (8.5,3) {};
\node [circle, fill=cyan!20!green,  inner sep=1pt](t3) at (8.5,5) {};
\node [circle, fill=cyan!20!green,  inner sep=1pt](t4) at (8.5,6.5) {};
\node [circle, fill=cyan!20!green,  inner sep=1pt](t5) at (8.5,7) {};
\node [circle, fill=cyan!20!green,  inner sep=1pt](t6) at (8.5,8.5) {};
\node [circle, fill=cyan!20!green,  inner sep=1pt](t7) at (8.5,9) {};
\node [circle, fill=cyan!20!green,  inner sep=1pt](t8) at (8.5,11) {};
\node [circle, fill=cyan!20!green,  inner sep=1pt] (t9)at (8.5,13) {};
\node(v5) at (8.5,0) {};
\node(v6) at (8.5,-2.5) {};
\draw[thick, color=cyan!20!green, -{latex[scale=0.8]}](v5)--(v6);
\node[draw=none, text=cyan!20!green, right] at (8.5, -1){ $\pi_2$};
\node[draw=none, text= black, above] at (8.5, 14){ $x_2$};

\draw[thick, color=violet]  (10,0) rectangle (11,14);
\draw[thick, color=violet] (10,2)--(11,2);
\draw[thick, color=violet] (10,4)--(11,4);
\draw[thick, color=violet] (10,6)--(11,6);
\draw[thick, color=violet] (10,8)--(11,8);
\draw[thick, color=violet](10,10)--(11,10);
\draw[thick, color=violet](10,12)--(11,12);
\node [circle, fill=cyan!20!green,  inner sep=1pt](w1) at (10.5,2.5) {};
\node [circle, fill=cyan!20!green,  inner sep=1pt](w2) at (10.5,3) {};
\node [circle, fill=cyan!20!green,  inner sep=1pt](w3) at (10.5,5) {};
\node [circle, fill=cyan!20!green,  inner sep=1pt](w4) at (10.5,7) {};
\node [circle, fill=cyan!20!green,  inner sep=1pt](w5) at (10.5,8.5) {};
\node [circle, fill=cyan!20!green,  inner sep=1pt](w6) at (10.5,9) {};
\node [circle, fill=cyan!20!green,  inner sep=1pt](w7) at (10.5,10.5) {};
\node [circle, fill=cyan!20!green,  inner sep=1pt](w8) at (10.5,11) {};
\node [circle, fill=cyan!20!green,  inner sep=1pt](w9) at (10.5,13) {};
\node(v7) at (10.5,0) {};
\node(v8) at (10.5,-2.5) {};
\draw[thick, color=cyan!20!green, -{latex[scale=0.8]}](v7)--(v8);
\node[draw=none, text=cyan!20!green, right] at (10.5, -1){$\pi_3$};
\node[draw=none, text= black, above] at (10.5, 14){$x_3$};

\draw[thick, color=violet]  (12,0) rectangle (13,14);
\draw[thick, color=violet] (12,2)--(13,2);
\draw[thick, color=violet] (12,4)--(13,4);
\draw[thick, color=violet] (12,6)--(13,6);
\draw[thick, color=violet] (12,8)--(13,8);
\draw[thick, color=violet](12,10)--(13,10);
\draw[thick, color=violet](12,12)--(13,12);
\node [circle, fill=cyan!20!green,  inner sep=1pt](u1) at (12.5,1) {};
\node [circle, fill=cyan!20!green,  inner sep=1pt](u2) at (12.5,3) {};
\node [circle, fill=cyan!20!green,  inner sep=1pt](u3) at (12.5,4.5) {};
\node [circle, fill=cyan!20!green,  inner sep=1pt](u4) at (12.5,5) {};
\node [circle, fill=cyan!20!green,  inner sep=1pt](u5) at (12.5,7) {};
\node [circle, fill=cyan!20!green,  inner sep=1pt](u6) at (12.5,9) {};
\node [circle, fill=cyan!20!green,  inner sep=1pt](u7) at (12.5,10.5) {};
\node [circle, fill=cyan!20!green,  inner sep=1pt](u8) at (12.5,11) {};
\node [circle, fill=cyan!20!green,  inner sep=1pt](u9) at (12.5,13) {};
\node(v9) at (12.5,0) {};
\node(v10) at (12.5,-2.5) {};
\draw[thick, color=cyan!20!green, -{latex[scale=0.8]}](v9)--(v10);
\node[draw=none, text=cyan!20!green, right] at (12.5, -1){$\pi_4$};
\node[draw=none, text= black, above] at (12.5, 14){$x_4$};
\draw [thick, color=cyan!20!green] (s1) edge (t1);
\draw [thick, color=cyan!20!green] (t1) edge (w2);
\draw [thick, color=cyan!20!green] (w2) edge (u3);
\draw [thick, color=cyan!20!green] (s2) edge (t2);
\draw [thick, color=cyan!20!green] (t2) edge (w1);
\draw [thick, color=cyan!20!green] (w1) edge (u1);
\draw [thick, color=cyan!20!green] (s3) edge (t3);
\draw [thick, color=cyan!20!green] (t3) edge (w4);
\draw [thick, color=cyan!20!green] (w4) edge (u5);
\draw [thick, color=cyan!20!green] (s4) edge (t4);
\draw [thick, color=cyan!20!green] (t4) edge (w3);
\draw [thick, color=cyan!20!green] (w3) edge (u2);
\draw [thick, color=cyan!20!green] (s5) edge (t5);
\draw [thick, color=cyan!20!green] (t5) edge (w5);
\draw [thick, color=cyan!20!green] (w5) edge (u6);
\draw [thick, color=cyan!20!green] (s6) edge (t8);
\draw [thick, color=cyan!20!green] (t8) edge (w9);
\draw [thick, color=cyan!20!green] (w9) edge (u8);
\draw [thick, color=cyan!20!green] (s7) edge (t9);
\draw [thick, color=cyan!20!green] (t9) edge (w7);
\draw [thick, color=cyan!20!green] (w7) edge (u7);

\draw [thick, color=cyan!20!green] (s9) edge (t7);
\draw [thick, color=cyan!20!green] (t7) edge (w8);
\draw [thick, color=cyan!20!green] (w8) edge (u9);
\draw [thick, color=cyan!20!green] (s8) edge (t6);
\draw [thick, color=cyan!20!green] (t6) edge (w6);
\draw [thick, color=cyan!20!green] (u4) edge (w6);
\end{scope}
\end{tikzpicture}}
    
    \caption{The combinatorial and topological layout of
    an ordinal panel data instance. The 7 categories are 
    shown as violet rectangles ordered vertically by 
    $\sigma$. The 9 subjects are assigned categories for 
    each of the 4 tests. The subjects are drawn as green $x$-monotone 
    curves. 
    If subject $s$ is assigned 
    category $C$ for test $t_i$, this is depicted by the 
    subject curve of $s$ passing through the rectangle 
    corresponding to $C$ at $x_i$. The vertical orderings 
    of subjects at $x_i$ are labeled by $\pi_i$ and 
    highlighted by green dots, and form a combinatorial 
    layout. Note that figure (a) shows a layout that is not optimal in terms of a number of crossings, while figure (b) shows an optimal layout that is achieved by swapping the order of the two subjects in the topmost category on the first test.}
    \label{fig:ordinapanelexample}
\end{figure*}
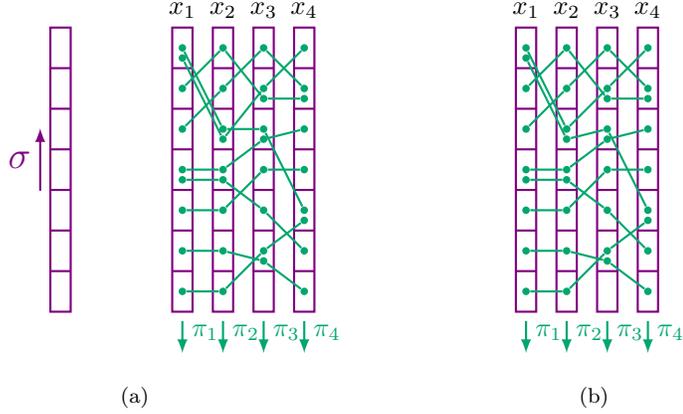

With this definition the subjects assigned to a single category at test $t$ appear consecutively along the vertical line at $x_t$.
We consider the number of crossings in such layouts, that is, the number of crossings between the subject-curves. If we are interested in the minimization of crossings in such layouts, we can determine them combinatorially by considering the vertical orderings of the subjects curves at each $x_t$. Two subject curves cross between $x_{t_i}$ and $x_{t_{i+1}}$ if and only if their vertical order is swapped.
That is why we restate the layout of an OPD instance in a combinatorial way, only representing them by the vertical orders of subject curves along the $x$-coordinates of tests.
The number of crossings of such combinatorial layout $\pi_1,\pi_2,\dots,\pi_m$ can then be computed as the number of $t_i\in T,s,s'\in T$ such that $s\prec_{\pi_i}s'$ but $s\succ_{\pi_{i+1}}s'$. In that case the subject curve for $s$ is below $s'$ for $t_i$ but above $s'$ for $t_{i+1}$, and we say that there is a \emph{crossing involving} $s$ and $s'$ between $t_i$ and $t_{i+1}$.

From a combinatorial layout with the
minimum number of crossings 
we can obtain a topological layout with the 
minimum number of crossings, and vice versa. 
Thus, it is enough to consider combinatorial layouts. 

Jerebic et al.~\cite{JerebicKajzerBokalOpda} observe that
layouts of  ordinal panel data instances can also be seen 
as a graph that belongs to the family of tiled graphs, 
while the drawings induced by a sequence of tests induce 
a tile drawing. This also leads to a polynomial-time 
algorithm to find a combinatorial layout of an OPD instance 
with the minimum number $pcr(S,\mathcal{C},T,\sigma)$ of 
crossings. In this case, we call the layout \emph{optimal}. 
We re-state their results below with detailed proofs 
included, as they were not given in the short conference
paper. The main result of the section is given in 
\cref{theorem2}, which determines an optimal combinatorial layout of an OPD instance in polynomial time.

The basic definitions of tiles have already been given 
in \cref{sec:graphmodels}. Here we add some more 
definitions and results from \cite{pinontoan2003crossing}, 
which will be needed to prove the main theorem. 

\begin{definition}[\hspace{-0.01pt}\cite{pinontoan2003crossing}]
A path $P$ in $G$ is a \DEF{traversing path}
in a tile $(G,L,R)$ if there exist indices 
$j\in\{1,\ldots,|L|\}$ and $k\in\{1,\ldots,|R|\}$ 
such that $P$ is a path from $\lambda_j$ to $\rho_k$ 
and $\lambda_j$ and $\rho_k$ are the only wall vertices 
that lie on $P$. A pair of disjoint traversing paths 
$\{P,P'\}$ is \DEF{aligned} if $j<j'\iff k<k'$, 
and \DEF{twisted} otherwise.
\end{definition}

\begin{proposition}[\hspace{-0.01pt}\cite{pinontoan2003crossing}]
\label{propPR}
Disjointness of the traversing paths of a tile 
$(G,L,R)$ in a twisted pair $\{P,P'\}$  implies 
that some edge of $P$ must cross some edge of 
$P'$ in any tile drawing of $(G,L,R)$.
\end{proposition}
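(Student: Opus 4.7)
The plan is a Jordan-curve argument. Fix a twisted pair $\{P,P'\}$; without loss of generality, $P$ goes from $\lambda_j$ to $\rho_k$ and $P'$ from $\lambda_{j'}$ to $\rho_{k'}$ with $j<j'$ and $k>k'$. Consider any tile drawing of $(G,L,R)$; after a harmless reduction to a good drawing (which preserves the number of crossings), the path $P$ is realized by a simple arc $\gamma_P$ inside $[0,1]\times[0,1]$ that meets the boundary of the square only at $\lambda_j$ and $\rho_k$. The first key step is to close $\gamma_P$ to a simple closed curve $C$ by concatenating it with three boundary pieces of the square: (i) the segment of the left wall from $\lambda_j$ up to the corner $(0,1)$, (ii) the top edge from $(0,1)$ to $(1,1)$, and (iii) the segment of the right wall from $(1,1)$ down to $\rho_k$. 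Since $\gamma_P$ avoids the boundary in its interior, $C$ is simple, and by the Jordan curve theorem it bounds an open region $U$ that lies immediately above $\gamma_P$ in the square.

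Next I would locate the endpoints of $P'$ relative to $C$. By strict monotonicity of the $y$-coordinates on each wall and $j'>j$, the vertex $\lambda_{j'}$ lies on the left wall strictly below $\lambda_j$, so it is not on $C$; any sufficiently small half-disk around $\lambda_{j'}$ meeting $(0,1)\times(0,1)$ lies outside $\overline{U}$. Symmetrically, $k'<k$ places $\rho_{k'}$ strictly above $\rho_k$ on piece~(iii) of $C$, and a small half-disk around $\rho_{k'}$ in the open square lies inside $U$. Tracing $P'$ from $\lambda_{j'}$ to $\rho_{k'}$ as a continuous arc in $[0,1]\times[0,1]$ therefore leaves the complement of $\overline{U}$ and enters $U$, so its image must meet $C$ in the open square.

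Finally, the definition of a tile drawing forces $P'$ to meet the boundary of the square only at $\lambda_{j'}$ and $\rho_{k'}$, neither of which lies on the segments (i)--(iii). Hence every intersection of the interior of $P'$ with $C$ must in fact lie on $\gamma_P$. Because $P$ and $P'$ are vertex-disjoint, such an intersection cannot be a shared vertex, so it is a proper crossing between the relative interior of an edge of $P$ and that of an edge of $P'$, as required. The subtle point I would take care to verify is the reduction to a good drawing ensuring that $\gamma_P$ is a simple arc (so that the Jordan curve theorem applies cleanly); this is a standard step in crossing-number arguments and, crucially, does not change the crossing count, so it does not weaken the conclusion.
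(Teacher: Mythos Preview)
The paper does not supply its own proof of this proposition; it is quoted from Pinontoan and Richter and used as a black box in the subsequent lemmas. Your Jordan-curve argument is the standard route and is essentially sound: close the drawn image of $P$ along the upper boundary of the unit square to a simple closed curve, observe that $\lambda_{j'}$ lies outside the enclosed region while a neighbourhood of $\rho_{k'}$ in the open square lies inside, and conclude that the interior of $P'$ must meet the image of $P$.

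The one step that does not work as written is your appeal to a ``reduction to a good drawing'' in order to make $\gamma_P$ a simple arc. First, a good drawing still permits non-adjacent edges of $P$ to cross one another, so $\gamma_P$ need not be simple even then. Second, and more importantly, the proposition is a statement about an \emph{arbitrary} tile drawing: replacing the given drawing by another one with the same or smaller total crossing count does not establish that the \emph{original} drawing has a $P$--$P'$ crossing, because such reductions can change which pairs of edges cross. The clean fix is to leave the drawing untouched and instead extract, inside the image of $\gamma_P$, a simple arc $\alpha$ from $\lambda_j$ to $\rho_k$ (a standard fact: the image of any path in the plane between two distinct points contains an arc between them). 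Close $\alpha$ along the boundary to form your Jordan curve $C$ and run the separation argument verbatim; the crossing you produce lies on $\alpha\subseteq\mathrm{image}(\gamma_P)$, hence on some edge of $P$.
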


\begin{definition}[\hspace{-0.01pt}\cite{JerebicKajzerBokalOpda}]
Let $(S, \mathcal{C}, T,\sigma)$ be an OPD instance and $i=1,\ldots,m$. We say that a subject $s\in S$:
\begin{wlist}
\item[\textit{(i)}] is below subject $s'$ at test $t_i$, if $t_i(s)\prec_{\sigma}t_i(s')$,
\item[\textit{(ii)}]is above subject $s'$ at test $t_i$, if $t_i(s)\succ_{\sigma}t_i(s')$,
\item[\textit{(iii)}]is level with subject $s'$ at test $t_i$, if $t_i(s)=t_i(s')$,
\item[\textit{(iv)}] overtakes subject $s'$ at test $t_i$, if $s$ is below $s'$ at $t_{i-1}$, but is above $s'$ at $t_i$,
\item[\textit{(v)}] breaks away from subject $s'$ at test $t_i$, if $s$ is level with $s'$ at $t_i$, but is above $s'$ at $t_{i+1}$,
\item[\textit{(vi)}] catches up with subject $s'$ at test $t_i$, if $s$ is below $s'$ at $t_{i-1}$, but is level with $s'$ at $t_i$.
\end{wlist}
\end{definition}

\begin{lemma}[\hspace{-0.01pt}\cite{JerebicKajzerBokalOpda}]
\label{lemma1S} 
 Let $(S, \mathcal{C}, T, \sigma)$ be an OPD instance and $1\leq i\leq m$.
Suppose a subject $s$ overtakes $s'$ at test $t_i$. Then, for every combinatorial layout $\pi_0,\pi_1,\dots,\pi_m$, there is a crossing involving $s$ and $s'$ in any tile drawing of $T_O(\pi_{i-1},\pi_i)$.  
\end{lemma}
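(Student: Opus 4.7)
The plan is to reduce the statement directly to \cref{propPR}. I will show that in the ordinal panel tile $T_O(\pi_{i-1},\pi_i)$, the edges corresponding to $s$ and $s'$ form a disjoint twisted pair of traversing paths, which forces them to cross.

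First, I would unpack the consequences of the layout constraint together with the overtaking hypothesis. Since $s$ is below $s'$ at $t_{i-1}$, the relation $t_{i-1}(s)\prec_{\sigma}t_{i-1}(s')$ holds, and the defining property of a combinatorial layout forces $s\prec_{\pi_{i-1}}s'$. Analogously, since $s$ is above $s'$ at $t_i$, we get $s'\prec_{\pi_i}s$. By the construction of $G_O(t_{i-1},t_i)$ and the tile $T_O(\pi_{i-1},\pi_i)$, the only edge incident to the wall vertex $(t_{i-1},s)$ is the edge $P_s=(t_{i-1},s)(t_i,s)$, and similarly $P_{s'}=(t_{i-1},s')(t_i,s')$. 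These are traversing paths (of length one), and, as they involve four distinct subjects' wall vertices, they are vertex-disjoint.

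Next, I would verify that the pair $\{P_s,P_{s'}\}$ is twisted. Let $j$ and $j'$ be the indices of $(t_{i-1},s)$ and $(t_{i-1},s')$ in the left wall $L$, and let $k$ and $k'$ be the indices of $(t_i,s)$ and $(t_i,s')$ in the right wall $R$. Since the walls list vertices with strictly decreasing $y$-coordinates and the $y$-coordinate of a wall vertex corresponds to the position of the associated subject in $\pi_{i-1}$ respectively $\pi_i$ (with smaller $\prec_\pi$ position meaning smaller $y$), the relation $s\prec_{\pi_{i-1}}s'$ gives $j>j'$, while $s'\prec_{\pi_i}s$ gives $k<k'$. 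Hence $j<j'$ fails but $k<k'$ holds, so the biconditional characterizing aligned pairs is violated and $\{P_s,P_{s'}\}$ is twisted.

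Finally, I would invoke \cref{propPR}: a disjoint twisted pair of traversing paths must contain an edge of one path that crosses an edge of the other in every tile drawing of $T_O(\pi_{i-1},\pi_i)$. Since each of $P_s$ and $P_{s'}$ consists of a single edge, this crossing is precisely a crossing involving $s$ and $s'$, completing the proof. The only mild obstacle is to double-check the indexing convention that relates $\prec_\pi$ to wall indices so that the twistedness conclusion is unambiguous; once this is settled, the rest is a direct application of \cref{propPR}.
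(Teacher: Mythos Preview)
Your proposal is correct and follows essentially the same approach as the paper: identify the two subject edges as a disjoint pair of traversing paths, show they are twisted using the overtaking hypothesis together with the combinatorial-layout constraint, and invoke \cref{propPR}. The only difference is cosmetic: the paper's convention is that the wall sequence $L$ is ordered \emph{by} $\pi_{i-1}$, so $s\prec_{\pi_{i-1}}s'$ gives $j<j'$ (and likewise $k>k'$), the reverse of your inequalities; since you already flagged the index convention as the one point to double-check, and twistedness follows either way, this does not affect the argument.
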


\begin{proof}
As already observed, each vertex of $T_O(\pi_{i-1},\pi_i)=(G_O(t_{i-1},t_i),L,R)$ is either a left or right wall vertex and $|S|=|L|=|R|$. Let $L=(\lambda_1,\ldots,\lambda_{|S|})$ stand for the sequence of vertices $\dset{(t_{i-1},s)}{s\in S}$ induced by the linear ordering on $t_{i-1}(S)\subseteq \mathcal{C}$. Vertices of $S(t_{i-1},C)$ are ordered by the permutation $\pi_{i-1}$. Similarly, let $R=(\rho_1,\ldots,\rho_{|S|})$ stand for the sequence of vertices $\dset{(t_i,s)}{s\in S}$ 
induced by the linear ordering on $t_i(S)\subseteq \mathcal{C}$. Vertices of $S(t_i,C)$ are ordered by the permutation $\pi_{i}$. The vertices $(t_{i-1},s)$, $(t_{i-1},s')$, $(t_i,s)$, $(t_i,s')$ can therefore be consecutively denoted as $\lambda_j$, $\lambda_{j'}$, $\rho_k$, $\rho_{k'}$ for some $j,j',k,k'\in\{1,\ldots,|S|\}$. The edge $(t_{i-1},s)(t_i,s)$ is a traversing
path from $\lambda_j$ to $\rho_k$ (path $P$) and the edge $(t_{i-1},s')(t_i,s')$ is a traversing path from $\lambda_{j'}$ to $\rho_{k'}$ (path $P'$). Since subject $s$ overtakes $s'$ at test $t_i$, we have $\lambda_j<_{\sigma}\lambda_{j'}$ (implying $j<j'$) and $\rho_k>_{\sigma}\rho_{k'}$ (implying $k>k'$) for every combinatorial layout of an OPD instance $(S,\mathcal{C},T,\sigma)$ (regardless of the arrangements within $S(t_{i-1},C)$ and $S(t_i,C)$). Hence, $P$ and $P'$ are a twisted pair of disjoint traversing paths (edges). By \cref{propPR}, they cross in any tile drawing of $T_O(\pi_{i-1},\pi_i)$.   
\end{proof}

The crossings characterized by \cref{lemma1S} are called \textit{strongly forced crossings}, in contrast to the \textit{weakly forced crossings} characterized as follows:

\begin{lemma}[\hspace{-0.01pt}\cite{JerebicKajzerBokalOpda}]
\label{lemma2}
 Let $(S, \mathcal{C}, T, \sigma)$ be an OPD instance.
Suppose the following three conditions hold:
\begin{wlist}
\item[(i)] a subject $s$ catches up with $s'$ at test $t_i$,
\item[(ii)] $s$ breaks away from $s'$ at test $t_j$, and
\item[(iii)] for $l=i,\dots,j$, $s$ is level with $s'$ at test $t_l$.
\end{wlist}
Then, for each combinatorial layout $\pi_0,\pi_1,\dots,\pi_m$, there is some $k \in \{i,\ldots,j+1\}$, such that there is a crossing involving $s$ and $s'$ between $t_{k-1}$ and $t_{k}$ in any tile drawing of $T_O(\pi_{i-1},\ldots,\pi_{j+1})$. 
\end{lemma}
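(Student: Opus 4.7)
My plan is to argue directly from the combinatorial constraints on the orderings $\pi_{i-1}$ and $\pi_{j+1}$, then invoke \cref{propPR} at the unique step where the order swaps. The key observation is that the ``level'' condition in hypothesis (iii) does \emph{not} force a particular order between $s$ and $s'$ in the permutations $\pi_i,\ldots,\pi_j$ (since they share a category), but the boundary permutations $\pi_{i-1}$ and $\pi_{j+1}$ \emph{are} forced to place them in opposite orders.

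First, I would set up the two endpoint inequalities. Since $s$ catches up with $s'$ at $t_i$, we have $t_{i-1}(s)\prec_\sigma t_{i-1}(s')$, so the definition of a combinatorial layout forces $s\prec_{\pi_{i-1}}s'$. Symmetrically, since $s$ breaks away from $s'$ at $t_j$, we have $t_{j+1}(s)\succ_\sigma t_{j+1}(s')$, which forces $s'\prec_{\pi_{j+1}}s$. Consequently, looking at the sequence $\pi_{i-1},\pi_i,\ldots,\pi_{j+1}$ and comparing the relative order of $s$ and $s'$ at each step, there must exist an index $k\in\{i,\ldots,j+1\}$ at which the order flips: $s\prec_{\pi_{k-1}}s'$ but $s'\prec_{\pi_k}s$.

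Next, I would transfer this combinatorial flip into the geometric setting of \cref{propPR}. Inside the tile $T_O(\pi_{k-1},\pi_k)=(G_O(t_{k-1},t_k),L,R)$, every vertex is either a left or a right wall vertex, and the only edges are the pairs $(t_{k-1},s'')(t_k,s'')$ for $s''\in S$. In particular, $e_s:=(t_{k-1},s)(t_k,s)$ and $e_{s'}:=(t_{k-1},s')(t_k,s')$ are vertex-disjoint traversing paths of length one. Writing $L=(\lambda_1,\ldots,\lambda_{|S|})$ and $R=(\rho_1,\ldots,\rho_{|S|})$ in the orders induced by $\pi_{k-1}$ and $\pi_k$, the condition $s\prec_{\pi_{k-1}}s'$ places $(t_{k-1},s)=\lambda_{j_1}$ above $(t_{k-1},s')=\lambda_{j_2}$ with $j_1<j_2$, while $s'\prec_{\pi_k}s$ places $(t_k,s)=\rho_{k_1}$ and $(t_k,s')=\rho_{k_2}$ with $k_1>k_2$. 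Hence $\{e_s,e_{s'}\}$ is a twisted pair of disjoint traversing paths, and by \cref{propPR} they must cross in every tile drawing of $T_O(\pi_{k-1},\pi_k)$. Since $T_O(\pi_{k-1},\pi_k)$ is a sub-tile of $T_O(\pi_{i-1},\ldots,\pi_{j+1})$, this crossing appears in any tile drawing of the latter, which is what the statement demands.

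The main obstacle is purely conceptual rather than technical: one must be careful that condition (iii) allows $\pi_i,\ldots,\pi_j$ to order $s$ and $s'$ arbitrarily within their common category, so the flip index $k$ need not be unique nor lie at a specific ``boundary'' test. The pigeonhole step on the sequence of relative orders handles this cleanly, and once the flip index $k$ is fixed, the rest is a direct invocation of \cref{propPR} exactly as in the proof of \cref{lemma1S}.
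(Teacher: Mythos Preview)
Your argument is correct and reaches the same conclusion as the paper, but by a somewhat different route. The paper applies \cref{propPR} once, globally: it exhibits the full traversing paths $P=(t_{i-1},s)(t_i,s)\cdots(t_{j+1},s)$ and $P'=(t_{i-1},s')(t_i,s')\cdots(t_{j+1},s')$ inside the \emph{joined} tile $T_O(\pi_{i-1},\ldots,\pi_{j+1})$, observes that conditions (i) and (ii) make this pair twisted at the outer walls of the join, and concludes that some edge of $P$ crosses some edge of $P'$. You instead localize first---using pigeonhole on the relative order of $s,s'$ along $\pi_{i-1},\ldots,\pi_{j+1}$ to pin down a flip index $k$---and then apply \cref{propPR} to the length-one paths inside the single sub-tile $T_O(\pi_{k-1},\pi_k)$. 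Your route has the virtue of naming $k$ explicitly, which the paper's proof leaves implicit.

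One small caveat on your final sentence: the inference ``since $T_O(\pi_{k-1},\pi_k)$ is a sub-tile, this crossing appears in any tile drawing of the join'' is not automatic. In an arbitrary tile drawing of the joined tile the intermediate wall vertices become \emph{internal} vertices with no prescribed $y$-order, so the restriction to the $k$-th strip need not be a tile drawing of $T_O(\pi_{k-1},\pi_k)$, and the two single edges could in principle avoid each other there while the paths cross elsewhere. This is harmless for ordinal panel drawings (which by definition restrict to tile drawings on each sub-tile), and for general tile drawings it is patched by reverting to the paper's global twisted-pair argument on the full paths $P,P'$---which your two endpoint observations already set up.
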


\begin{proof}
According to the assumptions, the following applies to $s$ and $s'$:
\begin{wlist}
\item[\textit{(i)}] $t_{i-1}(s)<_{\sigma}t_{i-1}(s')$ and $t_i(s)=t_i(s')$,
\item[\textit{(ii)}] $t_j(s)=t_j(s')$ and $t_{j+1}(s)>_{\sigma}t_{j+1}(s')$,
\item[\textit{(iii)}] for $l=i,\ldots,j$, $t_l(s)=t_l(s')$.
\end{wlist}
Let $T_S(\pi_{i-1},\ldots,\pi_{j+1})$ with $L=(\lambda_1,\ldots,\lambda_{|S|})$ and $R=(\rho_1,\ldots,\rho_{|S|})$ be the join of compatible tiles $T_S(\pi_{i-1},\pi_i),\ldots,T_S(\pi_j,\pi_{j+1})$ and let the vertices $(t_{i-1},s)$, $(t_{i-1},s')$, $(t_{j+1},s)$, $(t_{j+1},s')$ be consecutively denoted as $\lambda_n$, $\lambda_{n'}$, $\rho_m$, $\rho_{m'}$ for some $n,n',m,m'\in\{1,\ldots,|S|\}$. Then, the paths 
$$P:\lambda_n=(t_{i-1},s)(t_i,s)\ldots (t_{j+1},s)=\rho_m\;\; {\rm and}$$
$$P':\lambda_{n'}=(t_{i-1},s')(t_i,s')\ldots (t_{j+1},s')=\rho_{m'}$$ are disjoint traversing paths. Moreover, $\lambda_n<_{\sigma}\lambda_{n'}$ (implying $n<n'$) and $\rho_m>_{\sigma}\rho_{m'}$ (implying $m>m'$). Hence, $P$ and $P'$ are a twisted pair of disjoint traversing paths. By \cref{propPR}, some edge of $P$ must cross some edge of $P'$ in any tile drawing of $T_O(\pi_{i-1},\ldots,\pi_{j+1})$. 
\end{proof}

Strongly and weakly forced crossings cannot be avoided. However, there can be more crossings in a combinatorial layout, but we show that they can always be avoided. We characterize them as follows. 
\begin{definition}
    Let $(S,\mathcal{C}, T, \sigma)$ be an OPD instance where $S=\lset{s_1}{s_n}$, $\mathcal{C}=\lset{C_1}{C_k}$, with a linear ordering $\sigma$ and $T=\lset{t_0}{t_m}$. Let $\pi_0, \ldots , \pi_m$, respectively, determine sequences of subjects $s_1,\ldots s_n$ on tests $t_0, \ldots , t_m$ respectively. 
    We say that a crossing between edges $(s,t_i)(s, t_{i+1})$ and $(s',t_i)(s',t_{i+1})$, where $s, s' \in S$ and $i \in \lset{0}{m-1}$ is \textbf{forward redundant}, if either $t_i(s)=t_i(s')$, $t_{i+1}(s)=t_{i+1}(s')$ and $s$ is before $s'$ in the sequence $\pi_i$ but $s$ is after $s'$ in the sequence $\pi_{i+1}$ (or vice versa), or if $t_{i}(s)\succ_{\sigma}t_{i}(s')$, $t_{i+1}(s)=t_{i+1}(s')$ and $s$ is before $s'$ in sequence $\pi_{i+1}$. 

    We say that a crossing between edges $(s,t_i)(s, t_{i+1})$ and $(s',t_i)(s',t_{i+1})$, where $s, s' \in S$ and $i \in \lset{0}{m-1}$ is \textbf{backward redundant}, if $t_{i+1}(s)\succ_{\sigma}t_{i+1}(s')$, $t_j(s)=t_j(s')$ for every $j\leq i$ and $s$ is before $s'$ in sequence $\pi_{i}$ or vice versa.

    If a crossing between edges $(s,t_i)(s, t_{i+1})$ and $(s',t_i)(s',t_{i+1})$ can be resolved by eliminating a forward (backward) redundant crossing between edges $(s,t_j)(s, t_{j+1})$ and $(s',t_j)(s',t_{j+1})$, where $j < i$, then such crossing is \textbf{forward (backward) induced}.  
\end{definition}

The following lemma establishes that once we have removed every forward and backward redundant crossing, the remaining crossings are always strongly or weakly forced.
\begin{lemma}\label{lm:redundant}
     Let $(S,\mathcal{C}, T, \sigma)$ be an OPD instance where $S=\lset{s_1}{s_n}$, $\mathcal{C}=\lset{C_1}{C_k}$, with a linear ordering $\sigma$ and $T=\lset{t_0}{t_m}$. Let $\pi_0, \ldots , \pi_m$ determine sequences of subjects $s_1,\ldots, s_n$ on tests $t_0, \ldots , t_m$.  If a crossing between edges $(s,t_i)(s, t_{i+1})$ and $(s',t_i)(s',t_{i+1})$, where $s, s' \in S$ and $i \in \lset{0}{m-1}$ is not forward or backward redundant nor forward (backward) induced, it is strongly or weakly forced. 
\end{lemma}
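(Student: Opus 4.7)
The plan is a case analysis on the relative $\sigma$-orderings of the categories of $s$ and $s'$ at the two tests $t_i$ and $t_{i+1}$. Without loss of generality I assume $s\prec_{\pi_i}s'$ and $s\succ_{\pi_{i+1}}s'$, so the crossing in question sits between $t_i$ and $t_{i+1}$. The combinatorial-layout condition of an OPD instance then excludes $t_i(s)\succ_\sigma t_i(s')$ (which would force $s\succ_{\pi_i}s'$) and $t_{i+1}(s)\prec_\sigma t_{i+1}(s')$, leaving four configurations of $(t_i,t_{i+1})$ to discuss.

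The overtake configuration $t_i(s)\prec_\sigma t_i(s')$, $t_{i+1}(s)\succ_\sigma t_{i+1}(s')$ is strongly forced by \cref{lemma1S}. The two configurations with $t_{i+1}(s)=t_{i+1}(s')$ directly match clause~(1) or clause~(2) of the forward-redundant definition (the latter after swapping the roles of $s$ and $s'$), so the crossing is forward redundant in both.

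The remaining configuration $t_i(s)=t_i(s')$ and $t_{i+1}(s)\succ_\sigma t_{i+1}(s')$ is the \emph{break-away} case and requires looking into the past. I would let $k$ be the largest index in $\{0,\dots,i-1\}$ with $t_k(s)\neq t_k(s')$, or undefined if no such index exists. If $k$ is undefined, then $s$ and $s'$ are level at every test up to $t_i$ with $s\prec_{\pi_i}s'$, so the crossing fits the definition of a backward redundant crossing. If $k$ exists and $t_k(s)\prec_\sigma t_k(s')$, then $s$ catches up with $s'$ at $t_{k+1}$, they remain level throughout $t_{k+1},\dots,t_i$, and $s$ breaks away from $s'$ at $t_i$; by \cref{lemma2} the crossing is weakly forced.

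The subcase $t_k(s)\succ_\sigma t_k(s')$ is the main obstacle I expect. Here the layout forces $s\succ_{\pi_k}s'$, whereas $s\prec_{\pi_i}s'$, so the $\pi$-order of $s$ and $s'$ must flip at some interval $(t_l,t_{l+1})$ with $k\leq l<i$. Any such flip is itself a crossing that is forward redundant: clause~(2) of the definition applies when $l=k$ since $t_k(s)\succ_\sigma t_k(s')$ and $t_{k+1}(s)=t_{k+1}(s')$, and clause~(1) applies when $l>k$ because the maximality of $k$ gives $t_l(s)=t_l(s')=t_{l+1}(s)=t_{l+1}(s')$. Reversing the order of $s$ and $s'$ in $\pi_{l+1},\dots,\pi_i$ is category-compatible (they are level throughout that range), eliminates the chosen earlier redundant crossing, and yields $s\succ_{\pi_i}s'$, which together with the forced $s\succ_{\pi_{i+1}}s'$ removes the crossing between $t_i$ and $t_{i+1}$. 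I would thus conclude that the original crossing is forward induced, closing the case analysis.
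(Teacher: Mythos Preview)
Your proof is correct and follows essentially the same case analysis as the paper: both split on the relative $\sigma$-order of the categories of $s$ and $s'$ at $t_i$ and $t_{i+1}$, identify the overtake case as strongly forced, the two level-at-$t_{i+1}$ cases as forward redundant, and then handle the break-away case by locating the last index $k<i$ where the subjects differ in category. Your treatment is in fact slightly tidier than the paper's, since fixing $s\prec_{\pi_i}s'$ and $s\succ_{\pi_{i+1}}s'$ up front prunes the impossible configurations immediately, and you spell out explicitly why the earlier crossing at $(t_l,t_{l+1})$ is forward redundant and how its resolution eliminates the crossing at $(t_i,t_{i+1})$---a step the paper asserts rather than verifies.
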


\begin{proof}
    Let $(S,\mathcal{C}, T, \sigma)$ be an OPD instance with $S=\lset{s_1}{s_n}$, $T=\lset{t_0}{t_m}$ and $\mathcal{C}=\lset{C_1}{C_k}$, with a linear ordering $\sigma$.  Let $\pi_0, \ldots , \pi_m$ determine sequences of subjects $s_1,\ldots s_n$ on tests $t_0, \ldots , t_m$. 
    Observe the edges $(s,t_i)(s, t_{i+1})$ and $(s',t_i)(s',t_{i+1})$, where $s, s' \in S$ and $i \in \lset{0}{m-1}$. Let there exist a crossing between them and let the crossing not be forward or backward redundant or forward (backward) induced.
    
    First, let $t_i(s)=t_i(s')$.
    If $t_{i+1}(s)=t_{i+1}(s')$, the crossing can only exist if the order of $s$ and $s'$ is different in $\pi_{i}$ and $\pi_{i+1}$ (either $s$ is before $s'$ in $\pi_{i}$ and after $s'$ in $\pi_{i+1}$, or the other way around), making it a forward redundant crossing, which it isn't by assumption. 
    
    Thus,  $t_{i+1}(s)\neq t_{i+1}(s')$. 
    W.l.o.g., let $t_{i+1}(s)\succ_{\sigma}t_{i+1}(s')$. A crossing occurs, when $s$ is before $s'$ in sequence $\pi_{i}$. By assumption, the crossing isn't backward redundant, thus there exists a $t_\ell$, $0 \leq \ell < i$, such that $t_{\ell}(s)\neq t_{\ell}(s')$. Let $\ell$ be the last such $\ell$, meaning that for every $j \in \lset{\ell +1}{i}$, $t_j(s)=t_j(s')$.  
    
    If $t_{\ell}(s)\succ_{\sigma} t_{\ell}(s')$, there either exists a crossing between edges $(t_\ell, s)(t_{\ell+1},s)$ and $(t_\ell, s')(t_{\ell+1}, s')$ or between edges $(t_j, s)(t_{j+1},s)$, $(t_j, s')(t_{j+1},s')$ for some $j$ such that $\ell < j < i$. In the first case, the crossing between $(t_\ell, s)(t_{\ell+1},s)$ and $(t_\ell, s')(t_{\ell+1}, s')$ is forward redundant, as $t_{\ell}(s)\succ_{\sigma} t_{\ell}(s')$, $t_{\ell+1}(s)= t_{\ell+1}(s')$ and $s$ is before $s'$ in $\pi_{\ell+1}$. In the second case, a crossing between edges $(t_j, s)(t_{j+1},s)$, $(t_j, s')(t_{j+1},s')$ is forward redundant as $t_j(s)=t_j(s')$ and $t_{j+1}(s)=t_{j+1}(s')$. In both cases the crossing between $(s,t_i)(s, t_{i+1})$ and $(s',t_i)(s',t_{i+1})$ is forward induced, contradicting the assumption. 
    
    Thus, $t_{\ell}(s)\prec_{\sigma} t_{\ell}(s')$, where $\ell$ is the highest such that $t_{\ell}(s)\neq t_{\ell}(s')$, meaning that for every $j \in \lset{\ell +1}{i}$, $t_j(s)=t_j(s')$. This remaining instance fulfills all assumptions of Lemma~\ref{lemma2}, making the crossing weakly forced.
    
    It remains to consider $t_i(s)\neq t_i(s')$. W.l.o.g. let $t_i(s)\succ_{\sigma} t_i(s')$.
    If $t_{i+1}(s)\neq t_{i+1}(s')$, we can only get a crossing between the edges $(s,t_i)(s, t_{i+1})$ and\\ $(s',t_i)(s',t_{i+1})$ if $t_{i+1}(s)\prec_{\sigma} t_{i+1}(s')$, which means that subject $s'$ overtook subject $s$ on test $t_{i+1}$. By Lemma~\ref{lemma1S}, the crossing is strongly forced. 
    Let us now assume, that $t_{i+1}(s)= t_{i+1}(s')$. We can only get a crossing if $s$ is before $s'$ in sequence $\pi_{i+1}$. By definition, such crossing is forward redundant. 
\end{proof}
The above lemma is used to prove the following theorem. Essentially, we propose an algorithm that first removes all forward redundant crossings, and then all backward redudundant crossings. 
\begin{theorem}[\hspace{-0.01pt}\cite{JerebicKajzerBokalOpda}]
\label{theorem2}
Let  $(S, \mathcal{C}, T, \sigma)$ be an OPD instance. There exists an algorithm which computes in time $\mathcal{O}(|T|\cdot (|S|+|\mathcal{C}|))$ a combinatorial layout of $(S, \mathcal{C}, T, \sigma)$ for which every crossing is either strongly or weakly forced, i.e., it achieves the minimum number of crossings.
\end{theorem}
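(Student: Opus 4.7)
The plan is to exhibit a concrete two-phase algorithm and then invoke \cref{lm:redundant} together with \cref{lemma1S} and \cref{lemma2} to conclude optimality. In the initialization phase I would compute $\pi_0$ by sorting the subjects lexicographically by their full trajectory $(t_0(s), t_1(s), \ldots, t_m(s))$, with coordinates compared in the order $\sigma$ on $\mathcal{C}$; sorting by $t_0$ first automatically groups the subjects by their $t_0$-category while the later coordinates act as tie-breakers. Standard radix sort, scanning the tests from $t_m$ down to $t_0$ and performing one bucket pass indexed by $\mathcal{C}$ per test, implements this in $O(|T|\cdot(|S|+|\mathcal{C}|))$ time. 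The propagation phase then computes $\pi_i$ from $\pi_{i-1}$ for $i=1,\ldots,m$ by a single bucket pass: walk through $\pi_{i-1}$ once, append each subject $s$ to the bucket of $t_i(s)$, and concatenate the buckets in $\sigma$-order. Each propagation step costs $O(|S|+|\mathcal{C}|)$, matching the claimed total running time.

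The correctness argument splits into eliminating the two families of redundant crossings. First, the propagation step eliminates every forward redundant crossing by construction: if $t_{i+1}(s)=t_{i+1}(s')$ then $s$ and $s'$ land in the same bucket at step $i+1$, so their relative order in $\pi_{i+1}$ equals their relative order in $\pi_i$. This rules out both the first type of forward redundancy (level at both $t_i$ and $t_{i+1}$) and the second type (where the two subjects are in different $\sigma$-categories at $t_i$ and thus already ordered accordingly in $\pi_i$). Second, the lexicographic choice of $\pi_0$ eliminates every backward redundant crossing: if $s$ and $s'$ are level at all tests $t_0,\ldots,t_i$ and $t_{i+1}(s)\succ_\sigma t_{i+1}(s')$, then their trajectories agree in coordinates $0,\ldots,i$ and first differ at coordinate $i+1$ in the same $\sigma$-direction as required by $\pi_{i+1}$, so $\pi_0$ already orders them in the way that is forced at $\pi_{i+1}$.

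The main obstacle, and the place where I would be most careful, is verifying that the two phases do not interfere. I would prove an auxiliary invariant that the propagation preserves relative order within a persistent category: whenever $t_j(s)=t_j(s')$ holds for all $j\le i$, the relative order of $s$ and $s'$ in each of $\pi_0,\pi_1,\ldots,\pi_i$ agrees with their relative order in $\pi_0$. Combined with the lexicographic property of $\pi_0$, this invariant forbids backward redundant crossings, and it also makes forward and backward induced crossings vacuous, since there is no redundant crossing from which they could be induced. With these properties established, \cref{lm:redundant} implies that every crossing in the produced layout is strongly or weakly forced, while \cref{lemma1S,lemma2} show that such crossings appear in any layout. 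Hence the algorithm outputs an optimal combinatorial layout within the stated time bound.
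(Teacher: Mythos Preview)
Your argument is correct, and it takes a genuinely different route from the paper. The paper initializes $\pi_0$ \emph{arbitrarily}, then runs a forward sweep (computing each $\pi_{i+1}$ from $\pi_i$ by the same stable bucketing you describe) to kill forward redundant crossings, and afterwards a \emph{backward} sweep (recomputing each $\pi_i$ from $\pi_{i+1}$) to kill backward redundant crossings; a separate argument is then needed to show the backward sweep does not reintroduce forward redundancies. Your lexicographic radix-sort initialization of $\pi_0$ is the key novelty: because a backward redundant crossing requires $s$ and $s'$ to be level at \emph{all} tests $t_0,\ldots,t_i$, the first disagreement in their trajectories is exactly at $t_{i+1}$, so the lex order already places them consistently with $\pi_{i+1}$, and your persistence invariant carries this through $\pi_0,\ldots,\pi_i$. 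This lets you dispense with the backward sweep entirely and with the non-reintroduction argument, at the cost of spending $O(|T|(|S|+|\mathcal{C}|))$ up front on the radix sort---which is the same asymptotic budget anyway. Both proofs finish identically via \cref{lm:redundant}, \cref{lemma1S}, and \cref{lemma2}.
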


\begin{proof}
Let  $(S, \mathcal{C}, T, \sigma)$ be an OPD instance with $S=\lset{s_1}{s_n}$, $T=\lset{t_0}{t_m}$ and $\mathcal{C}=\lset{C_1}{C_k}$, with a linear ordering $\sigma$. We want to find a combinatorial layout $\pi_0,\pi_1,\dots,\pi_m$ for which every crossing is either strongly or weakly forced. 

For every $i \in \lset{0}{m}$ we denote $\pi_{i,C}$ as $\pi_i$ restricted to the category $C \in \mathcal{C}$. To get $\pi_i$ from $\pi_{i,C_1}, \ldots, \pi_{i,C_k}$ for $C_1, \ldots, C_k \in \mathcal{C}$, we define an operation $\star$ on $\pi_{i,C}, \pi_{i, C'}$; $\pi_{i,C} \star \pi_{i, C'}$, which works as a concatenation of permutations. 
Note that concatenation is associative, thus operation $\star$ is well defined on more than two permutations. Then $\pi_i=\pi_{i,C_1} \star \ldots \star \pi_{i, C_k}$.

We say that an ordering of $s \in S$ at test $t_i \in T$ is induced by $\pi_i$, if for every $s, s' \in S$, $s$ is before $s'$ if and only if $\pi_i(s) < \pi_i(s')$.

\cref{algo:OPD_Drawing} depicts pseudocode which gives us a combinatorial layout of subjects of an OPD instance.
\begin{algorithm}[ht]
\caption{Optimal ordinal panel data drawing}
\label{algo:OPD_Drawing}
\KwIn{$S, \mathcal{C}, T, \sigma$}
\KwOut{Optimal ordering of subjects of an OPD instance.}
\Begin{
\For{each $C$ in $\mathcal{C}$ \label{alg_row:first forloop}}{$\pi_{0,C}\gets$ random permutation of $S(t_0,C)$\;
    }
$\pi_0\gets \pi_{0,\sigma(1)}\star \pi_{0,\sigma(2)}\star\dots\star \pi_{0,\sigma(|\mathcal{C}|)}$\;
\For{$i=0, \ldots, m-1$ \label{alg_row:second forloop}}{
    \For{each $C$ in $\mathcal{C}$}{
        $\pi_{i+1,C}\gets \pi_i[S(t_{i+1},C)]$\;
    }
    $\pi_{i+1}\gets \pi_{i+1,\sigma(1)}\star \pi_{i+1,\sigma(2)}\star\dots\star \pi_{i+1,\sigma(|\mathcal{C}|)}$\;
}
\For{$i=m-1,\dots,0$ \label{alg_row:third forloop}}{
    \For{each $C$ in $\mathcal{C}$}{
        $\pi_{i,C}\gets \pi_{i+1}[S(t_{i},C)]$\;
    }
    $\pi_{i}\gets \pi_{i,\sigma(1)}\star \pi_{i,\sigma(2)}\star\dots\star \pi_{i,\sigma(|\mathcal{C}|)}$\;
}
\Return ($\pi_0, \ldots, \pi_m$)}
\end{algorithm}
The algorithm takes an OPD instance. Utilising the for loop in \cref{alg_row:first forloop}, first sequence $\pi_0$ is obtained by joining random permutations of $S(t_0, C)$ in order $\sigma$ on $\mathcal{C}$. Next, we obtain $\pi_1, \ldots, \pi_m$ with the for loop in  \cref{alg_row:second forloop} as follows: For each $C \in \mathcal{C}$, we define a sequence $\pi_{i+1,C}$ as the subsequence $S(t_{i+1},C)$ of $\pi_{i}$. By joining the sequences $\pi_{i+1,\sigma(1)}, \ldots \pi_{i+1,\sigma(|C|)}$, we obtain a sequence $\pi_{i+1}$ for each $i \in \{0, \ldots, m-1\}$.

We show that with this process, we eliminate all forward redundant crossings. 
After executing the for loop in  \cref{alg_row:second forloop}, let there be a forward redundant crossing between edges $(s,t_i)(s, t_{i+1})$ and $(s',t_i)(s',t_{i+1})$. First, let $t_i(s)=t_i(s')$, $s$ be before $s'$ according to $\pi_i$ and let $t_{i+1}(s)=t_{i+1}(s')=C$, where $C$ is some element from $\mathcal{C}$. Then $s, s' \in S(t_{i+1},C)$ and $s'$ is before $s$ by $\pi_{i+1}$. 
According to the algorithm, $\pi_{i+1,C}$ is obtained by taking a subsequence of $\pi_{i}$, containing the elements from $S(t_{i+1},C)$ ordered with the ordering induced by $\pi_{i}$. Thus, as $s$ is before $s'$ by $\pi_i$, it has to be before $s'$ also by $\pi_{i+1,C}$. As the operation $\star$
, which gives us $\pi_{i+1}$, respects the orderings on $\pi_{i+1,C}$ for each $C \in \mathcal{C}$, $s$ is before $s'$ by $\pi_{i+1}$, which is a contradiction. 

It remains to consider $t_{i}(s)\succ_{\sigma}t_{i}(s')$, $t_{i+1}(s)=t_{i+1}(s')=C$, where $C$ is some element from $\mathcal{C}$ and $s$ is before $s'$ by $\pi_{i+1}$. By the definition of the sequence $\pi_i$, $s'$ is before $s$ by $\pi_i$. 
By the algorithm, $\pi_{i+1,C}$ contains the elements from $S(t_{i+1},C)$ with the ordering induced by $\pi_i$. As $s, s' \in S(t_{i+1},C)$, the ordering of $s$ and $s'$ in $\pi_{i+1,C}$ is induced by the ordering in $\pi_i$, meaning $s'$ is before $s$ by $\pi_{i+1,C}$. Following the same argument as before, $s'$ is before $s$ by $\pi_{i+1}$, giving us a contradiction. 

We have thus proven that the process described above eliminates all forward redundant crossings. Note that, by eliminating forward redundant crossings, we also eliminate all forward induced crossings. 

The for loop in \cref{alg_row:third forloop} describes the same process as above, but backwards. We prove that executing this loop eliminates all backward redundant crossings. 
After executing this for loop, let there exist a backward redundant crossing between edges $(s,t_i)(s, t_{i+1})$ and $(s',t_i)(s',t_{i+1})$. 

W.l.o.g., let $t_{i+1}(s)\succ_{\sigma}t_{i+1}(s')$, $t_j(s)=t_j(s')$ for every $j\leq i$ and let $s$ be before $s'$ by $\pi_{i}$.
First we note that $t_{i+1}(s)\succ_{\sigma}t_{i+1}(s')$ implies that $s'$ is before $s$ by $\pi_{i+1}$. We also know, that $t_i(s)=t_i(s')=C$, where $C$ is some element from $\mathcal{C}$, thus $s,s' \in S(t_i, C)$.
By algorithm, $\pi_{i,C}$ contains $s$ and $s'$ in the same order as they were in the ordering induced by $\pi_{i+1}$, meaning that  $s'$ is before $s$ in $\pi_{i, C}$. As the operation $\star$, which gives us $\pi_{i}$, respects the orderings on $\pi_{i,C}$ for all $C \in \mathcal{C}$, $s'$ also has to remain before $s$ in $\pi_{i+1}$, which gives us a contradiction. 

Note that, by eliminating backward redundant crossings, we also eliminate all backward induced crossings.

It remains to consider, if any forward redundant crossing is created during the execution of the for loop in \cref{alg_row:third forloop}. 
Let $(s,t_i)(s, t_{i+1})$ and $(s',t_i)(s',t_{i+1})$ be such edges, that the new ordering on $\pi_i$ caused a forward redundant crossing between them. 
If $t_{i}(s)\succ_{\sigma}t_{i}(s')$, $t_{i+1}(s)=t_{i+1}(s')$, $s$ is after $s'$ by $\pi_{i+1}$ the second loop and third loop won't have any affect on it.
If $t_i(s)=t_i(s')$, $t_{i+1}(s)=t_{i+1}(s')$, by algorithm order of the sequence $\pi_{i+1}$ induces the new order on $\pi_i$. Thus, forward redundant crossing cannot appear. 

By Lemma ~\ref{lm:redundant}, we know that if a crossing is not forward redundant or backward redundant, nor is it forward or backward induced, it has to be either strongly or weakly forced.

We have thus shown that the algorithm described above gives us a combinatorial layout with only strongly or weakly forced crossings. By \cref{lemma1S} and \cref{lemma2}, such combinatorial layout is optimal. It remains to show the stated runtime. For this, it is enough to show that the loop in \cref{alg_row:second forloop} and the loop in \cref{alg_row:third forloop} can be implemented in $\mathcal{O}(|\mathcal{C}|+|S|)$ time. For the loop in \cref{alg_row:second forloop} this can be done as follows: Iterate over the subjects $s$ in the order of $\pi_i$. If $s$ is in category $C_j$, then append $s$ to the initially empty $\pi_{i+1,C_j}$. The for loop in \cref{alg_row:third forloop} can be implemented similarly.
\end{proof}

\section{Extremal instances}\label{sec:extremal}
In this section, we study extremal examples of ordinal panel data instances. 
Let $S(t,C) = \{s \in S \mid t(s)=C\}$ be the set of subjects, which are assigned the same category $C\in \mathcal{C}$ by test $t\in T$. 
We investigate the maximum crossing number of  OPD instances $(S,\mathcal{C},T,\sigma)$ over relevant sets of  timestamps.
In a given set $\mathcal{T}$ that contains sequences of tests; we seek those with highest panel crossing number. The study of extremal instances is motivated by understanding the comparison of real world instances to worst-case scenarios of most turbulent processes with least consistent behavior of tested subjects. 

\begin{definition}\label{def:extremal}
    Let $S=\lset{s_1}{s_n}$ be a set of subjects 
    and $\mathcal{C}=\lset{C_1}{C_k}$ a set of categories with ordering $\sigma$. 
    Let $\mathcal{T}=\dset{(t_0,\ldots,t_m)}{\dfnc{t_0,\ldots,t_m}{S}{\mathcal{C}}}$ be 
    a set of sequences of functions assigning categories to
    subjects at timestamps $t_0,\ldots,t_m$.
    Then, we define the $\mathcal{T}$-extremal crossing number of $S,\mathcal{C}$ as 
    \[ecr(S,\mathcal{C},\mathcal{T},\sigma)=\max\dset{pcr(S,\mathcal{C},T,\sigma)}{T\in\mathcal{T}}.\]
\end{definition}
In the following subsections, we first consider $\mathcal{T}$-extremal crossing number for the set $\mathcal{T}$ of all possible tests without restrictions. Next in Subsection \cref{sec:extremalconsistent}, we restrict ourselves to consistent instances, i.e.\ such instances that do not experience any regressions in maturity levels. The generality of definition  \cref{def:extremal} allows for further refinement of the extremality concept, should interesting instances be observed.
\subsection{Extremal general instances}\label{sec:extremal}
In the following Lemma, we compute the extremal crossing number for two tests. Let us simplify $a_i(t)=|S(t,C_i)|$ for $t\in T,C_i\in \mathcal{C}$.
\begin{lemma}\label{spec_max} \sloppy
Let $S=\lset{s_1}{s_n}$, $\mathcal{C}=\lset{C_1}{C_k}$ ordered by $\sigma$. Without loss of generality, we assume that $\sigma(i)=C_i$. 
Let $a_1,\ldots,a_k$  be non-negative integers such that 
$\sum_{i=1}^{k}a_i=n$. For $\dfnc{t}{S}{\mathcal{C}}$, 
define $a_i(t)=|S(t,C_i)|$. 
Let $\mathcal{T}=\dset{(t_1,t_2)}{\dfnc{t_1,t_2}{S}{\mathcal{C}};a_i(t_1)=a_i}$. 
Then, the $\mathcal{T}$-extremal crossing number is achieved when $a_i(t_1)=a_{k-i+1}(t_2)$ and it equals 
\begin{equation}
ecr(S,C,\mathcal{T},\sigma)=\tfrac{1}{2}\sum_{i=1}^{k}a_i(n-a_i).\label{eq:extremaltwotests}
\end{equation}
\end{lemma}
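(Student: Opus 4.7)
The plan is to reduce $pcr(S,\mathcal{C},T,\sigma)$ with $T=\{t_1,t_2\}$ to a count of strongly forced crossings, bound that count pairwise, and then exhibit a test $t_2$ saturating the bound. First I would invoke \cref{theorem2}, which produces a layout whose only crossings are strongly or weakly forced. Next I would observe that no weakly forced crossing can occur here: \cref{lemma2} requires a catching-up test $t_i$ (so that $t_{i-1}\in T$) together with a breaking-away test $t_j$ (so that $t_{j+1}\in T$), hence at least three tests, which is impossible with $|T|=2$. By \cref{lemma1S} we therefore obtain, for any $t_2$,
\[ pcr(S,\mathcal{C},T,\sigma)=\#\bigl\{\{s,s'\}\subseteq S:\, t_1(s)\prec_\sigma t_1(s'),\; t_2(s)\succ_\sigma t_2(s')\bigr\}. \]

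I would then derive the upper bound by a pair-by-pair count. A pair with $t_1(s)=t_1(s')$ can never overtake at $t_2$ and contributes nothing. A pair with $t_1(s)=C_i$, $t_1(s')=C_{i'}$, $i<i'$, contributes at most $1$, namely exactly when $t_2$ strictly reverses their $\sigma$-order. Since the total number of such cross-category pairs is $\sum_{1\le i<i'\le k}a_i a_{i'}$, we get, independently of $t_2$,
\[ pcr(S,\mathcal{C},T,\sigma)\le \sum_{1\le i<i'\le k}a_i a_{i'} = \frac{1}{2}\left(n^2-\sum_{i=1}^k a_i^2\right)=\frac{1}{2}\sum_{i=1}^k a_i(n-a_i), \]
which already establishes the upper bound in \eqref{eq:extremaltwotests}.

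To finish, I would demonstrate tightness by the explicit reversal $t_2(s):=C_{k-i+1}$ whenever $t_1(s)=C_i$. Then every pair with $t_1(s)\prec_\sigma t_1(s')$ satisfies $t_2(s)\succ_\sigma t_2(s')$, so every cross-category pair contributes a forced crossing and the bound is saturated. The resulting distribution reads $a_j(t_2)=|S(t_1,C_{k-j+1})|=a_{k-j+1}$, matching the condition stated in the lemma.

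I do not foresee a substantive obstacle; the only care is in justifying the reduction to strongly forced crossings, that is, ruling out weakly forced crossings and confirming that intra-category pairs cause no unavoidable crossings. The former follows directly from the index constraints of \cref{lemma2}, while the latter is built into the layout produced by \cref{theorem2}.
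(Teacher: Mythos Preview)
Your proposal is correct and follows essentially the same two-step structure as the paper: a pairwise upper bound plus the explicit reversal construction for tightness. The only notable difference is in the upper bound justification: you invoke \cref{theorem2} to identify $pcr$ exactly with the count of strongly forced crossings (ruling out weakly forced ones via the three-test requirement of \cref{lemma2}), whereas the paper argues more directly that two subjects sharing a category at either test cannot cross in an optimal layout ``because we could reorder them to reduce the number of crossings.'' Your route is arguably more rigorous, since it pins down $pcr$ precisely rather than relying on an informal exchange argument, but both reach the same bound $\sum_{i<i'}a_ia_{i'}=\tfrac{1}{2}\sum_i a_i(n-a_i)$ and the same extremiser.
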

\begin{proof}
    We prove both directions of the equality in \eqref{eq:extremaltwotests}. 

For ``$\le$'', assume an arbitrary $T\in\mathcal{T}$ and consider a combinatorial layout $\pi_1,\pi_2$ of $(S,\mathcal{C},T,\rho)$ with $pcr(S,\mathcal{C},T,\rho)$ crossings. If two subjects $s$ and $s'$ are in the same category for test $t_1$ or $t_2$, then there cannot be a crossing involving $s$ and $s'$ because we could reorder them to reduce the number of crossings, contradicting that we have $pcr(S,\mathcal{C},T,\rho)$ crossings. Thus, each subject $s \in S$ can at most cross with $n-a_i$ subjects in categories different from $t_1(s)$. Factoring in double-counting, this results in $ecr(S,C,\mathcal{T},\rho)\le\tfrac{1}{2}\sum_{i=1}^{k}a_i(n-a_i)$.

For ``$\ge$'', we give $(t_1,t_2)=T\in \mathcal{T}$ with $pcr(S,\mathcal{C},T,\rho)=\tfrac{1}{2}\sum_{i=1}^{k}a_i(n-a_i)$ crossings. Consider $s_j\in S$. Let $i$ be the maximum integer such that $\sum_{\ell=1}^ia_\ell\le j$. Then let $t_1,t_2$ be such that $t_1(s_j)=C_i$ and $t_2(s_j)=C_{k-i+1}$. Note that $(S,\mathcal{C},\{t_1,t_2\})$ has exactly $\tfrac{1}{2}\sum_{i=1}^{k}a_i(n-a_i)$ strongly forced crossings. Furthermore $a_i(t_1)=a_{k-i+1}(t_2)$ for $i=1,\dots,m$. This completes the proof.
\end{proof}

The following two statements are used to prove the main theorem of the section, \cref{nondevisable}.

\begin{lemma} \label{convex}
    $f : \mathbb{R} \rightarrow \mathbb{R}$, $f(x)=\frac{x(x-1)}{2}$ is a strictly convex function.  
\end{lemma}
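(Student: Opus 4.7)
The plan is straightforward: expand $f(x) = \frac{x(x-1)}{2} = \frac{x^2 - x}{2}$ and use the standard second-derivative test for strict convexity. Since $f$ is a polynomial, it is twice continuously differentiable on all of $\mathbb{R}$, so the sufficient condition $f''(x) > 0$ for all $x \in \mathbb{R}$ applies directly.

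First I would compute $f'(x) = \frac{2x - 1}{2} = x - \tfrac{1}{2}$, and then $f''(x) = 1$. Since $f''(x) = 1 > 0$ for every $x \in \mathbb{R}$, the function $f$ is strictly convex on $\mathbb{R}$. As an alternative, and perhaps more self-contained presentation, I would verify the defining inequality directly: for any $x, y \in \mathbb{R}$ with $x \ne y$ and any $\lambda \in (0,1)$, a short calculation gives
\[
\lambda f(x) + (1-\lambda) f(y) - f(\lambda x + (1-\lambda) y) = \tfrac{1}{2}\lambda(1-\lambda)(x-y)^2 > 0,
\]
which is exactly the definition of strict convexity.

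There is essentially no obstacle here; the only thing to be careful about is to make the chosen route (second derivative vs.\ direct inequality) match the level of formality used elsewhere in the paper. I would opt for the direct inequality version, since it immediately produces the quantitative gap $\tfrac{1}{2}\lambda(1-\lambda)(x-y)^2$ that may well be useful in the subsequent application of the lemma in the proof of \cref{nondevisable}.
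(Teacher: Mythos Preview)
Your proposal is correct and takes essentially the same approach as the paper: compute $f''(x)=1>0$ and invoke the second-derivative characterization of strict convexity. The additional direct verification of the defining inequality is a nice extra, but the paper's proof stops at the second-derivative argument.
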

\begin{proof}
    We use a folklore characterization of convexity, which claims: If $f(x)$ is twice differentiable on an interval $I$, then $f(x)$ is strictly convex on $I$ if and only if $f''(x) > 0$. Hence, $f(x)=\frac{x(x-1)}{2}$ is strictly convex, as $f''(x)=1$. 
\end{proof}
\begin{proposition}[\hspace{-0.01pt}\cite{BOKAL2012460}]\label{partition}
 Let $Z \subset \mathbb{Z} ^n$ be a set of positive n-element partitions of an integer r. Furthermore, let $f : \mathbb{R} \rightarrow \mathbb{R}$ be any strictly convex function, and $cost_f (z) = \sum\limits_{i=1}^{n} f (z_i)$ for $z \in Z$. Then $z' \in Z$ is a minimum of $cost_f$ if and only if $(\max_{i=1}^n \{z_i\}-\min_{i=1}^n \{z_i\})$ is minimum over all $z \in Z$.
\end{proposition}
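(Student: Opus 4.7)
The plan is to reduce both directions of the equivalence to a single discrete exchange lemma. The lemma I would establish first states: for any $z \in Z$ and indices $i,j$ with $z_i \ge z_j + 2$, the vector $z'$ obtained from $z$ by replacing $z_i$ with $z_i - 1$ and $z_j$ with $z_j + 1$ lies in $Z$ (the sum $r$ is preserved, and positivity follows from $z_i - 1 \ge z_j + 1 \ge 2$) and satisfies $\mathrm{cost}_f(z') < \mathrm{cost}_f(z)$. To establish the strict inequality without assuming differentiability of $f$, I would express $z_i - 1$ and $z_j + 1$ as nontrivial convex combinations of $z_i$ and $z_j$, namely
\[
z_i - 1 = \tfrac{z_i - z_j - 1}{z_i - z_j}\, z_i + \tfrac{1}{z_i - z_j}\, z_j, \qquad z_j + 1 = \tfrac{1}{z_i - z_j}\, z_i + \tfrac{z_i - z_j - 1}{z_i - z_j}\, z_j,
\]
apply strict convexity of $f$ to each (the coefficients lie strictly in $(0,1)$ because $z_i - z_j \ge 2$), and sum the two resulting inequalities to obtain $f(z_i - 1) + f(z_j + 1) < f(z_i) + f(z_j)$.

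For the direction ``$\mathrm{cost}_f$-minimum $\Rightarrow$ $\max - \min$-minimum'', I would argue by contrapositive: any $z \in Z$ with $\max_i z_i - \min_i z_i \ge 2$ admits the exchange lemma on its extremal coordinates, yielding a strictly smaller cost, so every cost-minimizer must satisfy $\max_i z_i - \min_i z_i \le 1$. Writing $r = qn + s$ with $0 \le s < n$, any such balanced partition consists of exactly $s$ entries equal to $q+1$ and $n-s$ entries equal to $q$; this multiset is uniquely forced, and the corresponding $\max - \min$ value (namely $0$ if $s=0$ and $1$ otherwise) is the theoretical minimum over all positive $n$-partitions of $r$.

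For the converse, ``$\max - \min$-minimum $\Rightarrow$ $\mathrm{cost}_f$-minimum'', I would combine the previous observation with the fact that $\mathrm{cost}_f$ is symmetric in its coordinates: all partitions attaining the minimum $\max - \min$ share the same multiset of values, so they all produce the same value of $\mathrm{cost}_f$. Since the first direction already exhibits at least one such balanced partition as a cost-minimizer, every partition minimizing $\max - \min$ is a cost-minimizer as well. The main obstacle is the clean derivation of the exchange lemma; once that discrete form of strict convexity is in place, both implications follow by essentially symmetric reasoning, and no analytic assumptions on $f$ beyond strict convexity are required.
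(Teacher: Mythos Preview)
The paper does not supply its own proof of this proposition; it is quoted from \cite{BOKAL2012460} and used as a black box in the proof of \cref{nondevisable}. Your argument is correct and self-contained: the exchange lemma via explicit convex combinations is clean, and the two directions follow exactly as you outline.

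One remark worth making explicit in your write-up: both halves of your argument silently use that $Z$ is the \emph{full} set of positive $n$-partitions of $r$. The exchange step needs the shifted vector $z'$ to remain in $Z$, and the converse needs that every $\max-\min$ minimizer has the same multiset of entries (which fails for arbitrary subsets $Z$; e.g.\ with $n=3$, $r=9$, $Z=\{(1,4,4),(2,2,5)\}$ and $f(x)=x^3$, both elements have $\max-\min=3$ but different costs). The proposition as stated in the paper is ambiguous on this point, but the application in \cref{nondevisable} takes $Z$ to be all size vectors of partitions of $n$ subjects into $k$ categories, so the full-set reading is the intended one and your proof matches it.
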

This allows us to compute the extremal crossing number for any number of subjects, categories, and tests.
\begin{theorem}\label{nondevisable}\sloppy
Let  $S=\{s_1,\ldots,s_n\}$ be a set of subjects, which are assigned 
categories from a set $\mathcal{C}=\{C_1,\ldots,C_k\}$ ordered by $\sigma$, and let 
$\mathcal{T}=\dset{(t_0,\ldots,t_m)}{\dfnc{t_0,\dots,t_m}{S}{\mathcal{C}}}$
be the set of all possible outcomes of tests at timestamps 
$t_0,\ldots,t_m$. For $i=1,\ldots,k$, $j=0,\ldots,m$, 
define $a_i(t_j)=|S(t_j,C_i)|$. Finally, let  $n=xk+y$
with $x,y\in\mathbb{N}_0$ and $0\leq y<k$. Then, 
\[ ecr(S,\mathcal{C},\mathcal{T},\sigma)=\tfrac{m}{2}(kx(n-x)+y(n-2x-1)),\]
which is achieved when for $i=1,\ldots,k$, $j=0,\ldots,m$,
we have $a_i(t_j)=a_{k-i+1}(t_{j+1})$, with $k-y$ sets $S(t_j,C_i)$ containing 
$x$ subjects and $y$ sets $S(t_j,C_i)$ containing 
$x+1$ subjects. 
\end{theorem}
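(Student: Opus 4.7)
The plan is to establish matching upper and lower bounds and then reconcile them with the claimed closed form. For the upper bound, consider any $T=(t_0,\ldots,t_m)\in\mathcal{T}$ and look at a single interval between $t_j$ and $t_{j+1}$. Exactly as in the proof of \cref{spec_max}, two subjects assigned to the same category at $t_j$ can always be ordered to avoid a crossing with each other in that interval, so the number of crossings in the interval is bounded by the number of pairs of subjects lying in distinct categories at $t_j$, namely $\tfrac{1}{2}\sum_{i=1}^{k}a_i(t_j)(n-a_i(t_j)) = \tfrac{1}{2}\bigl(n^2-\sum_i a_i(t_j)^2\bigr)$. Since $\sum_i a_i(t_j)=n$ is fixed, maximizing this quantity is equivalent to minimizing $\sum_i a_i(t_j)^2$ over integer partitions of $n$ into $k$ non-negative parts. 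The convex function $f(a)=a(a-1)/2$ from \cref{convex} together with \cref{partition} then shows that this minimum is attained precisely when the partition is as balanced as possible, i.e., $y$ parts of size $x+1$ and $k-y$ parts of size $x$.

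Plugging this balanced distribution in yields
\[
\sum_i a_i^2 \;=\; y(x+1)^2+(k-y)x^2 \;=\; kx^2+2xy+y,
\]
so each interval contributes at most $\tfrac{1}{2}(n^2-kx^2-2xy-y)$ crossings. A direct expansion using $n=xk+y$ on both sides gives the identity
\[
n^2-kx^2-2xy-y \;=\; kx^2(k-1)+2xy(k-1)+y(y-1) \;=\; kx(n-x)+y(n-2x-1),
\]
and summing over the $m$ intervals produces the claimed upper bound $\tfrac{m}{2}\bigl(kx(n-x)+y(n-2x-1)\bigr)$.

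For the matching lower bound, I construct tests $t_0,\ldots,t_m$ achieving this bound by iterating the extremal configuration from \cref{spec_max}. Fix any balanced sequence $(a_1,\ldots,a_k)$ with $a_i=x+1$ for $i\le y$ and $a_i=x$ otherwise. Define the ``straight'' assignment $\tau$ by $\tau(s_\ell)=C_i$ where $i$ is the unique index with $\sum_{i'<i}a_{i'}<\ell\le\sum_{i'\le i}a_{i'}$, and the ``mirrored'' assignment $\overline{\tau}$ by $\overline{\tau}(s_\ell)=C_{k-i+1}$ for the same $i$. Set $t_j=\tau$ for even $j$ and $t_j=\overline{\tau}$ for odd $j$. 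Then $a_i(t_j)=a_{k-i+1}(t_{j+1})$ holds for every $j$, so by the same argument as in the $\ge$ direction of \cref{spec_max}, in each interval $j$ every pair of subjects that lie in distinct categories at $t_j$ satisfies the overtaking condition and is therefore a strongly forced crossing by \cref{lemma1S}. Consequently interval $j$ contributes exactly $\tfrac{1}{2}\sum_i a_i(n-a_i) = \tfrac{1}{2}\bigl(kx(n-x)+y(n-2x-1)\bigr)$ unavoidable crossings, and summing over the $m$ intervals gives the desired total.

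The main subtlety in this proof is ensuring that the per-interval extrema from \cref{spec_max} can all be realized \emph{simultaneously} across the whole sequence of tests: for a single interval the extremal configuration is essentially unique up to relabelling, and a priori the mirrored pair at interval $j$ might conflict with the mirrored pair required at interval $j+1$. This is resolved precisely by the alternating straight/mirrored scheme, which is self-consistent because applying the mirror twice returns to the original assignment, so the same balanced partition is reused at every timestamp.
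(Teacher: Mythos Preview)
Your argument follows essentially the same route as the paper's proof: decompose into intervals, use \cref{convex} and \cref{partition} to identify the balanced partition as the per-interval extremum, and realise the bound by alternating a straight and a mirrored assignment.

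There is, however, one small gap in your upper-bound step. You invoke \cref{spec_max} to claim that in each interval $[t_j,t_{j+1}]$ ``two subjects assigned to the same category at $t_j$ can always be ordered to avoid a crossing with each other in that interval''. In \cref{spec_max} this is unproblematic because the orderings at both endpoints are free, but here the ordering at $t_j$ is shared with the previous interval $[t_{j-1},t_j]$; the ordering that avoids a crossing in one interval may create one in the other. So the per-interval bounds do not sum automatically, and you need an argument that they can be achieved simultaneously in a single global layout. The paper handles this by appealing to the algorithm of \cref{theorem2} and observing that one can push every weakly forced crossing to the interval where the two subjects are in different categories at one endpoint, which makes the per-interval bound valid for that specific optimal layout. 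Adding one sentence to that effect (or, equivalently, observing that each forced crossing can be injectively charged to a term ``pair different at $t_j$'' in the sum) closes the gap; your lower-bound construction and the algebraic identity are correct as written.
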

\begin{proof}
    Let us consider the extremal number of crossings on a pair of consecutive timestamps $t, t' \in T$.

We start with ``$\le$''. For any optimal layout of an OPD instance, we can assume w.l.o.g.\ that weakly forced crossings appear ``as late as possible''. That is, assume there is a weakly forced crossing between subject $s$ and subject $s'$ resulting from $s$ catching up with $s'$ at test $t_i$, breaking away from $s'$ at $t_j$, and being level with $s'$ at test $t_l$ for $l=i,\dots,j$. Then we can assume that the crossing appears between test $t_{j}$ and test $t_{j+1}$ where $s$ and $s'$ are in different categories in $t_{j}$. We can assume this because the algorithm for computing a combinatorial layout with $pcr(S,\mathcal{C},T,\rho)$ crossings discussed in \cref{sc:polynomial} can be implemented such that it computes layouts with this property: The algorithm first computes an initial ordering $\pi_1$ that allows for the least amount of crossings. Then, given ordering $\pi_i$, the algorithm computes $\pi_{i+1}$ greedily such that the fewest number of crossings are created. 
 Hence, we can assume that the maximum number of crossings between $n$ subjects occurs when each subject crosses all subjects that are in different categories for test $t'$. With any category containing $a > 1$ subjects, $\frac{a(a-1)}{2}$ crossings can be avoided. The maximum number of crossings between subjects from categories $\{C_1,\ldots,C_k\}$ is achieved when this loss is minimal. 
    Let $Z \subset \mathbb{Z} ^k$ be a set of size vectors of all possible partitions of $n$ elements into $k$ categories. 
     We observe that the grouping of subjects into $k$ categories, interpreted as a partition $(z_{1}, \ldots z_{k})$, results in a loss of $\sum\limits_{i=1}^{k}\frac{z_{i}(z_{i-1})}{2}$ crossings. \cref{convex} implies that $f(z_{i})=\frac{z_{i}(z_{i-1})}{2}$ is a strictly convex function. By \cref{partition}, the sum $\sum_{i=1}^{k}\frac{z_{i}(z_{i-1})}{2}$ achieves a minimum when 
     \[z_i=\begin{cases}
        x+1 & \text{; } \forall i \in \{1, \ldots y\}\\
        x & \text{; } \forall i \in \{y+1, \ldots k\}
    \end{cases},\text{ for }n=xk + y, \; 0 \leq y < k.\]
     
We note that the above is true for each $t_j, t_{j+1} \in T$, where $j \in \{0, \ldots m-1\}$. As the number of crossings for timestamps $t_0,\ldots t_m$ is a sum of crossings on consecutive pairs of timestamps, it follows that the proposed distribution is the one yielding an extremal number of crossings on $t_0,\ldots t_m$. We have proven that the $\mathcal{T}$-extremal number of crossings is achieved when, for $i=1,\ldots,k$, $j=0,\ldots,m$,
we have $a_i(t_j)=a_{k-i-1}(t_{j+1})$, with $k-y$ sets $S(t_j,C_i)$ containing 
$x$ subjects and $y$ sets $S(t_j,C_i)$ containing 
$x+1$ subjects. We use \cref{spec_max} to calculate $ecr(S, \mathcal{C}, \mathcal{T},\rho)$:
\begin{equation*}
\begin{multlined}
    \tfrac{m-1}{2}\sum_{i=1}^{k}a_i(n-a_i)= \tfrac{m-1}{2}(\sum_{i=1}^{y}(x+1)(n-x-1) + \sum_{i=y+1}^{k}x(n-x))= \\ \frac{m-1}{2}(y(x+1)(n-x-1)+(k-y)x(n-x))= \\ \frac{m-1}{2}(yxn-yx^2-2yx+yn-y+kxn-yxn-kx^2+yx^2)= \\ \frac{m-1}{2}(kx(n-x)+y(n-2x-1)).     
\end{multlined}
\end{equation*}

For ``$\ge$'', there is a straight-forward construction that achieves this amount of crossings. For test $t_0$, put $x+1$ subjects into categories $1,\dots,y$ and $x$ subjects into categories $y+1,\dots,k$. Then set $t_{i+1}$ to simply be the test assignment that reverses the order of subjects of $t_i$ with respect to the category ordering. For example if for some subject $s$, $t_i(s)=c=\rho(x)$, then $t_{i+1}(s)=c'=\rho(k-x+1)$. Note that the number of strongly forced crossing equals $\frac{m-1}{2}(kx(n-x)+y(n-2x-1))$.
\end{proof}

\subsection{Extremal consistent instances}\label{sec:extremalconsistent}
Next, we consider consistent instances, characterized as follows.
\begin{definition} 
      Let $S=\{s_1,\ldots,s_n\}$ be a set of subjects,
    $\mathcal{C}=\{C_1, \ldots , C_k\}$ a set of categories ordered by $\sigma$, and
    $T=\{t_0, \ldots , t_m\}$ a set of timestamps. 
    We say that $(S,\mathcal{C},T,\sigma)$ is a \emph{consistent ordinal panel data instance} if
    for each timestamp $t_i$, $i \in \{1, \ldots m\}$ and 
    each subject $s \in S$, we have $t_{i-1}(s)\preceq_{\sigma} t_i(s)$. Informally, a subject never 
    gets assigned a smaller category at a later timestamp.
\end{definition}
\begin{definition}\sloppy
    Let $S=\{s_1,\ldots,s_n\}$ be a set of subjects and $\mathcal{C}=\{C_1, \ldots , C_k\}$ a set of categories ordered by $\sigma$. Let $\mathcal{T}_c=\dset{(t_0, \ldots t_m)}{\dfnc{t_0,\ldots, t_m}{S}{\mathcal{C}} \wedge \ \forall \ t_i, i \in \{1, \ldots m\}, \ \forall s \in S: t_{i-1}(s)\preceq_{\sigma} t_i(s) }$ be a set of consistent sequences of tests assigning categories in $\mathcal{C}$ to subjects in $S$. Then, $$ecr(S,\mathcal{C},\mathcal{T}_c,\sigma)=\max\dset{pcr(S,C,T,\sigma)}{T \in \mathcal{T}_c}.$$
\end{definition}
We first prove two upper bounds on the extremal crossing number of consistent instances. 
\begin{lemma}\label{lemma:UpperBound}\sloppy
     Let $S=\{s_1,\ldots,s_n\}$ be a set of subjects, $\mathcal{C}=\{C_1, \ldots , C_k\}$ a set of categories ordered by $\sigma$, and let $\mathcal{T}_c=\dset{(t_0, \ldots t_m)}{\dfnc{t_0,\ldots, t_m}{S}{\mathcal{C}} \wedge \ \forall \ t_i, i \in \{1, \ldots m\}, \ \forall s \in S: t_{i-1}(s)\preceq_{\sigma} t_i(s) }$ be a set of consistent sequences of tests assigning categories in $\mathcal{C}$ to subjects in $S$. Then,
    \begin{enumerate}
        \item [i.)] $ecr(S,\mathcal{C},\mathcal{T}_c, \sigma) \leq m\binom{n}{2}$ and
        \item [ii.)] $ecr(S,\mathcal{C},\mathcal{T}_c, \sigma) \leq (k-2)\binom{n}{2}$.
    \end{enumerate}
\end{lemma}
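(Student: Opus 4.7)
The plan is to prove the two bounds separately. Bound (i) holds for every (not necessarily consistent) OPD instance and is almost immediate; bound (ii) is where consistency is used essentially, via a per-pair argument.

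For bound (i): in any combinatorial layout $\pi_0,\ldots,\pi_m$, a pair $\{s,s'\}$ of distinct subjects contributes a crossing in the interval $(t_{i-1},t_i)$ exactly when their relative order is swapped between $\pi_{i-1}$ and $\pi_i$, so at most one crossing per pair per interval. Summing over all $\binom{n}{2}$ pairs and all $m$ intervals yields $pcr(S,\mathcal{C},T,\sigma)\le m\binom{n}{2}$ for every $T\in\mathcal{T}_c$, hence $ecr(S,\mathcal{C},\mathcal{T}_c,\sigma)\le m\binom{n}{2}$.

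For bound (ii): assume without loss of generality $\sigma(i)=C_i$ for $i=1,\dots,k$, and for every subject $s$ and every $i$ let $a_i(s)\in\{1,\dots,k\}$ be the index with $t_i(s)=C_{a_i(s)}$. Consistency gives $a_0(s)\le a_1(s)\le\cdots\le a_m(s)$. I will show that for each pair $\{s,s'\}$ the optimal layout has at most $k-2$ crossings involving $s$ and $s'$; summing over the $\binom{n}{2}$ pairs then yields the claim. Fix such a pair and consider the sign sequence $\varepsilon_i:=\operatorname{sign}(a_i(s)-a_i(s'))\in\{-,0,+\}$. By \cref{lemma1S,lemma2} together with the correctness of the algorithm in \cref{theorem2}, the number of crossings between $s$ and $s'$ in an optimal layout equals the number of \emph{sign reversals} in $(\varepsilon_0,\dots,\varepsilon_m)$, that is, the number of positions where the sign flips from $-$ to $+$ or from $+$ to $-$ (possibly through an intermediate run of zeros). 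The key observation is that each such reversal strictly increases $M_i:=\max(a_i(s),a_i(s'))$: for a $-$-to-$+$ reversal between indices $i<j$ one has $M_i=a_i(s')$ and, using monotonicity of $a(s')$, $M_j=a_j(s)>a_j(s')\ge a_i(s')=M_i$, so $M_j\ge M_i+1$, and the $+$-to-$-$ case is symmetric. At the first index $i^*$ with $\varepsilon_{i^*}\ne 0$ the two values $a_{i^*}(s),a_{i^*}(s')$ are distinct positive integers, so $M_{i^*}\ge 2$. Combined with $M_i\le k$ throughout the sequence, the number of reversals is at most $k-2$.

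The main technical step is the identification of crossings in the optimal layout with sign reversals of the sign sequence: this relies on the correctness of \cref{theorem2} and on handling leading and intermediate zero-blocks in $\varepsilon$, which correspond to level subjects whose relative order can be chosen freely without creating extra crossings. Once this identification is in place, the monotonicity argument bounding the number of reversals by $k-2$ is the short induction sketched above, and summation over pairs concludes the proof.
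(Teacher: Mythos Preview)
Your proof is correct and follows the same per-pair counting idea as the paper: bound (i) is identical, and for (ii) both you and the paper argue that each crossing between a fixed pair $\{s,s'\}$ forces the use of one more category, so at most $k-2$ crossings can occur per pair. The paper states this in a single terse sentence (``If they cross once, then $k\ge 3$; for each subsequent crossing, at least one more category is needed''), whereas you make the same invariant precise via the sign sequence $\varepsilon_i$ and the strictly increasing quantity $M_i=\max(a_i(s),a_i(s'))$; your version is a more rigorous unpacking of exactly the paper's argument.
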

\begin{proof} \sloppy
 As stated in the Lemma, let $S=\{s_1,\ldots,s_n\}$ be a set of subjects and $\mathcal{C}=\{C_1, \ldots , C_k\}$ a set of categories ordered by $\sigma$. Let $\mathcal{T}_c=\dset{(t_0, \ldots t_m)}{\dfnc{t_0,\ldots, t_m}{S}{\mathcal{C}} \wedge \ \forall \ t_i, i \in \{1, \ldots m\}, \ \forall s \in S: t_{i-1}(s)\preceq_{\sigma} t_i(s) }$.
    \begin{enumerate}
        \item [i.)] Any two subjects can cross at most once per test. There are $\binom{n}{2}$ different pairs of subjects and at most they each cross between each two subsequent tests. As there are $m+1$ tests, we get $ecr(S,\mathcal{C},\mathcal{T}_c, \sigma) \leq m\binom{n}{2}$.
        \item [ii.)] Let $s, s' \in S$ be two subjects. If they cross once, then $k \geq 3$. For each subsequent crossing, at least one more category is needed, hence a pair of subjects crosses at most $k-2$ times. As there are $\binom{n}{2}$ different pairs, $ecr(S, \mathcal{C}, \mathcal{T}_c, \sigma) \leq (k-2)\binom{n}{2}$.
    \end{enumerate}
\end{proof}
In the following lemma, we give a lower bound for the extremal crossing number of a consistent OPD instance. The intuition for this lower bound comes from \cref{nondevisable}, where we argue that the distribution that yields a maximum number of crossings in an optimal drawing is such that each pair of subjects that have assigned different categories cross at each test. In a consistent instance, we replicate that and construct an example of a consistent ordinal panel instance $(S, \mathcal{C},T, \sigma)$ with aforementioned properties. We state the panel crossing number of such instances as a lower bound of extremal crossing number of a consistent OPD instance.
\begin{lemma}\label{lemma:LowerBound}\sloppy
    Let $S=\{s_1,\ldots,s_n\}$ be a set of subjects, $\mathcal{C}=\{C_1, \ldots , C_k\}$ a set of categories ordered by $\sigma$ and let $\mathcal{T}_c=\dset{(t_0, \ldots t_m)}{\dfnc{t_0,\ldots, t_m}{S}{\mathcal{C}} \wedge \ \forall \ t_i, i \in \{1, \ldots m\}, \ \forall s \in S: t_{i-1}(s)\preceq_{\sigma} t_i(s) }$ be a set of consistent sequences of tests assigning categories in $\mathcal{C}$ to subjects in $S$. Let $k'= \max\{\lceil \frac{k}{m+1} \rceil, 2\}$ and $n=k'x+y$, where $0 \leq y < k'$. Then, $$ecr(S, \mathcal{C}, \mathcal{T}_c, \sigma) \geq \frac{m}{2}y(n-x^2+x(n-2)-1)+(k'-y)x(n-x).$$
\end{lemma}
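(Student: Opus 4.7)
The plan is to prove the bound constructively by exhibiting a consistent OPD instance $(S,\mathcal{C},T,\sigma)$ whose panel crossing number equals the claimed value. First, I partition $S$ into $k'$ groups $G_1,\dots,G_{k'}$, where $y$ of them have size $x+1$ and the remaining $k'-y$ have size $x$ (so the sizes sum to $k'x+y=n$). All subjects of the same group will always share a category, so I can simply describe how \emph{groups} move through $\mathcal{C}$.

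Next, I design a test sequence in which the relative order of groups is reversed between every two consecutive tests. Concretely, I set $t_0(G_j)=C_j$ for $j=1,\dots,k'$, and then for $i\ge 1$ I shift a ``window'' of $k'$ consecutive categories upward by $k'-1$ positions. On even indices the groups appear in the order $G_1<G_2<\dots<G_{k'}$, and on odd indices in the reverse order. A direct calculation shows that the window at $t_i$ is $[C_{1+i(k'-1)},\,C_{k'+i(k'-1)}]$ and that the resulting category of each subject is non-decreasing, so the instance is consistent. The highest category needed is $C_{k'+m(k'-1)}=C_{k'(m+1)-m}$, and the definition $k'=\max\{\lceil k/(m+1)\rceil,2\}$ is precisely arranged so that $k\ge k'(m+1)-m$ holds, hence the construction fits in $\mathcal{C}$.

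I then verify the panel crossing number of this instance. Because the relative order of any two groups $G_j,G_{j'}$ flips at every transition, every pair of subjects in distinct groups overtakes each other (in the sense of \cref{lemma1S}), and thus every such pair contributes a strongly forced crossing between $t_i$ and $t_{i+1}$. Using $\tfrac12\sum_j |G_j|(n-|G_j|)$ for the number of pairs across distinct groups, one transition contributes
\[
\tfrac{1}{2}\bigl[y(x+1)(n-x-1)+(k'-y)x(n-x)\bigr]
\]
strongly forced crossings, and summing over the $m$ transitions gives $m$ times this quantity. Applying \cref{theorem2}, the optimal combinatorial layout realizes exactly these forced crossings, so this is $pcr(S,\mathcal{C},T,\sigma)$. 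Finally, the identity $(x+1)(n-x-1)=n-x^2+x(n-2)-1$ rewrites the sum in the exact form stated in the Lemma.

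The main obstacle will be the consistency verification: one must show that shifting the window by only $k'-1$ (rather than the wasteful $k'$) at each transition still allows the reverse assignment, i.e., that every $G_j$ can be placed above where $G_{j+1}$ sat at the previous test without any group's category ever decreasing. A short inductive argument on $i$, tracking that $t_i(G_j)$ equals the $j$-th (or $(k'{-}j{+}1)$-th) element of the current window, handles this. The boundary case where $\lceil k/(m+1)\rceil<2$ forces $k'=2$ and is handled by the same ``alternating pair with stationary partner'' construction, which uses exactly $m+2$ categories and matches the formula with $k'=2$.
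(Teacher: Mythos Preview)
Your construction is essentially identical to the paper's: partition $S$ into $k'$ groups of sizes $x$ or $x+1$, place them in a window of $k'$ consecutive categories that shifts upward by $k'-1$ at each step while the intra-window order is reversed, and count $\tfrac12\sum_j|G_j|(n-|G_j|)$ strongly forced crossings per transition before multiplying by $m$. Your exposition is in fact more explicit than the paper's about consistency and about the window fitting inside $\mathcal{C}$.

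One caveat on the boundary case: your claim that $k\ge k'(m+1)-m$ follows from the definition of $k'$ is correct only when $\lceil k/(m+1)\rceil\ge 2$. When $\lceil k/(m+1)\rceil=1$ (i.e.\ $k\le m+1$), the $\max$ forces $k'=2$, and the construction then requires $m+2>k$ categories, so it does not fit; your ``alternating pair with stationary partner'' is precisely the $k'=2$ instance of the same sliding-window construction and inherits this. The paper's proof has the same gap (it never verifies $h_m\le k$), and indeed for parameters such as $k=3$, $m=2$, $n=4$ the asserted lower bound already exceeds the upper bound $(k-2)\binom{n}{2}$ of \cref{lemma:UpperBound}, so the lemma as stated appears to need the side condition $k\ge m+2$ in the regime $k'=2$.
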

\begin{proof}
We will construct a consistent ordinal panel data instance $(S, \mathcal{C},T, \sigma)$ with the stated number of crossings.
For the test $t_0$, we place the $n$ subjects from $S$ into the first $k'$ categories $C_1$ to $C_{k'}$, such that $k'-y$ of them contain $x$ subjects and the remaining $y$ categories contain $x+1$ subjects. 

For the further construction, we will identify bundles of $k'-1$ categories, respecting the ordering $\sigma$.
For $\ell \in \lset{0}{m}$, we set $g_\ell:=\ell (k'-1)+1$ and $h_\ell:=\ell (k'-1)+k'$.
These numbers $g_\ell$ and $h_\ell$ represent the first and last index of the categories from the $\ell$-th bundle, namely of categories $C_{g_\ell}, C_{g_{\ell+1}}, \ldots, C_{h_\ell}$.
In addition, we have $g_{\ell+1}=h_\ell$. 

Using the bundles, we can construct $t_1, \ldots, t_m$ such that for $\ell \in \lset{1}{m}$ and $j \in [g_\ell,h_\ell]$: $\forall s \in S(t_{\ell-1}, C_j)$, we have $s \in S(t_\ell, C_{h_{\ell-1}+(g_\ell-j)})$.

\begin{figure} [htb]
    \centering
    \input{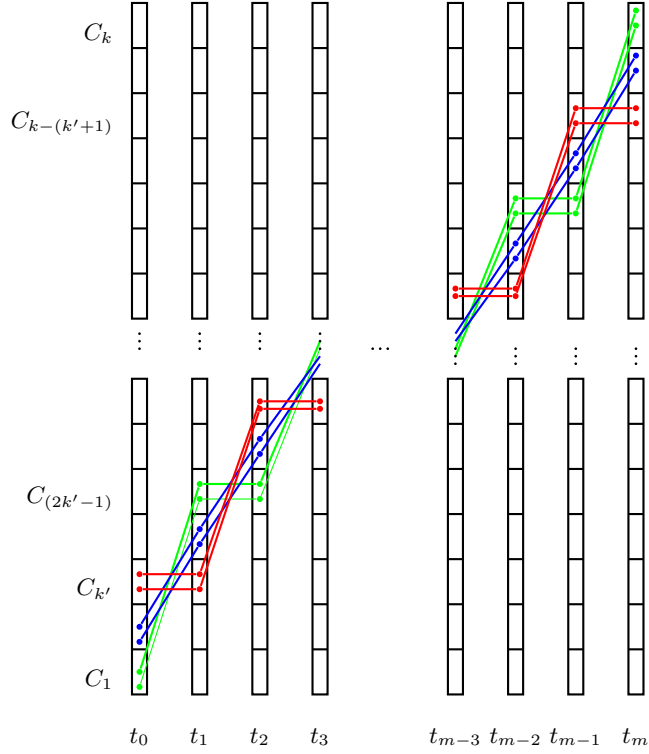}
    \caption{Sketch of the construction of an OPD instance $(S, \mathcal{C}, T, \sigma)$, where $|S|=n$, $|\mathcal{C}|=k$, $|T|=m$ and $k'= max\{\lceil \frac{k}{m+1} \rceil, 2\}$.}
    \label{fig:consistent_construction}
\end{figure}

Note that the construction (\cref{fig:consistent_construction}) is a consistent ordinal panel data instance. On each set of its consecutive timestamps, we create crossings between every such pair of subjects, where the subjects don't share a category on those timestamps. On each set of timestamps, such construction yields 
\begin{eqnarray*}
    \frac{1}{2}\left( \sum\limits_{i=1}^{y} (x+1)(n-x-1)+\sum\limits_{i=y+1}^{k'}x(n-x)\right)&=& \\
    =\frac{y(x+1)(n-x-1)+(k'-y)x(n-x)}{2}& &
\end{eqnarray*}
crossings. For $m+1$ timestamps, we get $\frac{m}{2}(y(x+1)(n-x-1)+(k'-y)x(n-x))$ crossings, which is equal to $\frac{m}{2}(y(n-x^2+x(n-2)-1)+(k'-y)x(n-x))$. We found a consistent ordinal panel data instance $(S, \mathcal{C}, T, \sigma)$, for which $pcr(S, \mathcal{C}, T, \sigma)=\frac{m}{2}(y(n-x^2+x(n-2)-1)+(k'-y)x(n-x))$, thus $ecr(S, \mathcal{C}, \mathcal{T}_c, \sigma) \geq \frac{m}{2}(y(n-x^2+x(n-2)-1)+(k'-y)x(n-x))$.
\end{proof}

The above lower bound gives a slightly complicated term. In the following we show that a weaker lower bound is implied which is exactly one half of the upper bound of \cref{lemma:UpperBound}. 
\begin{corollary}\sloppy
    Let $S=\{s_1,\ldots,s_n\}$ be a set of subjects, $\mathcal{C}=\{C_1, \ldots , C_k\}$ a set of categories ordered by $\sigma$, and let $\mathcal{T}_c=\dset{(t_0, \ldots t_m)}{\dfnc{t_0,\ldots, t_m}{S}{\mathcal{C}} \wedge \ \forall \ t_i, i \in \{1, \ldots m\}, \ \forall s \in S: t_{i-1}(s)\preceq_{\sigma} t_i(s) }$ be a set of consistent sequences of tests assigning categories in $\mathcal{C}$ to subjects in $S$. Then, $$\frac{1}{2}\binom{n}{2}\min(k-2,m) \leq ecr(S, \mathcal{C}, \mathcal{T}_c, \sigma) \leq \binom{n}{2}\min(k-2, m).$$
\end{corollary}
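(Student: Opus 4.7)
The upper bound is immediate from \cref{lemma:UpperBound}, which already supplies both $ecr(S,\mathcal{C},\mathcal{T}_c,\sigma) \le m\binom{n}{2}$ and $ecr(S,\mathcal{C},\mathcal{T}_c,\sigma) \le (k-2)\binom{n}{2}$; taking the minimum of these two yields the claim.

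For the lower bound, my plan is to apply \cref{lemma:LowerBound} and then reinterpret its right-hand side combinatorially. With $k' = \max\{\lceil k/(m+1)\rceil, 2\}$ and $n = k'x + y$, $0 \le y < k'$, the bracket $y(x+1)(n-x-1) + (k'-y)x(n-x)$ is exactly twice the number of unordered pairs of subjects placed in distinct categories when $n$ subjects are distributed as evenly as possible among $k'$ categories, i.e., $y$ parts of size $x+1$ and $k'-y$ parts of size $x$. Equivalently, this quantity equals $2\binom{n}{2} - 2y\binom{x+1}{2} - 2(k'-y)\binom{x}{2}$. Thus \cref{lemma:LowerBound} gives a lower bound of $m$ times the cross-category pair count of the most balanced partition of $n$ into $k'$ parts.

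The core estimate I need is that this cross-category pair count is at least $\tfrac{1}{2}\binom{n}{2}$ whenever $k' \ge 2$. A short simplification shows the same-category pair count equals $\tfrac{x(n+y-k')}{2}$, so the claim reduces to $2x(n+y-k') \le n(n-1)$; this follows from the two elementary inequalities $n+y-k' \le n-1$ (since $y<k'$) and $2x \le n$ (since $k' \ge 2$ and $n = k'x+y$). With this in place, \cref{lemma:LowerBound} delivers $ecr(S,\mathcal{C},\mathcal{T}_c,\sigma) \ge \tfrac{m}{2}\binom{n}{2}$, and since $\min(k-2,m) \le m$, this is at least $\tfrac{1}{2}\binom{n}{2}\min(k-2,m)$. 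I do not anticipate a real obstacle; the only nontrivial step is spotting the combinatorial meaning of the bracket in \cref{lemma:LowerBound}, after which the two elementary bounds $n+y-k'\le n-1$ and $2x\le n$ finish the argument.
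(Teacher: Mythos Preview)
Your proof is correct and takes a cleaner route than the paper's. The paper splits into the two cases $k-2 \le m$ and $m < k-2$, computes $k'$ explicitly in each, and then verifies the target inequality by expanding both sides algebraically (including a proof-by-contradiction in the second case). You instead give a uniform argument: recognizing the bracket in \cref{lemma:LowerBound} as $\sum_i a_i(n-a_i)$ for the balanced partition of $n$ into $k'$ parts lets you reduce everything to the single combinatorial fact that at least half of all $\binom{n}{2}$ pairs are cross-category whenever $k'\ge 2$, which you finish with the two one-line bounds $2x\le n$ and $n+y-k'\le n-1$. This in fact yields the slightly stronger conclusion $ecr \ge \tfrac{m}{2}\binom{n}{2}$ directly (the paper only obtains this in its second case), from which the stated lower bound follows via $\min(k-2,m)\le m$. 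Your approach avoids the case split entirely and is conceptually more transparent; the paper's case analysis buys nothing extra here.
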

\begin{proof}\sloppy
 Let $S=\{s_1,\ldots,s_n\}$ be a set of subjects, $\mathcal{C}=\{C_1, \ldots , C_k\}$ a set of categories ordered by $\sigma$, and let $\mathcal{T}_c=\dset{(t_0, \ldots t_m)}{\dfnc{t_0,\ldots, t_m}{S}{\mathcal{C}} \wedge \ \forall \ t_i, i \in \{1, \ldots m\}, \ \forall s \in S: t_{i-1}(s)\preceq_{\sigma} t_i(s) }$ be a set of consistent sequences of tests assigning categories in $\mathcal{C}$ to subjects in $S$. Let $k'= max\{\lceil \frac{k}{m+1} \rceil, 2\}$ and $n=k'x+y$, where $0 \leq y < k'$.
Note that $ecr(S, \mathcal{C}, \mathcal{T}_c, \sigma) \leq min(k-2, m)\binom{n}{2}$ by \cref{lemma:UpperBound}. 

By \cref{lemma:LowerBound}, $$\frac{m}{2}(y(n-x^2+x(n-2)-1)+(k'-y)x(n-x)) \leq ecr(S, \mathcal{C}, \mathcal{T}_c, \sigma), $$ thus it is enough to prove $$\frac{1}{2}\binom{n}{2}\min(k-2,m) \leq \frac{m}{2}(y(n-x^2+x(n-2)-1)+(k'-y)x(n-x)).$$
We divide the proof in two cases.
First, let $k-2 \leq m$. This implies $k \leq m+1$ and $\lceil \frac{k}{m+1}\rceil=1$. Thus, $k'=max\{\lceil \frac{k}{m+1} \rceil, 2\}=2$ and $n=2x+y$, where $y \in \{0,1\}$. 

Let $n=2x+1$. By inserting the values in the inequation, we get $$\frac{2x^2+x}{2}(k-2) \leq \frac{2x^2+2x}{2}m,$$ which holds, as $k-2 \leq m$ and $x \geq 0$. 

Now, let $n=2x$. By inserting the values, we get
 $$\frac{2x^2-x}{2}(k-2) \leq x^2m,$$ which, again, holds by $k-2 \leq m$ and $x \geq 0$.
We have thus proven the claim for $k-2 \leq m$.
In the second case, we assume $m < k-2$. In such cases, $k'=\lceil \frac{k}{m+1}\rceil$ and $k' \geq 2$. Let $n=k'x+y$ for $0 \leq y < k'$. By inserting the values, we get $$\frac{m}{2}(2k'xy+y^2+(k'x)^2-k'x^2-2xy-y) \geq \frac{m}{4}((k'x)^2+2k'xy+y^2-k'x-y).$$ 

Assume that the opposite holds. The inequation simplifies to $$2xy(k'-2)+y(y-1)+k'x^2(k'-2)+k'x<0.$$

As $k' \geq 2$ and $x,y \geq 0$, $2xy(k'-2) \geq 0, k'x^2(k'-2) \geq 0$ and $k'x \geq 0$. Thus, the above inequality can hold if and only if $y(y-1)<0$ and $y(y-1) > 2xy(k'-2)+k'x^2(k'-2)+k'x$. Note that $y(y-1) < 0$ implies that $y-1<0$ which can only occur when $y=0$, thus $y(y-1)=0$ and $0 > 2xy(k'-2)+k'x^2(k'-2)+k'x$, which is a contradiction. Thus, $$\frac{m}{2}(2k'xy+y^2+(k'x)^2-k'x^2-2xy-y) \geq \frac{m}{4}((k'x)^2+2k'xy+y^2-k'x-y).$$ and the claim holds also for $k-2 > m$, concluding the proof.
\end{proof}

\section{Random instances}\label{sec:expected}

In this section, we find the expected number of crossings for a certain type of random instances.
We consider the probability model of choosing one category uniformly at random for each subject at each timestamp. More precisely, we assume 
\begin{equation*}
\forall \, 1 \leq i \leq m, 1 \leq j \leq n, 1 \leq \ell \leq k : \mathbb{P}[t_i(s_j) = c_\ell] = \frac{1}{k}.
\end{equation*}

\begin{theorem}\label{thm:random}
Let $n$ be the number of subjects, $k > 1$ the number of categories, and $m+1$ the number of timestamps. If we choose independently for each subject at each timestamp one of the categories uniformly at random, then the expected number of crossings equals
\begin{equation*}
\mathbb{E}(pcr(S,\mathcal{C},T,\sigma)) = \binom{n}{2} \frac{\left( \frac{1}{k} \right)^{m} + m(k-1)-1}{2k}.
\end{equation*}
\end{theorem}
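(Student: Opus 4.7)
By linearity of expectation, I would reduce the computation to a single pair of subjects. Writing the total panel crossing number as $pcr(S,\mathcal{C},T,\sigma) = \sum_{\{s,s'\} \subseteq S} X_{s,s'}$, where $X_{s,s'}$ is the number of crossings involving $s$ and $s'$ in the optimal layout, the symmetry among subjects gives $\mathbb{E}[pcr] = \binom{n}{2}\, \mathbb{E}[X_{s_1,s_2}]$. The whole work is then to compute $\mathbb{E}[X_{s_1,s_2}]$ for one fixed pair.

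The key combinatorial step is to characterise $X_{s,s'}$ from the pair's category history. For each timestamp $t_i$ let $r_i \in \{L,E,G\}$ encode whether $t_i(s)$ is less than, equal to, or greater than $t_i(s')$ under $\sigma$. Combining \cref{lemma1S} (strongly forced crossings, where adjacent $r_{i-1},r_i$ are non-$E$ and differ) with \cref{lemma2} (weakly forced crossings, where a maximal $E$-run is flanked by opposite non-$E$ values) and the optimality guaranteed by \cref{theorem2}, I claim that $X_{s,s'}$ equals the number of sign changes in the subsequence of $(r_0,\dots,r_m)$ obtained by deleting all $E$'s. Equivalently,
\begin{equation*}
X_{s,s'} \;=\; \sum_{0 \le i < j \le m} \mathbf{1}\!\left[\,r_i \neq r_j,\; r_i,r_j \in \{L,G\},\; r_{i+1}=\dots=r_{j-1}=E\,\right].
\end{equation*}

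Because at every timestamp each subject picks a category independently and uniformly, the variables $r_0,r_1,\dots,r_m$ are i.i.d.\ with $p_E := \mathbb{P}[r_i=E] = 1/k$ and, by symmetry, $p_L := \mathbb{P}[r_i=L] = \mathbb{P}[r_i=G] = (k-1)/(2k)$. Taking expectations and using independence across timestamps, the probability of a single $(i,j)$-term is $2p_L^2\,p_E^{\,j-i-1}$. Grouping by the gap $d=j-i$ (for which there are $m+1-d$ pairs),
\begin{equation*}
\mathbb{E}[X_{s_1,s_2}] \;=\; 2p_L^2 \sum_{d=1}^{m} (m+1-d)\, p_E^{\,d-1}.
\end{equation*}

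The final step is algebraic simplification. A short geometric-series manipulation gives $\sum_{d=1}^m (m+1-d) p_E^{d-1} = \bigl(m(1-p_E) - p_E + p_E^{m+1}\bigr)/(1-p_E)^2$, and since $2p_L^2/(1-p_E)^2 = 1/2$ by direct substitution of $p_L=(k-1)/(2k)$ and $1-p_E=(k-1)/k$, one obtains
\begin{equation*}
\mathbb{E}[X_{s_1,s_2}] \;=\; \tfrac{1}{2}\bigl[m - (m+1)/k + (1/k)^{m+1}\bigr] \;=\; \frac{(1/k)^m + m(k-1) - 1}{2k}.
\end{equation*}
Multiplying by $\binom{n}{2}$ yields the claimed formula.

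The only genuinely delicate point is the characterisation of $X_{s,s'}$: one needs to confirm that in an optimal layout the crossings between a given pair are exactly the sign changes of the $\{L,E,G\}$-sequence, with no additional crossings introduced by interactions with other subjects. This follows because \cref{theorem2} guarantees a layout in which every crossing is strongly or weakly forced, and both types of forced crossings depend only on the pair's own category trajectory, so $X_{s,s'}$ is a function of $(r_0,\dots,r_m)$ alone. The remainder of the argument is a routine application of linearity of expectation and summation of a geometric-type series.
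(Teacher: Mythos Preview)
Your proposal is correct and follows essentially the same route as the paper: both reduce to a single pair by linearity of expectation, both recognise that a crossing between $s,s'$ is indexed by a pair of timestamps $i<j$ where the relative order is strict and opposite at $i$ and $j$ with equality in between, and both sum $\tfrac{(1-1/k)^2}{2}\,(1/k)^{j-i-1}$ over the $m+1-d$ pairs at each gap $d$ before applying a geometric-series identity. Your $\{L,E,G\}$ encoding and explicit appeal to \cref{theorem2} to justify that $X_{s,s'}$ depends only on the pair's own trajectory make the decomposition a little more transparent than the paper's presentation, but there is no substantive difference in method.
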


\begin{proof}
We consider first only $2$ subjects and $2$ timestamps. Then we get a strongly forced crossing if and only if, for each of the $2$ timestamps, the subjects get placed into different categories and the order of the respective categories changes from the first to the second timestamp. In other words, the probability of a crossing is
\begin{equation*}
\frac{\left( 1 - \frac{1}{k} \right)^2}{2}
\end{equation*}
because the second subject should not be placed into the same category as the first subjects on both timestamps and in half of the respective cases the order of categories gets inverted.

Next we consider more than $2$ timestamps and the probability that a weakly forced crossing happens between $2$ subjects. Again we need the subjects to get placed into different and inverted categories on the first and last timestamp. In addition, they need to get placed into the same category on every intermediate timestamp. If there are $i$ intermediate timestamps, then the probability of a weakly forced crossing is
\begin{equation*}
\frac{\left( 1 - \frac{1}{k} \right)^2}{2} \left( \frac{1}{k} \right)^{i}.
\end{equation*}

Further note that, with $m+1$ timestamps, there are $m$ different events for a strongly forced crossing and $m-i$ different events for a weakly forced crossing with $i$ intermediate timestamps. The random variable of the total number of crossings is the sum over all those individual crossing events summed over all pairs of subjects. By linearity of expectation we get

\begin{equation*}
\begin{split}
&\mathbb{E}(pcr(S,\mathcal{C},T,\sigma)) = \binom{n}{2} \sum_{i=0}^{m-1} \frac{\left( 1 - \frac{1}{k} \right)^2}{2} \left( \frac{1}{k} \right)^{i} \left( m-i \right)\\
\intertext{which after some transformations}
&=   \binom{n}{2} \frac{\left( 1 - \frac{1}{k} \right)^2}{2} \left[ m\sum_{i=0}^{m-1} \left(\frac{1}{k} \right)^{i}- \sum_{i=0}^{m-1} i  \left(\frac{1}{k} \right)^{i}\right]\\
\end{split}
\end{equation*}
\begin{equation*}
\begin{split}
\intertext{and applying formulas for geometric sums}
&= \binom{n}{2} \frac{\left( 1 - \frac{1}{k} \right)^2}{2} \left[ m \frac{1 - \left( \frac{1}{k} \right)^{m}}{1 - \frac{1}{k}} - \frac{k^{1-m}\left(k^m-km+m-1\right)}{\left(k-1\right)^2} \right]\\
\intertext{simplifies to the desired term}
&= \binom{n}{2} \frac{\left( \frac{1}{k} \right)^{m} + m(k-1)-1}{2k}.\qedhere
\end{split}
\end{equation*}
\end{proof}

\section{Optimal Maturity Models}\label{sec:optimize}

While the previous sections addressed the first challenge
of improved understanding of the structure of 
maturity models, we conclude in this final section
with a discussion of the second challenge, 
developing optimal maturity models from 
the collected data \cite{Bach}. 
While there may be several optimality
criteria, we discuss those that can be observed in 
weakest data collection process and require only
ordinal panel data about the maturing process to 
be finalized.

These optimality criteria ask to 
minimize the number of conflicts of 
the observed data with the ideal model. There are two 
types of conflicts that can happen: either 
the maturity of the observed subjects is inconsistent,
resulting in subjects overtaking and regressing over 
each other during the maturing process, 
or the subjects regress in maturity over time. 
This latter conflict has been studied extensively
as feedback arc set problem \cite{younger1963minimum},
essentially identifying the ordering of categories 
that minimizes the number of regressions 
of subjects. This model is relevant in idealised circumstances with no
natural deterioration process, 
hence the progress of subjects
depends on their internal traits allowing them to at least
maintain if not improve their maturity. We do not
focus on this model in the paper. 

The conflicts of the first model, however, 
result in crossings 
in the drawing of the underlying traces of subjects in 
the tests over time (see \cref{fig:ordinapanelexample} 
for an example of such a drawing). 
The ordering of categories
that minimizes this conflict measure can be interpreted
as considering external environment of the subjects 
to be inducing or preventing progress in maturity, hence
the subjects all progress or regress at the same time.

For both of these criteria, rather than treating the 
order of categories as part of the data
of an ordinal panel data instance $(S,\cC,T,\sigma)$, 
only the panel data instance $(S,\cC,T)$ is given, and 
the ordering $\sigma$ is computed by minimizing the 
number of discrepancies $\sigma$ produces in 
the observed data.
%

We want to find an ordering $\sigma$ of the categories that allows for the least number of crossings. 
If we are given $\sigma$, we can find in polynomial time  the combinatorial 
layout with the minimum number of crossings respecting $\sigma$. The decision problem of finding $\sigma$ that respects a given upper bound of crossings is given below.
\probdef{\panelcrossmin}
{A panel data instance $(S,\mathcal{C},T)$ and an integer $k$.}
{Does there exist $\sigma\in\Pi(\mathcal{C})$ such that $pcr(S,\mathcal{C},T,\sigma)\le k$?}
We prove \NP-completeness of this problem and state an ILP-formulation for the corresponding optimization problem below.


\subsection{\NP-completeness.}
We start by giving \NP-completeness of the decision problem.
\begin{restatable}{theorem}{thmnpcomp}\label{thm:NPcomp}
    \panelcrossmin\ is \NP-complete, even if the number of tests is bounded by 2.
\end{restatable}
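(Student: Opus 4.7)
The plan is in two parts: verifying \NP{}-membership, then proving \NP{}-hardness by a reduction tailored to the two-test setting.

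For \NP{}-membership, given an instance $(S,\mathcal{C},T,k)$ I would nondeterministically guess a linear ordering $\sigma\in\Pi(\mathcal{C})$ and invoke the polynomial-time algorithm of \cref{theorem2} to compute $pcr(S,\mathcal{C},T,\sigma)$, accepting iff this value is at most $k$. The guess has polynomial size and the verification is polynomial, so the problem lies in \NP.

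For \NP{}-hardness, observe that with $|T|=2$ each subject $s$ contributes an ordered pair $(t_0(s),t_1(s))\in\mathcal{C}^2$, and by the strongly-forced-crossing characterization of \cref{lemma1S} combined with the optimality of the layout from \cref{theorem2} the value $pcr(S,\mathcal{C},T,\sigma)$ equals the number of pairs of subjects whose start- and end-categories appear in opposite relative order under $\sigma$. Equivalently, we seek a single ordering $\sigma$ of the vertex set $\mathcal{C}$ that minimizes the crossing count of a two-layer bipartite drawing of the induced arc multiset in which both layers share the ordering $\sigma$. This naturally suggests a reduction from the \NP-complete bipartite crossing minimization problem (BCM): given a bipartite graph $G=(U\cup V,E)$ and threshold $\ell$, decide whether there exist orderings $\pi_U$ of $U$ and $\pi_V$ of $V$ producing at most $\ell$ edge crossings. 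Given such an instance I would set $\mathcal{C}=U\cup V$ and, for each edge $\{u,v\}\in E$ with $u\in U,\ v\in V$, add a subject $s_{u,v}$ with $t_0(s_{u,v})=u$ and $t_1(s_{u,v})=v$. For any $\sigma$ placing all of $U$ strictly before all of $V$, $pcr$ coincides with the BCM crossing count for $(\sigma|_U,\sigma|_V)$, and any BCM solution can be lifted to such a $\sigma$.

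The main obstacle is enforcing that an optimal $\sigma$ in fact separates $U$ from $V$, since \emph{a priori} an interleaved $\sigma$ might be cheaper. I would handle this by augmenting the construction with a polynomial-size family of auxiliary ``barrier'' subjects and by simultaneously inflating each edge-subject through $M$ parallel copies, for a polynomial weight $M$. The barriers would be arcs between designated $V$- and $U$-categories, possibly routed through auxiliary categories so that barrier--barrier interactions are controlled, and designed so that by \cref{lemma1S} every $\sigma$ interleaving $U$ and $V$ incurs $\Omega(M^2)$ extra strongly forced crossings, while every separating $\sigma$ contributes the same $\sigma$-independent offset. Choosing $M$ polynomially larger than the (polynomially bounded) maximum possible BCM crossing count forces the minimum of $pcr$ to be attained at a separating $\sigma$, and setting $k:=\ell\cdot M^2+\mathrm{offset}$ completes the reduction. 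The delicate technical point — the step I expect to require the most care — is calibrating the barriers so that the offset is provably independent of the specific separating $\sigma$ and so that the interleaving penalty strictly dominates any saving achievable on BCM-type crossings.
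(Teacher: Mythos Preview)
Your \NP-membership argument and the basic reduction setup (categories $\mathcal{C}=U\cup V$, one subject per edge with $t_0$ landing in $U$ and $t_1$ in $V$) match the paper exactly. The divergence is in the backward direction: you treat ``$\sigma$ might interleave $U$ and $V$'' as the main obstacle and propose barrier subjects and weight inflation to force a separated $\sigma$. The paper shows this obstacle is illusory and no gadgetry is needed.

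The key observation you are missing is that with exactly two tests there are no weakly forced crossings, so $pcr(S,\mathcal{C},T,\sigma)$ equals the number of strongly forced crossings. For two subjects coming from edges $\{v_1,v_2\}$ and $\{w_1,w_2\}$ (with $v_1,w_1\in U$ and $v_2,w_2\in V$), the strongly-forced condition reads ``$v_1\prec_\sigma w_1$ and $v_2\succ_\sigma w_2$'' (or the reverse). But $v_1,w_1$ both lie in $U$ and $v_2,w_2$ both lie in $V$, so this condition depends only on $\sigma|_U$ and $\sigma|_V$, not on how $U$ and $V$ are interleaved in $\sigma$. Hence for \emph{any} $\sigma$ with $pcr\le k$, setting $\tau_U=\sigma|_U$ and $\tau_V=\sigma|_V$ already gives a bipartite layout with at most $k$ crossings. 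The paper's backward direction is this one-line restriction argument; no barriers, no weighting, no offset calibration. Your proposed construction might in principle be made to work, but as written it is only a sketch (the barrier design and the $\sigma$-independence of the offset are left as the ``delicate technical point''), and it is solving a problem that does not exist.
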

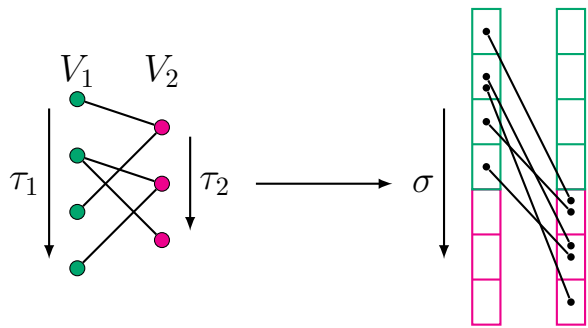
\begin{figure}[t]
    \centering
    \usetikzlibrary{arrows.meta, arrows, calc}
\begin{tikzpicture}
\begin{scope}[scale=0.75]
\node [circle, fill=cyan!20!green, draw=black,  inner sep=2pt] (v1) at (-4.5,4) {};
\node [circle, fill=cyan!20!green, draw=black,  inner sep=2pt] (v4) at (-4.5,3) {};
\node [circle, fill=cyan!20!green,draw=black,    inner sep=2pt] (v7) at (-4.5,1) {};
\node [circle, fill=cyan!20!green, draw=black,   inner sep=2pt] (v3) at (-4.5,2) {};
\node [circle, fill=magenta, draw=black,  inner sep=2pt] (v2) at (-3,3.5) {};
\node [circle, fill=magenta, draw=black,  inner sep=2pt] (v5) at (-3,2.5) {};
\node [circle, fill=magenta, draw=black,  inner sep=2pt] (v6) at (-3,1.5) {};
\draw [thick, color=black] (v1) edge (v2);
\draw [thick, color=black] (v2) edge (v3);
\draw [thick, color=black] (v4) edge (v5);
\draw [thick, color=black] (v4) edge (v6);
\draw [thick, color=black] (v5) edge (v7);
\node[draw=none, text=black] at (-4.5,4.5){\Large $V_1$};
\node[draw=none, text=black] at (-3,4.5){\Large $V_2$};
\node (v1) at (-5,4) {};
\node (v2) at (-5,1) {};
\draw[thick, color=black, -{latex[scale=1.5]}](v1)--(v2);
\node[draw=none, text=black, left] at (-5,2.5){\Large $\tau_1$};
\node (v3) at (-2.5, 3.5){};
\node (v4) at (-2.5, 1.5){};
\draw[thick, color=black, -{latex[scale=1.5]}](v3)--(v4);
\node[draw=none, text=black, right] at (-2.5,2.5){\Large $\tau_2$};
\node (v8) at (-1.5,2.5) {};
\node (v9) at (1.25,2.5) {};
\draw [thick, color=black, -{latex[scale=1.5]}] (v8) edge (v9);
\node (v10) at (2,4) {};
\node  (v11) at (2,1) {};
\draw [thick, color=black, -{latex[scale=1.5]}] (v10) edge (v11);
\node[draw=none, text=black, left] at (2,2.5){\Large $\sigma$};
\draw [thick, magenta] (2.5,0) rectangle (3,2.4);
\draw [thick, magenta] (2.5,0.8)--(3,0.8);
\draw [thick, magenta] (2.5, 1.6)--(3, 1.6);
\draw [thick, cyan!20!green] (2.5,2.4) rectangle (3,5.6);
\draw [thick, cyan!20!green] (2.5,3.2)--(3,3.2);
\draw [thick, cyan!20!green] (2.5, 4)--(3, 4);
\draw [thick, cyan!20!green] (2.5, 4.8)--(3, 4.8);

\draw [thick, magenta] (4,0) rectangle (4.5,2.4);
\draw [thick, magenta] (4,0.8)--(4.5,0.8);
\draw [thick, magenta] (4, 1.6)--(4.5, 1.6);
\draw [thick, cyan!20!green] (4,2.4) rectangle (4.5,5.6);
\draw [thick, cyan!20!green] (4,3.2)--(4.5,3.2);
\draw [thick, cyan!20!green] (4, 4)--(4.5, 4);
\draw [thick, cyan!20!green] (4, 4.8)--(4.5, 4.8);
\node [circle, fill=black,  inner sep=1pt] (v12) at (2.75,5.2) {};
\node [circle, fill=black,  inner sep=1pt] (v14) at (2.75,4.4) {};
\node [circle, fill=black,  inner sep=1pt] (v16) at (2.75,4.2) {};
\node [circle, fill=black,  inner sep=1pt] (v18) at (2.75,3.6) {};
\node [circle, fill=black,  inner sep=1pt] (v20) at (2.75,2.8) {};
\node [circle, fill=black,  inner sep=1pt] (v13) at (4.25,2.2) {};
\node [circle, fill=black,  inner sep=1pt] (v19) at (4.25,2) {};
\node [circle, fill=black,  inner sep=1pt] (v15) at (4.25,1.4) {};
\node [circle, fill=black,  inner sep=1pt] (v21) at (4.25,1.2) {};
\node [circle, fill=black,  inner sep=1pt] (v17) at (4.25,0.4) {};
\draw [thick, color=black] (v12) edge (v13);
\draw [thick, color=black] (v14) edge (v15);
\draw [thick, color=black] (v16) edge (v17);
\draw [thick, color=black] (v18) edge (v19);
\draw [thick, color=black] (v20) edge (v21);
\end{scope}
\end{tikzpicture}
    \caption{To the left side is a bipartite graph corresponding to an instance of \textsc{Bipartite Crossing Number}. The two orderings $\tau_1$ and $\tau_2$ lead to 3 crossings. To the right side is the corresponding instance of \panelcrossmin\ and the ordering $\sigma$ obtained from $\tau_1$ and $\tau_2$ with 3 forced crossings.}
    \label{fig:panelcrossinghardness}
\end{figure}
\begin{proof}\sloppy
    As discussed already in \cref{sc:polynomial}, computing the value $pcr(S,\mathcal{C},T,\sigma)$ for a given permutation $\sigma$ can be done in polynomial time. Hence, \panelcrossmin is in \NP.

    For \NP-hardness, we give a reduction from the \textsc{Bipartite Crossing Number} problem, which is \NP-complete \cite{gareyCrossingNumberNPComplete1983}. The problem takes as input an integer $k$ and a bipartite graph $G=(V_1\cup V_2,E)$ where $V_1$ and $V_2$ are the two partition sets of graph's vertex set, and asks for two permutations $\tau_1\in\Pi(V_1)$ and $\tau_2\in \Pi(V_2)$ such that there are at most $k$ (unordered) pairs of edges that \emph{cross} w.r.t.\ $\tau_1$ and $\tau_2$. Two edges $\{v_1,v_2\},\{w_1,w_2\}\in E$ (here we assume $v_1,w_1\in V_1$) cross w.r.t.\ $\tau_1$ and $\tau_2$ iff.\ 
    \begin{itemize}
        \item $v_1\prec_{\tau_1} w_1$ and $v_2\succ_{\tau_2} w_2$, or
        \item $v_1\succ_{\tau_1} w_1$ and $v_2\prec_{\tau_2} w_2$.
    \end{itemize}
    We define the \panelcrossmin\ instance $(S,\mathcal{C},T)$ such that $S=E$, $\mathcal{C}=V_1\cup V_2$, and $T=\{t_1,t_2\}$ with
    \[t_1(e)=e\cap V_1, t_2(e)=e\cap V_2\]
    for $e\in S$.
    An illustration of this reduction is given in \cref{fig:panelcrossinghardness}. Next, we show that there exist permutations $\tau_1\in \Pi(V_1)$ and $\tau_2\in \Pi(V_2)$ with at most $k$ pairs of edges that cross if and only if there exists a permutation $\sigma\in \Pi(C)$ with $pcr(S,\mathcal{C},T,\sigma)\le k$. We argue both directions.

    ``$\Rightarrow$'': Let $\tau_1\in \Pi(V_1)$ and $\tau_2\in \Pi(V_2)$ with at most $k$ pairs of edges that cross in $G$. We set $\sigma$ as the concatenation of $\tau_1$ and $\tau_2$. That is, $\sigma$ is the ordering such that $C\prec_{\sigma} C'$ iff. 
    \begin{itemize}
        \item $C'\in V_1$ and $C\in V_2$, or
        \item $C,C'\in V_1$ and $C\prec_{\tau_1}C'$, or
        \item $C,C'\in V_2$ and $C\prec_{\tau_2}C'$.
    \end{itemize}
    It is easy to see that two edges $e_1,e_2$ cross w.r.t.\ $\tau_1$ and $\tau_2$ iff. the subjects corresponding to $e_1$ and $e_2$ strongly force a crossing between $t_1$ and $t_2$ (see \cref{fig:panelcrossinghardness}).

    ``$\Leftarrow$'': Let $\sigma\in \Pi(C)$ with $pcr(S,\mathcal{C},T,\sigma)\le k$. We obtain $\tau_1$ and $\tau_2$ by restricting $\sigma$ to $V_1$ and $V_2$ respectively. That is, for $v,w\in V_1$, $v\prec_{\tau_1} w$ iff.\ $v\prec_{\sigma}w$.
    Similarly for $v,w\in V_2$, $v\prec_{\tau_2} w$ iff.\ $v\prec_{\sigma}w$. 
    Again, two edges $e_1,e_2$ cross w.r.t.\ $\tau_1$ and $\tau_2$ iff. the subjects corresponding to $e_1$ and $e_2$ strongly force a crossing between $t_1$ and $t_2$.
\end{proof}\sloppy
We have now established \NP-hardness of deciding \panelcrossmin\ for arbitrary $k$. But it is still open to determine the complexity of deciding the problem for fixed $k$, in particular recognizing planar instances with $k=0$. The problem might be similar to two combinatorial graph drawing problems called  $\mathcal{T}$-level planarity testing and level planarity testing~\cite{DBLP:journals/tcs/AngeliniLBFR15,DBLP:conf/gd/JungerLM98}.
\begin{open}
    What is the computational complexity of deciding planar instances of \panelcrossmin, i.e.\ for $k=0$?
\end{open}

\subsection{Integer program formulation.}
The \NP-hardness given above motivates the following integer linear programming (ILP) formulation for the optimization problem. Note that the following formulation resembles formulations for classic crossing minimization problems in layered graph drawing \cite{DBLP:journals/jea/ChimaniHJM11,gjlm-cmsv-16,DBLP:conf/cocoa/ZhengB07}.

Let $\mathcal{C}^2_{\ne}=\{(C,C')\mid C,C'\in \mathcal{C},C\ne C'\}$.
Observe that we can attribute weakly and strongly forced crossings to specific ordering relations in the category ordering $\sigma$. Namely, consider a strongly forced crossing between subject $s_i$ and $s_j$, between the two tests $t_{\ell}$ and $t_{\ell+1}$. This implies that we have one of two possibilities:
    (i) $t_{\ell}(s_i)\prec_{\sigma}t_{\ell}(s_j)$ and $t_{\ell+1}(s_i)\succ_{\sigma}t_{\ell+1}(s_j)$, or
    (ii) $t_{\ell}(s_i)\succ_{\sigma}t_{\ell}(s_j)$ and $t_{\ell+1}(s_i)\prec_{\sigma}t_{\ell+1}(s_j)$.
Note that we might have that $t_{\ell}(s_i)=t_{\ell+1}(s_j)$ and/or $t_{\ell+1}(s_i)=t_{\ell}(s_j)$.
We say that the pair of pairs $((t_\ell(s_i), t_\ell(s_j)), (t_{\ell+1}(s_i), t_{\ell+1}(s_j)))$ is \emph{responsible} for the crossing between $s_i$ and $s_j$.
This motivates introducing binary variables $x_{C,C'}$ in the ILP for each $(C,C')\in \mathcal{C}^2_{\ne}$. Semantically, $x_{C,C'}$ should be 1 iff.\ $C\prec_\sigma C'$. Further the above condition of having the forced crossing between $s_i$ and $s_j$ at time $\ell$ can be simplified as $x_{t_{\ell}(s_i), t_{\ell}(s_j)}\ne x_{t_{\ell+1}(s_i), t_{\ell+1}(s_j)}$ (note that this might be a tautology if $t_{\ell}(s_i)=t_{\ell+1}(s_j)$ and $t_{\ell+1}(s_i)=t_{\ell}(s_j)$), and we have a crossing iff.\ the `exclusive or' of the two variables is 1.

Let now $\mathcal{C}^4_{<}=\{(C_{\alpha},C_{\beta}), (C_{\gamma}, C_{\delta})\in \mathcal{C}_{\ne}^2\times \mathcal{C}_{\ne}^2\mid (\alpha,\beta)< (\gamma,\delta), \alpha<\beta\}$ and let $((C_{\alpha},C_{\beta}), (C_{\gamma}, C_{\delta}))\in \mathcal{C}^4_{<}$ (the inequalities break symmetries and prevent double counting). We define $\mathrm{sc}((C_{\alpha},C_{\beta}),(C_{\gamma},C_{\delta}))$ as the number of strongly forced crossings for which $((C_{\alpha}, C_{\beta}), (C_{\gamma},C_{\delta}))$ is responsible for when $C_{\alpha}\prec_{\sigma}C_{\beta}$ ($C_{\alpha}\succ_{\sigma}C_{\beta}$) and $C_{\gamma}\succ_{\sigma}C_{\delta}$ ($C_{\gamma}\prec_{\sigma}C_{\delta}$). All of these values can be computed in amortized time $\mathcal{O}(|S|^2\cdot |T|)$ by iterating over all pairs of subjects and following their ``path'' through the categories over the set of tests in increasing order. As we prevent double counting, each strongly forced crossing computed in this way is only added to one value in $\textrm{sc}$.
In a similar way we define $\mathrm{wc}((C_\alpha,C_\beta),(C_\gamma,C_\delta$)) as the number of weakly forced crossings $((C_{\alpha}, C_{\beta}), (C_{\gamma},C_{\delta}))$ is responsible for when $C_{\alpha}\prec_{\sigma}C_{\beta}$ ($C_{\alpha}\succ_{\sigma}C_{\beta}$) and $C_{\gamma}\succ_{\sigma}C_{\delta}$ ($C_{\gamma}\prec_{\sigma}C_{\delta}$).
These values can be computed in a similar way by iterating over all pairs of subjects and following their path through the categories over the tests in increasing order, in this case ignoring tests for which the two subjects belong to the same category.

We introduce another set of variables $y_{p,p'}$ for all $(p,p')\in \mathcal{C}_{<}^4$ with $p\ne p'$. The variable $y_{p,p'}$ shall be 1 if the categories in $p$ are not ordered the same as the categories in $p'$. This leads to the following formulation.

\begin{align}
   \text{minimize}\quad&\sum_{(p,p')\in \mathcal{C}_<^4}(\mathrm{wc}(p,p')+\mathrm{sc}(p,p'))y_{p,p'}\label{eq:objective}\\
   \text{s.t.}\quad&x_{C_i,C_j}=1-x_{C_j,C_i}&1\le i,j\le |\mathcal{C}|,i\ne j\label{eq:antisymmetry}\\
   &0\le x_{C_i,C_j}+x_{C_j,C_k}-x_{C_i,C_k}\le 1&1\le i,j,k\le |\mathcal{C}|,\label{eq:transitivity}\\&&i\ne j,j\ne k,i\ne k\nonumber\\
   & y_{(C_{\alpha},C_{\beta}), (C_{\gamma},C_{\delta})}\ge x_{C_{\alpha},C_{\beta}}-x_{C_{\gamma},C_{\delta}}&((C_{\alpha},C_{\beta}),(C_{\gamma},C_{\delta}))\in \mathcal{C}_<^4\label{eq:xor1}\\
   & y_{(C_{\alpha},C_{\beta}), (C_{\gamma},C_{\delta})}\ge x_{C_{\gamma},C_{\delta}}-x_{C_{\alpha},C_{\beta}}&((C_{\alpha},C_{\beta}),(C_{\gamma},C_{\delta}))\in \mathcal{C}_<^4\label{eq:xor2}
\end{align}
The objective \eqref{eq:objective} equals the total of strongly and weakly forced crossings of the output ordering induced by the $x$-variables.
Constraint \eqref{eq:antisymmetry} ensures antisymmetry of the ordering, while \eqref{eq:transitivity} ensures transitivity.
Constraints \eqref{eq:xor1} and \eqref{eq:xor2} ensure that $z_{(C_{\alpha},C_{\beta}), (C_{\gamma},C_{\delta})}$ is larger than the `exclusive or' of $x_{C_{\alpha},C_{\beta}}$ and $x_{C_{\gamma},C_{\delta}}$. 
Equality is guaranteed by the objective function.
An optimal solution of this formulation immediately gives an ordering $\sigma$ obtained by the $x$-variables that allows for the least amount of forced crossings. As stated before, from this ordering a drawing with this amount of crossings can be found in polynomial time.

Note that the formulation has $\mathcal{O}(|\mathcal{C}|^4)$ variables and $\mathcal{O}(|\mathcal{C}|^4)$ constraints. By only creating variables $y_{p,p'}$ for which $wc(p,p')+sc(p,p')>0$, this upper bound can also be stated as $\mathcal{O}(|\mathcal{C}|^2+|S|^2\cdot |T|)$ for the number of variables and $\mathcal{O}(|\mathcal{C}|^3+|S|^2\cdot |T|)$ for the number of constraints.

\section{Conclusion}
We considered the recently introduced panel crossing number problem~\cite{JerebicKajzerBokalOpda} motivated by maturity model visualization from a graph drawing and crossing minimization perspective.
We studied extremal and expected crossing numbers,
and showed that real-world instances 
are far from  random or extremal cases. 
Further, we proposed two ILP models that solve the \NP-hard crossing and regress minimization problems optimally in less than a second for our real-world datasets.

\section*{Acknowledgements}
The research of Š.~Kajzer, J.~Jerebic, and D. Bokal was supported in part through ARIS grants J1-2452, P1-0297, and P5-0433. J. Orthaber was supported by the Austrian Science Fund (FWF) grant W1230. A.~Dobler and M.~Nöllenburg were supported by the Vienna Science and Technology Fund (WWTF) [10.47379/ICT19035].

The authors would like to acknowledge the Crossing Number Workshop in Strobl (2022) and in Rogla (2023), where significant steps in this research were drafted.

\bibliographystyle{abbrv}
\bibliography{literature.bib}

\begin{thebibliography}{10}

\bibitem{DBLP:journals/tcs/AngeliniLBFR15}
P.~Angelini, G.~D. Lozzo, G.~D. Battista, F.~Frati, and V.~Roselli.
\newblock The importance of being proper: (in clustered-level planarity and {T}-level planarity).
\newblock {\em Theor. Comput. Sci.}, 571:1--9, 2015.

\bibitem{H2020}
G.~Annexes.
\newblock G. technology readiness levels (trl).
\newblock \url{https://ec.europa.eu/research/participants/data/ref/h2020/wp/2014_2015/annexes/h2020-wp1415-annex-g-trl_en.pdf}, 2020.
\newblock Accessed on 4 February 2024.

\bibitem{ArgyriouBKS10}
E.~N. Argyriou, M.~A. Bekos, M.~Kaufmann, and A.~Symvonis.
\newblock On metro-line crossing minimization.
\newblock {\em J. Graph Algorithms Appl.}, 14(1):75--96, 2010.

\bibitem{Bach}
J.~Bach.
\newblock The immaturity of the cmm.
\newblock {\em American Programmer}, 7:13--13, 1994.

\bibitem{bnuw-miecw-07}
M.~Benkert, M.~Nöllenburg, T.~Uno, and A.~Wolff.
\newblock Minimizing intra-edge crossings in wiring diagrams and public transportation maps.
\newblock In M.~Kaufmann and D.~Wagner, editors, {\em Graph Drawing (GD'06)}, volume 4372 of {\em LNCS}, pages 270--281. Springer-Verlag, 2007.

\bibitem{bokal2015degree}
D.~Bokal, M.~Bra{\v{c}}i{\v{c}}, M.~Der{\v{n}}{\'a}r, and P.~Hlin{\v{e}}n{\`y}.
\newblock On degree properties of crossing-critical families of graphs.
\newblock In {\em International Symposium on Graph Drawing}, pages 75--86. Springer, 2015.

\bibitem{BOKAL2012460}
D.~Bokal, B.~Bresar, and J.~Jerebic.
\newblock A generalization of hungarian method and hall's theorem with applications in wireless sensor networks.
\newblock {\em Discret. Appl. Math.}, 160(4-5):460--470, 2012.

\bibitem{bokal2021properties}
D.~Bokal, M.~Chimani, A.~Nover, J.~Schierbaum, T.~Stolzmann, M.~H. Wagner, and T.~Wiedera.
\newblock Properties of large 2-crossing-critical graphs.
\newblock {\em arXiv preprint arXiv:2112.04854}, 2021.

\bibitem{BokalJerebicIntro}
D.~Bokal and J.~Jerebic.
\newblock {\em Modelling states of knowledge to aid navigation in learning spaces}, pages 50--75.
\newblock Cambridge Scholars Publishing, 2024.

\bibitem{DBLP:journals/jea/ChimaniHJM11}
M.~Chimani, P.~Hungerl{\"{a}}nder, M.~J{\"{u}}nger, and P.~Mutzel.
\newblock An {SDP} approach to multi-level crossing minimization.
\newblock {\em {ACM} J. Exp. Algorithmics}, 17(1), 2011.

\bibitem{crawford2021project}
J.~K. Crawford.
\newblock {\em Project management maturity model}.
\newblock CRC Press, 2021.

\bibitem{doignon2014learning}
J.-P. Doignon.
\newblock Learning spaces, and how to build them.
\newblock In {\em Formal Concept Analysis: 12th International Conference, ICFCA 2014, Cluj-Napoca, Romania, June 10-13, 2014. Proceedings 12}, pages 1--14. Springer, 2014.

\bibitem{eppstein2007media}
D.~Eppstein, J.-C. Falmagne, and S.~Ovchinnikov.
\newblock {\em Media theory: interdisciplinary applied mathematics}.
\newblock Springer Science \& Business Media, 2007.

\bibitem{even1976computing}
S.~Even and R.~E. Tarjan.
\newblock Computing an st-numbering.
\newblock {\em Theoretical Computer Science}, 2(3):339--344, 1976.

\bibitem{falmagne2010learning}
J.-C. Falmagne and J.-P. Doignon.
\newblock {\em Learning spaces: Interdisciplinary applied mathematics}.
\newblock Springer Science \& Business Media, 2010.

\bibitem{falmagne2011learning}
J.-C. Falmagne and J.-P. Doignon.
\newblock {\em Learning spaces: Interdisciplinary applied mathematics}.
\newblock Springer-Verlag, 2011.

\bibitem{fp-mcmhatc-13}
M.~Fink and S.~Pupyrev.
\newblock Metro-line crossing minimization: Hardness, approximations, and tractable cases.
\newblock In S.~Wismath and A.~Wolff, editors, {\em Graph Drawing (GD'13)}, volume 8242 of {\em LNCS}, pages 328--339. Springer-Verlag, 2013.

\bibitem{gareyCrossingNumberNPComplete1983}
M.~R. Garey and D.~S. Johnson.
\newblock Crossing {Number} is {NP}-{Complete}.
\newblock {\em SIAM. J. Alg. Discr. Meth.}, 4(3):312--316, 1983.

\bibitem{Gottschalk}
P.~Gottschalk.
\newblock Maturity levels for interoperability in digital government.
\newblock {\em Gov. Inf. Q.}, 26(1):75--81, 2009.

\bibitem{GronemannJLM16}
M.~Gronemann, M.~J{\"{u}}nger, F.~Liers, and F.~Mambelli.
\newblock Crossing minimization in storyline visualization.
\newblock In Y.~Hu and M.~N{\"{o}}llenburg, editors, {\em Graph Drawing (GD'16)}, volume 9801 of {\em LNCS}, pages 367--381. Springer, 2016.

\bibitem{gjlm-cmsv-16}
M.~Gronemann, M.~J{\"{u}}nger, F.~Liers, and F.~Mambelli.
\newblock Crossing minimization in storyline visualization.
\newblock In Y.~Hu and M.~N{\"{o}}llenburg, editors, {\em Graph Drawing and Network Visualization - 24th International Symposium, {GD} 2016, Athens, Greece, September 19-21, 2016, Revised Selected Papers}, volume 9801 of {\em Lecture Notes in Computer Science}, pages 367--381. Springer, 2016.

\bibitem{hn-hda-13}
P.~Healy and N.~S. Nikolov.
\newblock Hierarchical drawing algorithms.
\newblock In R.~Tamassia, editor, {\em Handbook of Graph Drawing and Visualization}, chapter~13, pages 409--454. CRC Press, 2014.

\bibitem{DBLP:journals/jct/Hlineny06a}
P.~Hlinen{\'{y}}.
\newblock Crossing number is hard for cubic graphs.
\newblock {\em J. Comb. Theory, Ser. {B}}, 96(4):455--471, 2006.

\bibitem{hsiao2014analysis}
C.~Hsiao.
\newblock {\em Analysis of panel data}.
\newblock NY: Cambridge university press, 2014.

\bibitem{JerebicKajzerBokalOpda}
J.~Jerebic, {\v S}.~Kajzer, M.~Vogrinec, and D.~Bokal.
\newblock Longitudinal dynamics between linearly ordered classes.
\newblock In S.~Drobne, L.~Zadnik~Stirn, M.~Kljajić~Borštnar, J.~Povh, and J.~Žerovnik, editors, {\em International Symposium on Operational Research in Slovenia (SOR'21)}, page 221–226. Slovenian Society Informatika, Section for Operational Research, 2021.

\bibitem{DBLP:conf/gd/JungerLM98}
M.~J{\"{u}}nger, S.~Leipert, and P.~Mutzel.
\newblock Level planarity testing in linear time.
\newblock In S.~Whitesides, editor, {\em Proc. Graph Drawing and Network Visualization (GD'98)}, volume 1547 of {\em Lecture Notes in Computer Science}, pages 224--237. Springer, 1998.

\bibitem{korte2012greedoids}
B.~Korte, L.~Lov{\'a}sz, and R.~Schrader.
\newblock {\em Greedoids}, volume~4.
\newblock Springer Science \& Business Media, 2012.

\bibitem{KostitsynaNP0S15}
I.~Kostitsyna, M.~N{\"{o}}llenburg, V.~Polishchuk, A.~Schulz, and D.~Strash.
\newblock On minimizing crossings in storyline visualizations.
\newblock In E.~D. Giacomo and A.~Lubiw, editors, {\em Graph Drawing (GD'15)}, volume 9411 of {\em LNCS}, pages 192--198. Springer, 2015.

\bibitem{NASA}
C.~G. Manning.
\newblock Technology readiness levels.
\newblock \url{https://www.nasa.gov/directorates/heo/scan/engineering/technology/technology\_readiness\_level}, 2023.
\newblock Accessed on 12 February 2024.

\bibitem{Mettler}
T.~Mettler and P.~Rohner.
\newblock Situational maturity models as instrumental artifacts for organizational design.
\newblock In V.~K. Vaishnavi and S.~Purao, editors, {\em Conference on Design Science Research in Information Systems and Technology (DESRIST'09)}. {ACM}, 2009.

\bibitem{nonaka1994dynamic}
I.~Nonaka.
\newblock A dynamic theory of organizational knowledge creation.
\newblock {\em Organization science}, 5(1):14--37, 1994.

\bibitem{n-iamlc-10}
M.~Nöllenburg.
\newblock An improved algorithm for the metro-line crossing minimization problem.
\newblock In D.~Eppstein and E.~R. Gansner, editors, {\em Graph Drawing (GD'09)}, volume 5849 of {\em LNCS}, pages 381--392. Springer Berlin Heidelberg, 2010.

\bibitem{CMM}
M.~C. Paulk, B.~Curtis, M.~B. Chrissis, and C.~V. Weber.
\newblock Capability maturity model, version 1.1.
\newblock {\em IEEE software}, 10(4):18--27, 1993.

\bibitem{pinontoan2003crossing}
B.~Pinontoan and R.~B. Richter.
\newblock Crossing numbers of sequences of graphs ii: planar tiles.
\newblock {\em Journal of Graph Theory}, 42(4):332--341, 2003.

\bibitem{poppelbuss2011makes}
J.~P{\"{o}}ppelbu{\ss} and M.~R{\"{o}}glinger.
\newblock What makes a useful maturity model? a framework of general design principles for maturity models and its demonstration in business process management.
\newblock In V.~K. Tuunainen, M.~Rossi, and J.~Nandhakumar, editors, {\em European Conference on Information Systems (ECIS'11)}, page~28, 2011.

\bibitem{Prananto}
A.~Prananto, J.~McKay, and P.~Marshall.
\newblock A study of the progression of e-business maturity in australian smes: Some evidence of the applicability of the stages of growth for e-business model.
\newblock In {\em Pacific Asia Conference on Information Systems (PACIS'03)}, page~5. AISeL, 2003.

\bibitem{SPICE}
M.~Sarshar, R.~Haigh, M.~Finnemore, G.~Aouad, P.~Barrett, D.~Baldry, and M.~Sexton.
\newblock Spice: a business process diagnostics tool for construction projects.
\newblock {\em Engineering, construction and Architectural management}, 7(3):241--250, 2000.

\bibitem{SINNWELL2019557}
C.~Sinnwell, C.~Siedler, and J.~C. Aurich.
\newblock Maturity model for product development information.
\newblock {\em Procedia CIRP}, 79:557--562, 2019.
\newblock 12th CIRP Conference on Intelligent Computation in Manufacturing Engineering, 18-20 July 2018, Gulf of Naples, Italy.

\bibitem{stt-mvuhss-81}
K.~Sugiyama, S.~Tagawa, and M.~Toda.
\newblock Methods for visual understanding of hierarchical system structures.
\newblock {\em {IEEE} Trans. Syst. Man Cybern.}, 11(2):109--125, 1981.

\bibitem{tm-dcosv-12}
Y.~Tanahashi and K.~Ma.
\newblock Design considerations for optimizing storyline visualizations.
\newblock {\em {IEEE} Trans. Vis. Comput. Graph.}, 18(12):2679--2688, 2012.

\bibitem{vegi2023counting}
A.~Vegi~Kalamar.
\newblock Counting traversing hamiltonian cycles in tiled graphs.
\newblock {\em Mathematics}, 11(12):2650, 2023.

\bibitem{vegi2021counting}
A.~Vegi~Kalamar, T.~{\v{Z}}erak, and D.~Bokal.
\newblock Counting hamiltonian cycles in 2-tiled graphs.
\newblock {\em Mathematics}, 9(6):693, 2021.

\bibitem{younger1963minimum}
D.~Younger.
\newblock Minimum feedback arc sets for a directed graph.
\newblock {\em IEEE Transactions on Circuit Theory}, 10(2):238--245, 1963.

\bibitem{DBLP:conf/cocoa/ZhengB07}
L.~Zheng and C.~Buchheim.
\newblock A new exact algorithm for the two-sided crossing minimization problem.
\newblock In A.~W.~M. Dress, Y.~Xu, and B.~Zhu, editors, {\em Combinatorial Optimization and Applications, First International Conference, {COCOA} 2007, Xi'an, China, August 14-16, 2007, Proceedings}, volume 4616 of {\em Lecture Notes in Computer Science}, pages 301--310. Springer, 2007.

\end{thebibliography}

\clearpage
\appendix
\end{document}